\newtheorem{theorem}{Theorem}
\newtheorem{definition}[theorem]{Definition}
\newtheorem{proposition}[theorem]{Proposition}
\newtheorem{lemma}[theorem]{Lemma}
\definecolor{orange}{rgb}{1.0,0.5,0}
\definecolor{violet}{rgb}{0.6,0,0.8}
\definecolor{darkgreen}{rgb}{0,0.5,0}
\definecolor{verydarkgreen}{rgb}{0,0.3,0}
\definecolor{darkblue}{rgb}{0,0,0.6}
\definecolor{darkred}{rgb}{0.75,0,0}
\definecolor{grey}{rgb}{0.35,0.35,0.35}
\DeclareMathOperator{\detour}{detour}
\DeclareMathOperator{\peak}{peak}
\DeclareMathOperator{\parent}{parent}
\DeclareMathOperator{\pred}{pred}
\newcommand\OPT{\ensuremath{\mathrm{OPT}}}
\def\cupp{\stackrel{.}{\cup}}
\def\bigcupp{\stackrel{.}{\bigcup}}
\renewcommand{\epsilon}{\varepsilon}
\newcommand{\sfrac}[2]{{\textstyle\frac{#1}{#2}}} 
\def\Ieuscr{\EuScript{I}}
\def\Jeuscr{\EuScript{J}}
\def\Hscr{\mathcal{H}}
\def\Iscr{\mathcal{I}}
\def\Pscr{\mathcal{P}}
\def\Qscr{\mathcal{Q}}
\def\Tscr{\mathcal{T}}
\def\Yscr{\mathcal{Y}}
\def\Zscr{\mathcal{Z}}
\def\Tg{T}
\let\@fnsymbol\@arabic
\author{Jannis Blauth\thanks{
Research Institute for Discrete Mathematics and Hausdorff Center for Mathematics, University of Bonn, Germany.
Email: {blauth@or.uni-bonn.de}.
}
 \and 
Vera Traub\thanks{
Department of Mathematics, ETH Zurich, Switzerland.
Email: {vera.traub@ifor.math.ethz.ch}.
Supported by Swiss National Science Foundation grant 200021\_184622.
}
\and 
Jens Vygen\thanks{
Research Institute for Discrete Mathematics and Hausdorff Center for Mathematics, University of Bonn, Germany.
Email: {vygen@or.uni-bonn.de}.
}
}
\date{}
\title{Improving the Approximation Ratio for \\ Capacitated Vehicle Routing}
\begin{document}

\maketitle

\begin{abstract}
We devise a new approximation algorithm for capacitated vehicle routing.
Our algorithm yields a better approximation ratio for general capacitated vehicle routing
as well as for the unit-demand case and the splittable variant.
Our results hold in arbitrary metric spaces.
This is the first improvement upon the classical tour partitioning algorithm by 
Haimovich and Rinnooy Kan \cite{haimovich} and Altinkemer and Gavish \cite{altinkemer}.
\end{abstract}

\section{Introduction}

In the \textsc{Capacitated Vehicle Routing} problem, we are given a metric space with a depot and customers,
each with a positive demand between $0$ and $1$. The goal is to design tours of minimum total length such that each tour contains the depot,
each customer is served by some tour, and the total demand of the customers in one tour does not exceed 1 (after scaling,
this is the vehicle capacity).
\textsc{Capacitated Vehicle Routing} generalizes the famous traveling salesman problem and has obvious applications in logistics.
There is a huge body of literature studying heuristics, mixed-integer programming models, and application scenarios.

The so far best known approximation algorithm is more than 30 years old and quite simple:
it first computes a traveling salesman tour (ignoring the capacity constraint)
and then partitions the tour optimally into segments of total demand at most 1, 
each of which is then served by a separate tour from the depot. 
The approximation ratio of this algorithm is $\alpha + 2$, where $\alpha$ is the approximation ratio of an algorithm computing the traveling salesman tour.
Essentially the same algorithm has been the best known for the unit-demand special case (where all customers have the same demand), and also for the variant 
where a customer's demand can be split and served by more than one tour.
For these special cases, the approximation ratio is $\alpha + 1$.

These algorithms have been proposed and analyzed in the 1980s by 
Altinkemer and Gavish \cite{altinkemer} and Haimovich and Rinnooy Kan \cite{haimovich}.
Despite many efforts and progress in special cases (cf.\ Section~\ref{sec:relatedwork}), 
they have not been improved, except that the traveling salesman tour can now be computed by the Karlin--Klein--Oveis Gharan algorithm \cite{karlin}
instead of the Christofides--Serdjukov algorithm \cite{christofides,serdjukov},
which improves $\alpha$ to slightly less than $\frac{3}{2}$ if one allows randomization.

In this paper we improve upon the classical algorithms of \cite{altinkemer} and \cite{haimovich}. 
Our result is a better black-box reduction to the traveling salesman problem. 
Therefore, our new algorithm has a better approximation ratio than the classical algorithms of \cite{altinkemer} and \cite{haimovich}, 
and this will remain true if the approximation ratio for the traveling salesman problem will be improved further. 
Here are our main results:

\begin{theorem}\label{thm:main_CVR}
For every $\alpha>1$ there is an $\epsilon>0$ such that the following holds.
If there is an $\alpha$-approximation algorithm for the traveling salesman problem, 
then there is an $(\alpha + 2\cdot (1-\epsilon))$-approximation algorithm for
\textsc{Capacitated Vehicle Routing}.
For $\alpha=\frac{3}{2}$ we have $\epsilon>\frac{1}{3000}$.
\end{theorem}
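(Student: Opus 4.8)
The plan is to derive the theorem from a stronger \emph{reduction lemma}: for every $\alpha>1$ there is an $\epsilon>0$ such that, given any Hamiltonian cycle $H$ through the depot and all customers, one can compute in polynomial time a feasible solution to \textsc{Capacitated Vehicle Routing} of cost at most $\ell(H)+2(1-\epsilon)\cdot\OPT$. Feeding in the Hamiltonian cycle returned by the assumed $\alpha$-approximation for the traveling salesman problem, so that $\ell(H)\le\alpha\cdot T^*\le\alpha\cdot\OPT$ where $T^*$ is the optimal TSP value, then gives the claimed $(\alpha+2(1-\epsilon))$-approximation. Throughout I would keep the two standard lower bounds $\OPT\ge T^*$ and $\OPT\ge R:=2\sum_v d_v\,c(v)$, where $c(v)$ is the depot--$v$ distance; the second holds because any tour $Q$ serving customers of total demand at most $1$ satisfies $\ell(Q)\ge 2\max_{v\in Q}c(v)\ge 2\sum_{v\in Q}d_v\,c(v)$, and summing over an optimal solution gives $\OPT\ge R$. (For the unit-demand and splittable variants one would prove the analogue with $2(1-\epsilon)$ replaced by $1-\epsilon$.)

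The construction starts from the classical tour partitioning: cutting $H$ at a set $C$ of edges chosen so that every sub-path of $H$ of demand more than $1$ contains an edge of $C$, and serving each resulting segment by a separate tour from the depot, yields a feasible solution of cost \emph{exactly} $\ell(H)+\sum_{e\in C}\detour(e)$, where $\detour(\{u,v\}):=c(u)+c(v)-d(u,v)\ge 0$. The classical analysis picks $C$ greedily and proves $\sum_{e\in C}\detour(e)\le 2R$ (and $\le R$ for unit demands and in the splittable case), which is exactly where the factor $2$ (resp.\ $1$) comes from. To shave a constant off this I would case-split according to (a) how close $R$ is to $\OPT$, and (b) how much of the weighted demand $\sum_v d_v c(v)$ is carried by ``large'' customers, those with $d_v>\epsilon$. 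If $R\le(1-\delta)\OPT$ the greedy solution already costs at most $\ell(H)+2R\le(\alpha+2(1-\delta))\OPT$ and we are done. Otherwise $R$ is close to $\OPT$, which forces every optimal tour to be almost full and almost radial --- leaving the depot, going essentially straight to its farthest customer, its $\peak$, and straight back --- so that the structure of the optimum is essentially pinned down. If in this regime almost all weighted demand sits on small customers, the instance behaves like the splittable case: a partition into nearly-full segments, with adjacent feasible segments merged (each merge trading two depot spokes for one reused edge of $H$), has detour cost close to $R$, hence total cost close to $\ell(H)+\OPT\le(\alpha+1)\OPT$. If instead a constant fraction of the weighted demand sits on large customers, then since each tour holds at most $1/\epsilon$ of them one can construct tours for the large customers directly --- a bounded-size packing subproblem, where one can essentially reconstruct the optimum's grouping --- at cost noticeably below $2$ times their radial cost, and then route the small customers on top while reusing $H$ rather than paying for it a second time.

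To obtain one uniform bound, I would output the cheapest of: (i) the greedy partition of $H$; (ii) the best partition-with-merging of $H$, where the cut set and the merges are chosen optimally by dynamic programming (the ``heavy sub-path'' constraints have interval structure, so this is a shortest-path computation); and (iii) a solution assembled around near-optimal tours for the large customers together with a distance-based grouping of the remaining customers. The content of the case analysis is to certify that on every instance at least one of (i)--(iii) costs at most $\ell(H)+2(1-\epsilon)\OPT$: precisely when the greedy detour cost approaches $2\OPT$, the instance is pushed into a regime where (ii) or (iii) beats it by a constant factor. Propagating the losses --- the slack $\delta$, the threshold $\epsilon$ for ``large'', the per-merge savings, and the $\lceil\cdot\rceil$ roundings from partially filled tours --- and optimizing their trade-off is what produces the explicit $\epsilon>\tfrac{1}{3000}$ at $\alpha=\tfrac32$.

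The main obstacle, I expect, is the genuinely mixed case, where neither the small-customer nor the large-customer argument applies on its own: one must reason simultaneously about the part of $\OPT$ spent on nearly-radial full tours, the part spent on medium customers, the part with slack, and the customers whose demand is so close to the capacity that they need individual tours (which the classical algorithm already serves essentially optimally), and show that these cannot all be in their worst configuration at once. A secondary difficulty is quantifying the gain of the merging / re-routing-near-$\peak$ step without violating capacities and keeping everything polynomial-time. Essentially all of the quantitative work --- and the concrete value of $\epsilon$ --- should sit in this mixed regime and in balancing the parameters so that $\epsilon$ stays bounded away from $0$.
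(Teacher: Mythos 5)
Your outer frame is sound and matches the paper up to a point: splitting on whether $2\sum_v d(v)c(s,v)$ is at most $(1-\epsilon)\OPT$ (the ``not difficult'' case, where the classical partitioning of $H$ already gives $\alpha+2(1-\epsilon)$) and observing that in the remaining regime every optimum tour must be nearly full and nearly radial is exactly how the paper starts. But from there your route diverges, and the part that actually carries the theorem is missing. In the difficult regime the paper does \emph{not} try to improve the partitioning of the given tour $H$; it discards $H$ entirely and constructs a \emph{new} traveling salesman tour of length at most $(1+f(\epsilon))\OPT$, by clustering customers around the (guessed) peaks, reducing to \textsc{Vehicle Routing with Target Groups}, proving that a cheap weak fractional solution with small total detour exists (Lemma~\ref{lem:good_sol_exists}), solving that problem within a detour-dependent factor (Theorem~\ref{thm:mainVRPwTargetGroups}, via a combinatorial or an LP-based algorithm in the style of regret-bounded vehicle routing), and paying for a matching of path endpoints inside target groups (Lemma~\ref{lem:bound_matching_cost}). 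Then it applies the \emph{unmodified} classical partitioning to this cheap tour, accepting the full $+2\OPT$ loss, and wins because $3+f(\epsilon)\le\alpha+2(1-\epsilon)$ for $\alpha>1$. Your plan instead keeps $\ell(H)\le\alpha\OPT$ and tries to shave a constant off the $+2R$ term of the partitioning step; that is a genuinely different (and in fact stronger) claim, and nothing in your sketch substantiates it.

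Concretely, the two mechanisms you rely on do not hold up as stated. In the difficult regime the segments of any partition are nearly full, so ``merging adjacent feasible segments'' is essentially never feasible, and the factor $2$ for general demands comes from the breakpoint customers, not from unmerged segments; the genuinely bad configurations are clusters of total demand about $1$ far from the depot that the given tour $H$ visits non-contiguously, and there no dynamic program over cut-and-merge choices of $H$ obviously saves a constant fraction of $2R$. For the large-customer branch, ``one can essentially reconstruct the optimum's grouping'' is unsupported: there can be $\Theta(n)$ customers of demand above any fixed threshold, grouping them is a bin-packing-type problem, and no argument is given that their service cost can be pushed noticeably below twice their radial cost while the small customers still ride on $H$ for free. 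You yourself flag the ``mixed case'' as the main obstacle and place essentially all quantitative work there; in the paper that work is precisely the target-group construction, the existence of a cheap low-detour fractional solution, the algorithmic guarantee of Theorem~\ref{thm:mainVRPwTargetGroups}, and the matching bound, none of which has a counterpart in your proposal. As it stands the proposal establishes the easy case and the structural observations, but not the constant-factor improvement in the difficult regime, so it does not yet prove the theorem or yield the $\epsilon>\frac{1}{3000}$ bound.
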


\begin{theorem}\label{thm:main_UD_and_S}
For every $\alpha>1$ there is an $\epsilon>0$ such that the following holds.
If there is an $\alpha$-approximation algorithm for the traveling salesman problem, 
then there is an $(\alpha + 1-\epsilon)$-approximation algorithm for
\textsc{Unit-Demand Capacitated Vehicle Routing} and \textsc{Splittable Capacitated Vehicle Routing}.
For $\alpha=\frac{3}{2}$ we have $\epsilon>\frac{1}{3000}$.
\end{theorem}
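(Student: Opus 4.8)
The plan is to run the classical tour-partitioning algorithm of \cite{altinkemer,haimovich} alongside one additional, structure-exploiting algorithm, and to output the cheaper solution; the whole point is to show that the second algorithm wins exactly when the first is near-tight. Scale the capacity to $1$ and write $\rho_v\in(0,1]$ for the demand of customer $v$ and $r$ for the depot. First I would set up two lower bounds, each at most $\OPT$: the quantity $\mathrm{LB}_1$, the length of a shortest closed walk through $r$ and all customers, which is a lower bound since concatenating the tours of an optimum solution at $r$ yields such a walk of length $\OPT$; and $\mathrm{LB}_2:=2\sum_v\rho_v\,d(r,v)$, which is a lower bound since any tour serving customers of total demand at most $1$ has length at least twice the largest depot-distance it visits, hence at least $2\sum_{v}(\text{demand it serves at }v)\,d(r,v)$, and summing over the tours of an optimum solution gives $\mathrm{LB}_2$ (this argument, and everything below, applies verbatim to the splittable variant, which I treat in parallel). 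Run on an $\alpha$-approximate TSP tour and analysed by the standard cyclic-shift argument, tour partitioning returns a solution of cost at most $\alpha\cdot\mathrm{LB}_1+\mathrm{LB}_2$.

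Fix a small $\eta=\eta(\alpha)>0$. If $\mathrm{LB}_1\le(1-\eta)\OPT$ then tour partitioning already costs at most $\alpha(1-\eta)\OPT+\OPT=(\alpha+1-\alpha\eta)\OPT$; if $\mathrm{LB}_2\le(1-\eta)\OPT$ it costs at most $\alpha\OPT+(1-\eta)\OPT=(\alpha+1-\eta)\OPT$; in either case we are done with $\epsilon:=\eta$. So the only remaining case, which I will call the \emph{rigid} case, is $\mathrm{LB}_2\ge(1-\eta)\OPT$.

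In the rigid case I would prove a structure theorem: writing each optimum tour as ``straight out to its farthest customer and back'' plus an internal remainder, the sum over all tours of [internal remainder $+$ slack from underfull demand $+$ slack from customers below the farthest distance] equals $\OPT-\mathrm{LB}_2\le\eta\OPT$; hence, apart from tours of total length $O(\eta)\cdot\OPT$, every optimum tour is $O(\eta)$-close to a star that goes out to a \emph{tight cluster} — customers of total demand close to $1$ whose pairwise distances are small relative to their distance to $r$ — and back. The additional algorithm then exploits this: build a minimum spanning tree on the customers and $r$, delete $r$ and every edge that is long relative to the depot-distances of its endpoints, and note that the resulting components each contain a tight cluster entirely. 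For each component, compare (i) splitting it into sub-trees of demand at most $1$ and serving each by one near-star tour against (ii) running classical tour partitioning on the sub-instance induced by the component, and serve the component the cheaper way. A tight-cluster component is handled by (i) at cost $(1+O(\eta))\cdot 2\sum_{v}\rho_v d(r,v)$ over its customers plus $O(1)$ times its short-edge length, while the leftover customers from bad tours are handled by (ii) at cost $(\alpha+1)$ times their own contribution to $\OPT$, which is $O(\eta)\cdot\OPT$ in total. Summing, the cost is $\mathrm{LB}_2+O(\eta)\cdot\OPT\le(1+O(\eta))\OPT<(\alpha+1-\epsilon)\OPT$ once $\eta$ is small enough; optimising $\eta$ against $\epsilon$ finishes the proof, and tracking constants at $\alpha=\tfrac32$ gives $\epsilon>\tfrac1{3000}$.

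The hard part is the rigid case, on several fronts. The structure theorem must be made quantitatively honest — the clusters are fuzzy, their demands only close to $1$, and each of the $O(\eta)\cdot\OPT$ error terms needs the right constant — and, more subtly, one must prevent a blow-up by a factor depending on the vehicle capacity: a spread-out ``bad'' cluster, especially one near the depot, can fragment into many pieces under the long-edge deletion, and the per-component fallback to classical partitioning is there precisely to keep the cost of such fragments proportional to their own contribution to $\OPT$ rather than to the number of customers in them. One also has to verify that deleting long edges never splits a genuine tight cluster, and that splitting an oversize component into near-full sub-trees is always possible, and then carry the explicit arithmetic relating $\eta$, the long-edge threshold, and $\epsilon$ through to the stated value $\tfrac1{3000}$ at $\alpha=\tfrac32$; I expect this last step to be lengthy but conceptually routine.
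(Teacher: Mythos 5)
Your dichotomy (run the classical partitioning algorithm, and treat separately the ``rigid'' case $2\sum_v d(v)c(s,v)>(1-\eta)\OPT$) is exactly the paper's starting point, and your slack identity decomposing $\OPT-\mathrm{LB}_2$ into detours, underfull demand, and demand below the peak is correct. The gap is in the rigid case: your algorithm has no valid way to serve customers that lie on \emph{good} tours but outside their tight clusters, and these are not covered by your ``leftover customers from bad tours'' accounting. The inequality $\OPT-\mathrm{LB}_2\le\eta\OPT$ only bounds the \emph{demand-weighted} depth deficit $\sum_Q\sum_v 2d(v)\bigl(c(s,\peak(Q))-c(s,v)\bigr)$ and the aggregate detour; it does not exclude that essentially every optimum tour carries one customer of tiny demand $\mu$ sitting on the way out at depot distance $(1-\delta)D$, at distance about $\delta D$ from its cluster and far from everything else. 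Choosing $\delta$ just above whatever long-edge threshold you fix, each such customer becomes its own component after your MST pruning, and both of your per-component options --- a near-star tour or classical partitioning restricted to the component --- cost about $2(1-\delta)D$ for it, while every quantity you could charge this to (its $\mathrm{LB}_2$ share, or its slack share $2\mu\delta D$) is negligible because $\mu$ is tiny. With one such customer per tour the instance remains rigid (total slack $2k\mu\delta D\le \eta\OPT$ once $\mu\delta\le\eta$), yet your algorithm pays an extra $\Theta(\OPT)$, so the claimed bound $\mathrm{LB}_2+O(\eta)\OPT$ fails; enlarging the threshold only moves $\delta$, it does not remove the problem. So this is not ``conceptually routine'' constant-tracking: a separate round trip for a deep low-demand customer simply cannot be charged to $\OPT-\mathrm{LB}_2$.

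This obstacle is precisely what the paper's machinery is built to overcome, and it is why its route differs from yours after the dichotomy. Instead of serving guessed clusters by stars and mopping up the rest, the paper formulates \textsc{Vehicle Routing with Target Groups}, in which the paths from the depot to the (grouped) targets must cover \emph{every} customer; a deep low-demand customer is then picked up en route at the cost of its detour only. The existence of a cheap, low-detour weak fractional solution is shown by splitting each optimum tour at each customer $v$ with weight $d(v)$ and closing non-clustered customers back to $s$ at cost $2d(v)c(s,v)$, which is small exactly because it is demand-weighted (Lemma~\ref{lem:good_sol_exists}); this is the step your star-based scheme has no analogue of. Solving that problem approximately (Theorem~\ref{thm:mainVRPwTargetGroups}), matching path endpoints within target groups (which also resolves your worry about chained clusters spanning a range of depot distances, Lemma~\ref{lem:bound_matching_cost}), and then re-running tour partitioning on the resulting traveling salesman tour of cost close to $\OPT$ completes the argument. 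Your clustering intuition parallels Section~\ref{sect:guessing_clusters}, but without the covering-path idea the rigid case, and hence the theorem, does not go through as proposed.
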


\subsection{Outline}

To obtain our results, we analyze instances for which the approximation guarantees of \cite{altinkemer} and \cite{haimovich} 
are almost tight and exploit their structure to design better solutions. 
We will call such instances \emph{difficult}.

We view every tour in a solution to a \textsc{Capacitated Vehicle Routing} instance
as the union of two paths from the depot to the \emph{peak} of the tour: the point farthest away from the depot. 
Our first observation is that the performance of the classical algorithms can be close to the worst case guarantee
only if, for most tours, these two paths have small detour, i.e., they are approximately shortest paths from the depot to the peak.

We will compute an even number of paths that all start at the depot such that all customers are visited by some path.
The total length of these paths will not be much more than the length of an optimum solution to the \textsc{Capacitated Vehicle Routing} instance.
Then we combine pairs of these paths to tours by adding an edge between their endpoints.

If there exists a set of paths with small total detour (like the one induced by an optimum solution to a difficult instance), 
then we can find a set of paths that is not much longer in polynomial time.
In fact, this problem is closely related to regret-bounded vehicle routing, 
a problem that has been studied by Friggstad and Swamy \cite{friggstad,friggstad2017compact}.
Here, one asks for a minimum number of paths serving all customers such that the detour of any path is bounded.

However, combining pairs of paths to tours can be too expensive. 
We need to ensure that a relatively cheap matching of the endpoints of the paths exists.
Ideally, two paths end at the peak of each tour in an optimum solution, then the matching would not cost anything.
But of course we do not know these peaks.
Therefore we try to ``guess'' them, by exploiting another property of difficult instances:
in almost all tours of an optimum solution, the total demand of customers near the peak is almost 1 (the vehicle capacity).
Consequently, we can assume that ``most'' customers are clustered, and we can force two paths to end in each cluster. 

However, another difficulty arises because the clusters are not necessarily clearly separated from one another. 
Still we can identify groups of nearby clusters, and estimate the number of tours whose peak is in that group.
Instead of prescribing the endpoints of the paths, we only specify the total number of paths that must end in each group. 
This number will always be even, in order to ensure that we can find a matching within each group. 
Although customers in the same group can be far away from each other if there is a chain of pairwise overlapping clusters, 
we will be able to prove that a relatively cheap matching exists. 

The key subproblem therefore asks to find an appropriate number of paths that begin at the depot 
and end in these target groups, such that all customers 
(including those that do not belong to any group) are served by some tour.
We call this problem \textsc{Vehicle Routing with Target Groups}.
The instance of \textsc{Vehicle Routing with Target Groups} that we compute
has the property that it has a solution that is cheap and has small total detour. 
This will enable us to find a cheap set of paths in polynomial time:
either by a simple and fast combinatorial algorithm, or alternatively by leveraging an LP-based approach suggested
for regret-bounded vehicle routing by Friggstad and Swamy \cite{friggstad2017compact}.\footnote{As a by-product, 
we will also improve their approximation ratio for regret-bounded vehicle routing from 15 to 10.}

Once we have these paths, we compute the cheapest matching of their endpoints and combine them to tours.
These tours will generally still not meet the capacity constraint, but we can simply concatenate all these tours
(and shortcut) to obtain a traveling salesman tour. Since this tour will be not much more expensive than an optimum
solution to our \textsc{Capacitated Vehicle Routing} instance, applying the classical tour partitioning algorithm finishes the job.

\subsection{Formal problem description}

Given a depot $s$ and a set $V$ of customers, we want to design tours serving all customers.
For now, a \emph{tour} is a cycle that contains $s$ and a subset of customers 
(later we will also consider tours that begin in $s$ but do not end in $s$).
To measure the cost of a tour, we have a semi-metric $c:(\{s\}\cup V) \times (\{s\}\cup V) \rightarrow \mathbb{R}_{\geq 0}$,
i.e., $c$ is symmetric and satisfies the triangle inequality. 
We will interpret a tour $Q$ as an undirected graph with vertex set $V(Q)$ and edge set $E(Q)$.
We write $c(Q)=\sum_{\{v,w\}\in E(Q)}c(v,w)$ for the cost (or total distance) of $Q$.  
Moreover, each customer has a demand, and the total demand of the customers served by a tour must
not exceed the vehicle capacity, which we can assume to be 1 (by scaling).
Then the problem can be described as follows.

\begin{definition}[\textsc{Capacitated Vehicle Routing}]
An instance of \textsc{Capacitated Vehicle Routing} consists of 
\begin{itemize}\itemsep0pt
\item a finite set $V$ (of \emph{customers})
\item a \emph{depot} $s$, not belonging to $V$,
\item a semi-metric $c$ on $\{s\}\cup V$, defining \emph{distances} (or \emph{cost}),
\item a \emph{demand}  $d(v) \in [0,1]$ for each customer $v\in V$. 
\end{itemize}
A feasible solution is a set $\Qscr$ of \emph{tours} such that
\begin{itemize}\itemsep0pt
\item every tour $Q\in \Qscr$ is a cycle that contains $s$,
\item every customer belongs to exactly one tour, and
\item $\sum_{v\in V(Q)\setminus\{s\}} d(v) \le 1$ for all $Q\in\Qscr$.
\end{itemize}
The task is to find a feasible solution such that the total cost (or distance) $c(\Qscr):= \sum_{Q\in \Qscr} c(Q)$ is minimum. 
\end{definition}

Throughout, we will denote by $\OPT(\Ieuscr)$ or simply $\OPT$ 
the minimum cost of a feasible solution to a given instance $\Ieuscr$. 
We note the following well-known lower bound:

\begin{proposition}\label{prop:trivial_lower_bound}
$\OPT\ge\sum_{v\in V}2d(v)c(s,v)$.
\end{proposition}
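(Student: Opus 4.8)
The plan is to charge, for each customer $v$, a cost of at least $2d(v)c(s,v)$ to the tours serving it, and then sum over all customers. Fix a feasible solution $\Qscr$ and a tour $Q \in \Qscr$. Let $S = V(Q) \setminus \{s\}$ be the customers served by $Q$, so $\sum_{v \in S} d(v) \le 1$. I claim that $c(Q) \ge \sum_{v \in S} 2d(v)c(s,v)$. Summing this inequality over all $Q \in \Qscr$ and using that every customer belongs to exactly one tour then yields $c(\Qscr) \ge \sum_{v \in V} 2d(v)c(s,v)$, and taking the minimum over feasible solutions gives the claim.

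It remains to prove the per-tour inequality $c(Q) \ge \sum_{v \in S} 2d(v)c(s,v)$. The natural approach is: for each $v \in S$, the cycle $Q$ contains two internally disjoint $s$–$v$ paths (split $Q$ at $s$ and at $v$), each of length at least $c(s,v)$ by the triangle inequality, so $c(Q) \ge 2c(s,v)$ for every individual $v$. To get the weighted sum, I would pick the customer $u \in S$ maximizing $c(s,u)$ and observe $c(Q) \ge 2c(s,u) \ge 2c(s,u)\sum_{v\in S} d(v) \ge \sum_{v\in S} 2 d(v) c(s,v)$, where the middle step uses $\sum_{v\in S} d(v) \le 1$ and the last step uses $c(s,u) \ge c(s,v)$ for all $v \in S$. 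This is clean and avoids any edge-by-edge accounting.

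The main thing to be careful about — the only real obstacle — is handling degenerate tours, e.g.\ a tour $Q$ on $s$ together with a single customer $v$: as an undirected graph this is a multigraph with two parallel edges $\{s,v\}$, or one might model it as a single edge traversed back and forth; either way $c(Q) = 2c(s,v)$, consistent with the bound. If $S = \emptyset$ the inequality is trivially $c(Q) \ge 0$. For $|S| \ge 2$ the "two internally disjoint $s$–$v$ paths" argument is immediate from the cycle structure. One should also note the statement is vacuous (both sides zero) if all demands or all distances vanish. No deeper idea is needed; the content is entirely Proposition-level.
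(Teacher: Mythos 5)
Your proposal is correct and follows essentially the same route as the paper: split the tour containing $v$ into two $s$--$v$ paths to get $2c(s,v)\le c(Q)$, then combine with the capacity bound $\sum_{v\in V(Q)\setminus\{s\}}d(v)\le 1$ and sum over tours. The only cosmetic difference is that you pass through the farthest customer of each tour, while the paper simply multiplies $2c(s,v)\le c(Q)$ by $d(v)$ and sums directly; the content is identical.
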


\begin{proof}
Let $\Qscr$ be a feasible solution.
For each $v\in V$ we obtain two $s$-$v$-paths by splitting the tour $Q\in\Qscr$ that contains $v$.
By the triangle inequality, each of these paths has length at least $c(s,v)$, and hence 
$2c(s,v)\le c(Q)$. Summation yields 
$\sum_{v\in V}2d(v)c(s,v) \le \sum_{Q\in\Qscr} \sum_{v\in V(Q)\setminus\{s\}} d(v) c(Q) \le c(\Qscr)$.
\end{proof}

If $d(v)=\frac{1}{k}$ for all $v\in V$, where $k$ is some positive integer, 
we speak of \textsc{Unit-Demand Capacitated Vehicle Routing} (then every tour can serve up to $k$ customers). 
This is closely related to \textsc{Splittable Capacitated Vehicle Routing}:
here the demand of a customer is arbitrary but can be split into several parts, each of which is served by a different tour.

All variants include the traveling salesman problem as special case and are thus APX-hard.
\textsc{Capacitated Vehicle Routing} also includes bin packing; hence there is no approximation algorithm with ratio less than $\frac{3}{2}$ unless P=NP.

\subsection{Related work\label{sec:relatedwork}}

Despite of a huge amount of research on vehicle routing, the best known approximation ratio 
for \textsc{Capacitated Vehicle Routing} (as well as for the unit-demand and splittable variants) 
has not been improved in more than 30 years.
However, there has been progress on several special cases.
First of all, the tour partitioning algorithms by 
Haimovich and Rinnooy Kan \cite{haimovich} and by
Altinkemer and Gavish \cite{altinkemer} 
(cf.\ Section~\ref{sect:classical_algo}) already yield a slightly better approximation guarantee  
if the least common denominator $k$ of all demands is bounded (this is often called the \emph{bounded capacity} case). 
Compared to $\alpha +2$ and $\alpha +1$, the approximation ratios reduce by $\frac{2\alpha}{k}$ (for general demands) 
and $\frac{\alpha}{k}$ (for unit demands). 
Bompadre, Dror and Orlin \cite{bompadre} gain another $\Omega(\frac{1}{k^3})$.

There are also several results for geometric instances for the unit demand case (in which $d(v) =\frac{1}{k}$ for all $v\in V$).
In the Euclidean plane a PTAS is known for constant $k$ (Haimovich and Rinnooy Kan \cite{haimovich}), 
for $k = O(\log(n)/\log\log(n))$, where $n=|V|$ (Asano, Katoh, Tamaki, and Tokuyama \cite{Asano2}),
and for $k \le 2^{\log^{f(\epsilon)}(n)}$ (Adamszek, Czumaj, and Lingas \cite{Adamszek}).
The latter uses a result by Das and Mathieu \cite{das}, who provided a quasi-polynomial time approximation scheme for the Euclidean plane and unbounded $k$.
For higher dimensional Euclidean metrics, Khachay and Dubinin \cite{khachay} found a PTAS
for fixed dimension $l$ and $k = O(\log^{\frac{1}{l}}(n))$. 

Better approximation ratios have also been found for graph metrics arising from graphs with a special structure. 
For the unit-demand case with constant $k$,
Becker, Klein and Schild \cite{becker3} devised a PTAS in planar graphs, 
and Becker, Klein and Saulpic \cite{becker2} found a PTAS
in graphs with bounded highway dimension.
Becker \cite{becker} designed a $\frac{4}{3}$-approximation algorithm for \textsc{Splittable Capacitated Vehicle Routing} in tree metrics, improving on results by Hamaguchi and Katoh \cite{Hamaguchi} and Asano, Kawashima and Katoh \cite{Asano1}.

For general \textsc{Capacitated Vehicle Routing}, no improvement on the classical approximation algorithm \cite{altinkemer} has been found except for tree metrics. 
For tree metrics, Labbé, Laporte and Mercure \cite{Labbe} gave a $2$-approximation algorithm.
If the tree is a path (i.e., on the line), 
Wu and Lu \cite{wu} described a $\frac{5}{3}$-approximation algorithm for \textsc{Capacitated Vehicle Routing}.
Note that the unit-demand case is polynomially solvable on the line.

\medskip

One part of our proof is leveraging an LP relaxation that was proposed by Friggstad and Swamy \cite{friggstad2017compact} for regret-bounded vehicle routing. 
In (additive) regret-bounded vehicle routing, the goal is to find a minimum number of paths starting at the depot and covering all customers such that none of the tours has a detour more than a given bound. 
Here, the detour (or regret) of a path from $s$ to $t$ is its length minus $c(s,t)$. 
Friggstad and Swamy \cite{friggstad} provided the first constant-factor approximation algorithm for this problem. 
They improved the approximation ratio from 31 to 15 in \cite{friggstad2017compact}. 

Regret-bounded vehicle routing is a special case of the school bus problem. 
In the school bus problem, there is the additional constraint that no tour can serve
more customers than a given bound (the vehicle capacity).
Bock, Grant, K\"onemann and Sanit\`a \cite{bock} 
observed that any $\mu$-approximation algorithm for regret-bounded vehicle routing
implies a $(\mu+1)$-approximation algorithm for the school bus problem.
They provided a 3-approximation algorithm for regret-bounded vehicle routing on trees and thus obtain a 4-approximation algorithm for the school bus problem on trees. 
The later results by Friggstad and Swamy mentioned above imply a 16-approximation algorithm
for the general school bus problem.\footnote{As a by-product of our work, we will improve on this in Section~\ref{sect:improve_FS}.}

\subsection{Review of the classical algorithms}\label{sect:classical_algo}

In this section we review the classical algorithms by 
Altinkemer and Gavish \cite{altinkemer} and Haimovich and Rinnooy Kan \cite{haimovich}, and we do this for two reasons. 
First, we will exploit properties of instances in which their analysis is tight.
Second, the final step of our new algorithm will be identical to these classical algorithms.

The classical algorithms \cite{altinkemer,haimovich} consist of two steps.
The first step simply runs an approximation algorithm for the traveling salesman problem,
and we denote by $\alpha$ the approximation ratio of this algorithm.
The classical Christofides--Serdjukov algorithm \cite{christofides,serdjukov} obtains $\alpha=\frac{3}{2}$,
the new randomized algorithm by Karlin, Klein, and Oveis Gharan \cite{karlin} improves on this by a tiny constant.

A \emph{traveling salesman tour} (for a given instance $(V,s,c,d)$ of \textsc{Capacitated Vehicle Routing}) 
is a cycle with vertex set $\{s\}\cup V$. Note that the demands are ignored.
The minimum length of a traveling salesman tour is a lower bound on $\OPT$ because
we can concatenate any set of tours (they all contain $s$) and shortcut in order to visit $s$ only once;
here we use the triangle inequality.
So the first step yields a traveling salesman tour of length $\alpha\cdot\OPT$.

In the second step, this traveling salesman tour is partitioned into several tours in order to meet the capacity constraint.
This is achieved by the following theorem:

\begin{theorem}[\cite{altinkemer,haimovich}]\label{thm:tour_splitting}
Given an instance $(V,s,c,d)$ of \textsc{Capacitated Vehicle Routing} and a traveling salesman tour $Q$,
one can compute a feasible solution of cost at most $c(Q)+\sum_{v\in V}4d(v)c(s,v)$ in $O(n^2)$ time, where $n=|V|$.
For \textsc{Unit-Demand Capacitated Vehicle Routing} and \textsc{Splittable Capacitated Vehicle Routing} the bound
improves to $c(Q)+\sum_{v\in V}2d(v)c(s,v)$.
\end{theorem}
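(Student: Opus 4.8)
The plan is to prove both bounds by a single construction: cut the traveling salesman tour $Q$ into consecutive segments of total demand at most $1$, serve each segment by a tour that joins its two ends directly to the depot $s$, choose the cut positions from a one-parameter random family, and derandomise over the finitely many relevant positions. For the common part, cut $Q$ at $s$ to list the customers $v_1,\dots,v_n$ in the order $Q$ visits them, so $c(Q)=c(s,v_1)+\sum_{i=1}^{n-1}c(v_i,v_{i+1})+c(s,v_n)$, and set $D_i=\sum_{j\le i}d(v_j)$, $D_0=0$, $D_n=D$. Any splitting of $v_1,\dots,v_n$ into consecutive blocks $B_1,\dots,B_m$, each of total demand at most $1$, yields a feasible solution: if $B_\ell$ has first customer $a_\ell$ and last customer $b_\ell$, let $T_\ell$ be the tour from $s$ to $a_\ell$, then along $Q$ through $B_\ell$, then back from $b_\ell$ to $s$. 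Distinct blocks use edge-disjoint pieces of $Q$ and the $Q$-edge between $b_\ell$ and $a_{\ell+1}$ is used by no block, so
\[
\sum_{\ell=1}^{m}c(T_\ell)\ \le\ \sum_{i=1}^{n-1}c(v_i,v_{i+1})\ -\ \sum_{\ell=1}^{m-1}c(b_\ell,a_{\ell+1})\ +\ \sum_{\ell=1}^{m}\bigl(c(s,a_\ell)+c(s,b_\ell)\bigr).
\]
Thus, apart from the two fixed end customers $a_1=v_1$ and $b_m=v_n$, the whole task reduces to making the total ``spoke'' cost $\sum_\ell\bigl(c(s,a_\ell)+c(s,b_\ell)\bigr)$ small.

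For the splittable and unit-demand variants I would take a shift $\theta$ uniform in $[0,1)$ and cut whenever the running demand crosses a value in $\theta+\mathbb Z$; for unit demand $d\equiv\tfrac1k$ this is just cutting after every $k$-th customer from a random offset (no customer is split), while in the splittable case the single boundary customer is split and plays the role of both $b_\ell$ and $a_{\ell+1}$. Every block then has demand at most $1$. Since the grid $\theta+\mathbb Z$ has spacing $1$, a fixed customer $v$ lies on it with probability exactly $d(v)$, and a short computation gives that in expectation $v$ contributes at most $2d(v)c(s,v)$ to the spoke sum: in the splittable case $v$ pays $2c(s,v)$ exactly when it is the split boundary customer, and in the unit-demand case it pays $c(s,v)$ when it is a cut customer and $c(s,v)$ when its predecessor is, each with probability $d(v)$. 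Discarding the nonpositive term $-\sum_\ell c(b_\ell,a_{\ell+1})$ in the displayed inequality, the expected total cost is at most $\sum_{i=1}^{n-1}c(v_i,v_{i+1})+c(s,v_1)+c(s,v_n)+\sum_v 2d(v)c(s,v)=c(Q)+\sum_v 2d(v)c(s,v)$. The cost is piecewise constant in $\theta$ with $O(n)$ pieces (breakpoints where some $D_i$ meets the grid), so trying one $\theta$ per piece derandomises everything in $O(n^2)$ time.

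For general demands the obstacle — and the only place that needs real care — is feasibility: a block of consecutive customers of demand just below $1$ cannot absorb a split customer without violating the capacity. I would therefore first give each customer $v$ with $d(v)>\tfrac12$ its own tour $s,v,s$, which costs $2c(s,v)<4d(v)c(s,v)$, and shortcut these customers out of $Q$ (only decreasing $c(Q)$). On the remaining customers, all of demand at most $\tfrac12$, I would run the same scheme with the finer random grid $\theta+\tfrac12\mathbb Z$, assigning any customer that contains a cut value entirely to the following block; then every block has demand below $\tfrac12+\tfrac12=1$ and nothing is split. A fixed customer now lies on the grid with probability at most $2d(v)$, and since it is not split, $b_\ell$ is the predecessor along $Q$ of the cut customer $a_{\ell+1}$, so its spoke is bounded by $c(s,b_\ell)\le c(s,a_{\ell+1})+c(b_\ell,a_{\ell+1})$; the surplus $c(b_\ell,a_{\ell+1})$ is exactly cancelled by the term $-\sum_\ell c(b_\ell,a_{\ell+1})$ in the displayed inequality, which in this scheme is precisely the set of $Q$-edges joining cut customers to their predecessors. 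What survives is $c(Q)$ plus twice the expected spoke cost of the cut customers, which is at most $\sum_v 4d(v)c(s,v)$; adding the heavy-customer tours keeps the bound $c(Q)+\sum_v 4d(v)c(s,v)$, and derandomising over the $O(n)$ relevant shifts again runs in $O(n^2)$ time.
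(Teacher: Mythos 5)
Your proof is correct, and your unit-demand/splittable argument coincides with the paper's (random shift $\theta$, cuts at the unit grid, merging or splitting the boundary customer, derandomization over the $O(n)$ relevant shifts). For general demands, however, you reach the factor $4$ by a genuinely different mechanism. The paper keeps the unit grid and gives every random breakpoint customer $v_{j_l}$ its own tour $s,v_{j_l},s$; the charge of $4c(s,v_{j_l})$ per breakpoint is $2c(s,v_{j_l})$ for that singleton tour plus $2c(s,v_{j_l})$ for reattaching the two neighbouring segments to $s$ via the triangle inequality, and each customer is a breakpoint with probability exactly $d(v)$. You instead split off only the heavy customers ($d(v)>\tfrac12$) as singletons, paying $2c(s,v)\le 4d(v)c(s,v)$ deterministically, and partition the light customers with a grid of spacing $\tfrac12$, so that assigning each cut customer to the following block keeps every block's demand below $\tfrac12+\tfrac12=1$ and no customer is ever split; each light customer is a cut customer with probability at most $2d(v)$ and pays two spokes of $c(s,v)$, the saved $Q$-edge $c(b_\ell,a_{\ell+1})$ exactly absorbing the triangle-inequality transfer of the previous block's last spoke. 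Both routes give the same bound and the same $O(n^2)$ derandomization; the paper's version is shorter and treats all demands uniformly, while yours avoids singleton tours for light customers and in effect charges only $2c(s,v)$ per heavy customer, at the price of a separate feasibility check for the half-spaced grid.
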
 

\begin{proof}
Number the customers $V=\{v_1,\ldots,v_n\}$ in the order they are visited by $Q$.
Choose $\theta \in (0,1)$ uniformly at random. 
Then there are unique indices $1\le j_1 < ... <  j_{k_{\max}} \le n$, where $k_{\max} = 1+\lfloor d(V)-\theta\rfloor$, such that
\[
\sum\limits_{i=1}^{j_{l}-1} d(v_i) < \theta + l-1 \leq \sum\limits_{i=1}^{j_l} d(v_i) \text{  } \text{ for all } l \in \{1,\ldots,k_{\max} \}.
\]

Partition the traveling salesman tour $Q$ by constructing for each $l \in \{j_1,\dots, j_{k_{\max}}\}$ a single tour visiting only the customer $v_l$ 
and tours visiting the resulting segments of $Q$ in between these customers. 
Note that by construction all these tours meet the capacity constraint.

By using the triangle inequality, the total cost of the resulting tours can be bounded by
\[
c(Q) + 4 \cdot \sum\limits_{l=1}^{k_{\max}} c(s,v_{j_l}).
\]

Note that the probability that a customer $v$ is contained in $\{v_{j_1},..,v_{j_{k_{\max}}}\}$ equals $d(v)$. Hence, the expected total cost of the resulting tours can be bounded by 
\[
c(Q) + \sum\limits_{v \in V} 4 d(v) c(s,v).
\]

Thus, there exists at least one such partition of $Q$ with the required total cost. 
Since there are at most $n$ values of $\theta$ leading to different partitions, 
all these values can be tried and the best among the resulting partitions can be chosen. 
Clearly, this can be done in $O(n^2)$ time.

For unit demands and for splittable demands one does not need a separate tour serving $v_l$ only (for $l=1,\ldots,k_{\max}$),
but can include $v_l$ into the preceding segment (and in the splittable case partly into the succeding segment).
Then the total cost of the resulting tours can be bounded by
\[
c(Q) + \sum\limits_{v \in V} 2 d(v) c(s,v). 
\]
\end{proof}
Note that one can even compute an optimum partition of the given traveling salesman tour in $O(n^2)$ time by a simple dynamic program. 
However, this does not improve the above upper bound on the total cost of the resulting set of tours.

Together with Proposition~\ref{prop:trivial_lower_bound}, Theorem~\ref{thm:tour_splitting} immediately implies the approximation ratio $\alpha+2$
for \textsc{Capacitated Vehicle Routing} \cite{altinkemer}, and the approximation ratio $\alpha+1$
for \textsc{Unit-Demand Capacitated Vehicle Routing} and \textsc{Splittable Capacitated Vehicle Routing} \cite{haimovich}.
In this paper, we present the first improvement on these more-than-thirty-year-old results.

\section{Overview of our new algorithm\label{sec:overview}}

In this section we provide an outline of our approach and introduce some important definitions.

Our algorithm will compute two solutions to the \textsc{Capacitated Vehicle Routing} instance and return the better of the two. 
The first solution results from the classical algorithm that we discussed in Section~\ref{sect:classical_algo}.
The second algorithm is new and yields a better approximation ratio on instances 
where Proposition~\ref{prop:trivial_lower_bound} and thus the analysis of the classical algorithms is almost tight.
We will call such instances \emph{difficult}.
In order to give a formal definition of difficult instances we fix a constant
$0<\epsilon<1$.\footnote{We will set this and other constants in Section~\ref{sect:final_result}.}

\begin{definition}
An instance $(V,s,c,d)$ of \textsc{Capacitated Vehicle Routing} is called \emph{difficult} if
\[
2 \cdot \sum\limits_{v \in V} d(v) \cdot c(s,v) \ > \ (1 - \epsilon) \cdot \OPT.
\]
\end{definition}

If an instance of \textsc{Capacitated Vehicle Routing} is not difficult, 
Theorem~\ref{thm:tour_splitting} says that
the classical algorithm described in Section~\ref{sect:classical_algo} computes a solution of cost at most 
$(\alpha+2 \cdot (1 -\epsilon))\cdot \OPT$, and at most $(\alpha+1 -\epsilon)\cdot \OPT$ for the \textsc{Unit-Demand} and \textsc{Splittable} variants. 
For difficult instances we will compute a traveling salesman tour that is shorter than $\alpha\cdot\OPT$;
then the better approximation ratios follow from calling Theorem~\ref{thm:tour_splitting}.

Informally speaking, in an optimum solution to a difficult instance almost every tour must have the following two properties:
\begin{enumerate}[(1)]
    \item The tour $Q$ is not much longer than $2\cdot c(s,\peak(Q))$, where $\peak(Q)$ is the vertex in $Q$ that is farthest away from the depot $s$.
    \label{item:informal_no_detour_to_peak}
    \item The total demand served by the tour is almost 1, and almost all of it is close to the peak.
    \label{item:clustered_informal}
\end{enumerate}
More precisely, the total length of the tours that don't have the above properties is very small compared to $\OPT$. We will show this in Section~\ref{sect:difficult_instances_clustered}.

Property~\eqref{item:clustered_informal} implies that most vertices can be partitioned into clusters with demand approximately $1$.
The first step of our algorithm aims at identifying these clusters.
Since the clusters around the peaks of different tours can be close to each other and might be difficult to distinguish, 
we merge such clusters into larger ones.
A detailed description of our clustering algorithm will be given in Section~\ref{sect:guessing_clusters}.

Having identified clusters, we now want to find paths starting at the depot and ending in the clusters, 
such that all vertices are visited by one such path, regardless of whether the vertex is part of a cluster or not.
To find such paths we compute a solution to an instance of \textsc{Vehicle Routing with Target Groups}, which is a new problem we introduce.
The \emph{targets} will be some vertices inside the clusters, where targets in the same cluster belong to the same \emph{target group}.

\begin{definition}[\textsc{Vehicle Routing with Target Groups}]
An instance of \textsc{Vehicle Routing with Target Groups} consists of 
\begin{itemize}\itemsep0pt
\item disjoint finite sets $V$ (of \emph{customers}) and $\bar T$ (of \emph{targets}),
\item a \emph{depot} $s$, not belonging to $V\cup \bar T$,
\item a semi-metric $c$ on $\{s\}\cup V\cup \bar T$,
\item a partition  $\mathcal{T}$ of the target set $\bar T$ into \emph{target groups}, and
\item numbers $b: \mathcal{T} \rightarrow \mathbb{Z}_{>0}$ that specify how many tours must end in each target group.
\end{itemize}
A feasible solution is a set $\Pscr$ of \emph{tours} such that
\begin{itemize}\itemsep0pt
\item every tour $P\in \Pscr$ is either an $s$-$t$-path for some target $t\in \bar T$ or a cycle containing $s$, 
and all other vertices of $P$ belong to $V$,
\item every element of $V$ belongs to at least one of these tours, and
\item for every target group $\Tg  \in \Tscr$, exactly $b(\Tg  )$ of these tours end in an element of $\Tg  $.
\end{itemize}
The task is to find a feasible solution whose total cost
$c(\Pscr):=\sum_{P\in \Pscr} c(P)$ is minimum, where we again write
$c(P)=\sum_{\{v,w\}\in E(P)}c(v,w)$.
\end{definition}
Again, we write $\OPT(\Jeuscr)$ for the cost of an optimum solution to instance $\Jeuscr$.

For a target group $\Tg  $, we will set the number $b(\Tg  )$ to roughly twice the demand of the cluster containing~$\Tg  $. 
The number $b(\Tg  )$ of paths ending in $\Tg  $ will always be even.
Therefore, we can turn a solution to our instance of \textsc{Vehicle Routing with Target Groups} into a solution of the \textsc{Capacitated Vehicle Routing} instance as follows.
For each target group $\Tg  $ we pair up the paths and complete a pair of paths to a tour by adding an edge between their endpoints.
Computing a pairing that minimizes the cost of the added edges is a minimum-weight perfect matching problem.
In Section~\ref{sect:matching} we prove that this matching problem has a cheap solution.
In the end we concatenate all these tours and shortcut to obtain a traveling salesman tour,
to which we apply Theorem~\ref{thm:tour_splitting}.

A key part of our proof is to show that we can compute a cheap solution to our instance of \textsc{Vehicle Routing with Target Groups}. 
First, we want to show that if our \textsc{Capacitated Vehicle Routing} instance $\Ieuscr$ is difficult, the instance of \textsc{Vehicle Routing with Target Groups} that we construct has a solution that
\begin{enumerate}[(i)]
    \item is not much more expensive than the cost $\OPT (\Ieuscr)$ of an optimum solution to our \textsc{Capacitated Vehicle Routing} instance, and \label{item:cheap_solution_informal}
    \item has small total \emph{detour}, i.e.\ almost all tours are approximately shortest paths from the depot to the target they end in. \label{item:small_detour_informal}
\end{enumerate}

It will be convenient to consider a solution to a \textsc{Vehicle Routing with Target Groups} instance as a set of walks in a digraph.
For an instance $\Jeuscr$ of \textsc{Vehicle Routing with Target Groups} let $G(\Jeuscr)$ be the digraph with vertex set
$V(\Jeuscr)=\{s\}\cup V\cup \bar T$ and edge set
$E(\Jeuscr)=E_1(\Jeuscr)\cup E_2(\Jeuscr)$, where
\begin{align*}
E_1(\Jeuscr)&=\{(v,w) : v\in \{s\}\cup V,\, w\in V,\, v\not=w\}, \\
E_2(\Jeuscr)&=\{(v,w) : v\in \{s\}\cup V,\, w\in \{s\}\cup \bar T,\, v\not=w\}.
\end{align*}

\begin{definition}[walk solution]\label{def:walk_solution}
Given an instance $\Jeuscr=(V,\bar T,s,c,\mathcal{T}, b)$ of \textsc{Vehicle Routing with Target Groups},
a \emph{walk solution} is a multi-subset $H$ of $E(\Jeuscr)$ such that 
\begin{itemize}
    \item $|\delta^+_H(v)|=|\delta^-_H(v)|$ for all $v\in V$,
    \item $|\delta^-_H(\Tg  )|=-b(\Tg  )$ for each $\Tg  \in\Tscr$, and
    \item  $H$ connects all vertices of $\{s\}\cup V$.
\end{itemize}
Here $\delta^+_H(v)$ and $\delta^-_H(v)$ denote the multi-sets of arcs in $H$ leaving and entering $v$, respectively.
\end{definition}

The \emph{cost} of a walk solution $H$ is $c(H):=\sum_{(v,w)\in H} c(v,w)$, where the sum counts multiplicities. 
Then \textsc{Vehicle Routing with Target Groups} is equivalent to computing a cheapest walk solution:

\begin{proposition}\label{prop:walk_solution_suffices}
For any instance $\Jeuscr$ of \textsc{Vehicle Routing with Target Groups}, there is a walk solution of cost $\OPT(\Jeuscr)$,
and from any walk solution $H$ one can obtain a solution $\Pscr$ with $c(\Pscr)\le c(H)$ in $O(|H|)$ time.
\end{proposition}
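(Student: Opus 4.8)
The plan is to prove the two halves of the statement separately.

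For the first half I would start from an optimum solution $\Pscr$ of $\Jeuscr$ --- one exists because every instance is feasible (e.g.\ take $b(\Tg)$ copies of the trivial $s$-$t$-path for each group $\Tg\in\Tscr$ together with a single cycle through $s$ that visits all of $V$). I orient each $s$-$t$-path of $\Pscr$ from $s$ towards its endpoint $t$, orient each cycle of $\Pscr$ as a directed cycle (in either of the two directions), and let $H$ be the multiset union of all resulting arcs. Every customer $v\in V$ occurs only as an internal vertex of the paths of $\Pscr$ and only on its cycles, so it receives equally many entering and leaving arcs; every target is a sink and is entered exactly by the last arcs of the paths ending there, so exactly $b(\Tg)$ arcs of $H$ enter each group $\Tg$; and $H$ connects every customer to $s$ since every tour of $\Pscr$ either contains $s$ or starts in $s$. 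Hence $H$ is a walk solution with $c(H)=\sum_{P\in\Pscr}c(P)=\OPT(\Jeuscr)$.

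For the second half, given a walk solution $H$, I would augment $H$ by adding, for each target $t\in\bar T$, exactly $|\delta^-_H(t)|$ parallel ``return arcs'' from $t$ to $s$. In the resulting multiset $H'$ every vertex has in-degree equal to out-degree, and its support is connected: all customers are connected to $s$ by the connectivity requirement in the definition of a walk solution, and every target that carries an arc is now joined to $s$ by a return arc. Thus $H'$ admits an Eulerian circuit, computable in $O(|H|)$ time by Hierholzer's algorithm. Cutting this circuit at each return arc (and then discarding the return arcs) yields a family of walks that together use every arc of $H$ exactly once, each walk starting at $s$ and ending at some target. I then cut each such walk at every intermediate occurrence of $s$; this splits off closed walks through $s$ and leaves walks from $s$ to a target in which $s$ appears only at the start. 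Finally I shortcut each walk to a simple $s$-$t$-path and each closed walk to a simple cycle through $s$; by the triangle inequality this does not increase the cost, and keeping only the first occurrence of each vertex makes it run in total linear time. The resulting tours form a feasible solution $\Pscr$: all customers are covered because the Eulerian circuit used every arc of $H$ and $H$ meets every customer; exactly $b(\Tg)$ tours end in each group $\Tg$ because exactly that many return arcs were directed into targets of $\Tg$, and neither cutting at $s$ nor shortcutting changes the target endpoint of a walk; and $c(\Pscr)\le c(H)$.

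The degree bookkeeping and the triangle-inequality estimate for the shortcutting are routine. The one place that needs care is ruling out ``stray'' directed cycles contained entirely in $V$: a naive cycle decomposition of $H$ could peel off such cycles, which are not valid tours (they miss $s$). Processing all of $H$ through a single Eulerian circuit of the \emph{connected} auxiliary multigraph $H'$ avoids this, and this is exactly where the connectivity condition in Definition~\ref{def:walk_solution} is used. Checking that the $O(|H|)$ bound is achievable throughout --- linear-time Eulerian circuit, linear-time cutting, linear-time shortcutting --- is then straightforward.
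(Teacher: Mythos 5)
Your proposal is correct and follows essentially the same route as the paper: orient an optimum solution away from $s$ to get a walk solution of cost $\OPT(\Jeuscr)$, then decompose a walk solution into $\sum_{\Tg\in\Tscr}b(\Tg)$ walks from $s$ ending in $\{s\}\cup\bar T$ and shortcut using the triangle inequality. The only difference is that you spell out the decomposition (via return arcs, an Eulerian circuit, cutting at return arcs and at intermediate visits to $s$) that the paper merely asserts can be done in linear time, and your bookkeeping there (targets being sinks, exactly $b(\Tg)$ walk ends per group, coverage of all customers) is sound.
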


\begin{proof}
Given a solution $\Pscr$ to an instance $\Jeuscr$, orient its tours away from $s$, and
let $H$ contain an edge $e\in E(\Jeuscr)$ exactly $k$ times if the number of tours $P\in\Pscr$ that contain $e$ is $k$.
Then $H$ is a walk solution of the same cost.

Conversely, we can decompose a walk solution $H$ in linear time into $\sum_{\Tg  \in\Tscr}b(\Tg  )$ walks 
from $s$ to $\{s\}\cup \bar T$ with inner vertices in $V$ such that each element of $V$ belongs to at least one of them.
Then we can shortcut these walks whenever an element of $V$ is visited more than once. 
Shortcutting does not increase the total cost due to the triangle inequality. 
This yields a set $\Pscr$ of tours with $c(\Pscr) \le c(H)$.
\end{proof}

\begin{definition}[detour]
For an arc $e=(v,w)\in E(\Jeuscr)$ we define 
$$\detour(e) \ := \ \detour(v,w) \ := \ c(v,w)+c(s,v)-c(s,w).$$
For a walk $P$ that visits the vertices $v_0,v_1,\ldots,v_k$ in this order
we write $\detour(P):=\sum_{i=1}^{k} \detour(v_{i-1},v_i)$.
\end{definition}
For a walk $P$ that starts at the depot $s$ and ends in $t$ we have 
$\detour(P)= c(P)-c(s,t)$.
The detour has been called \emph{excess} by \cite{blum2007approximation} and
the \emph{regret metric} by \cite{friggstad}.
The $\detour$ is clearly not symmetric, but it is non-negative and fulfills the triangle inequality.

Having constructed an instance $\Jeuscr$ of \textsc{Vehicle Routing with Target Groups} from
a given difficult instance $\Ieuscr$ of \textsc{Capacitated Vehicle Routing},
we want to show that $\Jeuscr$ has a cheap solution (costing not much more than $\OPT(\Ieuscr)$) with small total detour.
However, this is not exactly the statement we will show.
It will simplify our proofs and lead to better approximation ratios to consider what we call \emph{weak fractional solutions} of $\Jeuscr$ instead of actual solutions.
First, we allow taking only fractions of walks. 
Second, we reduce the amount that must arrive in each target group slightly.
The latter is because we cannot assume that the total demand of a tour near its peak is 1 in an optimum solution to $\Ieuscr$,
but only $1-\tau$ for some constant $\tau \in (0,1)$ to be chosen later.

\begin{definition}[weak fractional solution]\label{def:weak_fractional_solution}
A \emph{weak fractional solution} to an instance $\Jeuscr = (V,\bar T,s,c,\Tscr,b)$ of \textsc{Vehicle Routing with Target Groups} 
is a vector $x\in \mathbb{R}_{\ge 0}^{E(\Jeuscr)}$ such that 
\begin{equation*}
    x = \sum_{P\in \Pscr} \lambda_P \cdot \chi^{E(P)},
\end{equation*}
where $\Pscr$ is a set of walks in $G(\Jeuscr)$ and $\lambda_P \in \mathbb{R}_{\ge 0}$ for all $P\in \Pscr$ such that
\begin{itemize}
    \item every walk $P\in \Pscr$ begins in $s$ and ends in $\{s\}\cup \bar T$, and all inner vertices belong to $V$,
    \item for every $v\in V$, we have $\sum_{P\in \Pscr: v\in V(P)} \lambda_P \ge 1$, and
    \item for every target group $\Tg  \in \Tscr$, the total weight of the walks ending in $\Tg  $ equals $(1-\tau)\cdot b(\Tg  )$, i.e., 
    \[\sum_{P\in \Pscr:\, P \text{ ends in } \Tg} \lambda_P =(1-\tau) \cdot b(\Tg  ).\]
\end{itemize}
Here $\chi^{E(P)}$ denotes the incidence vector of $E(P)$.
We write $c(x):=\sum_{(v,w)\in E(\Jeuscr)}c(v,w)x_{(v,w)}$
and $\detour(x):=\sum_{(v,w)\in E(\Jeuscr)}\detour(v,w)x_{(v,w)}$
for any vector $x\in \mathbb{R}_{\ge 0}^{E(\Jeuscr)}$. 
\end{definition}

We show that the instance $\Jeuscr$ of \textsc{Vehicle Routing with Target Groups} has a weak fractional solution that has small detour
and is not much more expensive than $\OPT(\Ieuscr)$.
To this end, we start with an optimum solution to our \textsc{Capacitated Vehicle Routing} instance $\Ieuscr$.
Then for every customer $v$ that is contained in a tour $Q$ of this solution, we partition $Q$
into two paths starting at the depot $s$ and ending at $v$.
We extend these paths by an edge connecting $v$ either to a close-by target or to the depot $s$.
The latter will happen rarely, as we can show using property~\eqref{item:clustered_informal} and
the fact that we constructed the targets inside clusters and set the value $b$ to roughly twice the demand of its cluster.
The resulting walks $P$ will then contribute with weight $\lambda_P=d(v)$ to the weak fractional solution that we construct.
Our weak fractional solution will have small detour because of property~\eqref{item:informal_no_detour_to_peak} 
of optimum solutions of difficult instances.
A precise description of our construction will be given in Section~\ref{sect:good_solution_exists}.

While \textsc{Vehicle Routing with Target Groups} in general is at least as hard as the traveling salesman problem, we can compute solutions that are not much longer
than the best weak fractional solution with small detour.
More precisely, if there exists a weak fractional solution $x$ with small detour, then we can compute a solution to \textsc{Vehicle Routing with Target Groups} which is not much more expensive than $x$. 
This is formally stated in the following theorem. For small detour, we can choose a small value of $\eta$.
Then the factor on the cost of the given weak fractional solution is close to 1 because we will choose the constant $\tau \in (0,1)$ to be close to $0$.

\begin{theorem}\label{thm:mainVRPwTargetGroups}
There is a polynomial-time algorithm for \textsc{Vehicle Routing with Target Groups} that
computes for every instance $\Jeuscr$ and any given $\eta\in (0,1]$
a feasible solution $\Pscr$ of $\Jeuscr$ such that 
\[
c(\Pscr) \ < \ 
(\sfrac{1}{1-\tau}+\eta)\cdot c(x) 
+ O(\sfrac{1}{\eta}) \cdot \detour(x|_{E_1(\Jeuscr)}),
\]
for every weak fractional solution $x$ of $\Jeuscr$. 
\end{theorem}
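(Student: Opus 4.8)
The plan is to construct $\Pscr$ without ever seeing $x$, by separately handling the ``bodies'' of the walks (the $E_1(\Jeuscr)$-edges, which in any weak fractional solution form an $s$-rooted flow through the customers) and their ``last legs'' into the target groups (the $E_2(\Jeuscr)$-edges). The starting observation is that a weak fractional solution, read as a flow, is almost the fractional relaxation of a walk solution: flow is conserved at every customer (a walk entering a customer must leave it, since customers are not targets), every customer carries at least one unit of flow toward $\{s\}\cup\bar T$ (it lies on walks of total weight at least $1$), and exactly $(1-\tau)\,b(\Tg)$ units enter each group $\Tg$. Using flow conservation at the customers, these facts give, for every $x$ and every nonempty $S\subseteq V$, that $x|_{E_1(\Jeuscr)}(\delta^-(S))=x(\delta^+(S))\ge 1$ (each customer in $S$ lies on walks of total weight $\ge 1$, all of which must leave $S$); equivalently, $x|_{E_1(\Jeuscr)}$ lies in Edmonds' spanning-arborescence polytope for root $s$ on $\{s\}\cup V$.

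First I would compute, by the minimum-cost arborescence algorithm, an $s$-arborescence $A$ spanning $\{s\}\cup V$ that minimises $c(A)+\beta\cdot\detour(A)$ (where $\detour(A):=\sum_{e\in A}\detour(e)$ and $\beta=\Theta(\sfrac{1}{\eta})$ is a constant to be fixed). Since the arborescence polytope is integral and $\detour\ge 0$, the observation above yields $c(A)+\beta\,\detour(A)\le c(x|_{E_1(\Jeuscr)})+\beta\,\detour(x|_{E_1(\Jeuscr)})$ for \emph{every} weak fractional solution $x$. (Alternatively one can solve the compact LP of Friggstad and Swamy~\cite{friggstad2017compact} with the detour treated as a purchasable resource; this is the variant that also yields our improved guarantee for regret-bounded vehicle routing.) Next I would fix the last legs: the fractional terminal flow $\sfrac{1}{1-\tau}\cdot x|_{E_2(\Jeuscr)}$ puts exactly $b(\Tg)$ units into each group, and since the relevant transportation polytope is integral, a minimum-cost integral choice of $b(\Tg)$ edges into each group (with tails among the customers) costs at most $\sfrac{1}{1-\tau}\cdot c(x|_{E_2(\Jeuscr)})$ -- this is the only place the factor $\sfrac{1}{1-\tau}$ is incurred. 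The tails of these $N:=\sum_{\Tg}b(\Tg)$ chosen edges become prescribed endpoints for the walks that we now carve out of $A$.

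It remains to decompose $A$, together with the prescribed terminal edges, into exactly $N$ walks from $s$ that cover every customer and end at the prescribed tails with the prescribed multiplicities, at a cost increase of at most $\eta\cdot c(A)+O(\sfrac{1}{\eta})\cdot\detour(A)$ over $c(A)$; then appending the terminal edges and invoking Proposition~\ref{prop:walk_solution_suffices} (shortcutting repeated customers) produces $\Pscr$, and combining the three bounds -- and absorbing the cross-terms between $\eta$ and the small constant $\tau$ into the stated $\sfrac{1}{1-\tau}+\eta$ and $O(\sfrac{1}{\eta})$ -- proves the theorem. The plan for the decomposition is to cut $A$ at the geometric radii $c(s,\cdot)\in[(1+\eta)^i,(1+\eta)^{i+1})$ into subtrees, each lying in a single annulus, and to thread walks through these subtrees so that the radial ``spine'' of each subtree is used once, as genuine outward progress, while only the sideways/inward excursions get re-traversed; a re-traversed segment from a deep vertex up to a shallower one has cost equal to its length, and its detour (as an upward segment) equals twice that length minus the detour of the corresponding downward path in $A$, which is exactly how the re-traversal cost gets charged, in part, to $\detour(A)$, with the remaining, genuinely ``radial'' part absorbed by the multiplicative $(1+\eta)$-slack of the annulus, i.e.\ by the $\eta\cdot c(A)$ term.

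The main obstacle is precisely this last decomposition. Three things must hold at once: (a) exactly $b(\Tg)$ walks end in each group; (b) the customers that lie in no target group and may be far from every target are still covered; and (c) the overhead is charged only to $\detour(A)$ and an $\eta$-fraction of $c(A)$, never to $c(A)$ outright. Point (c) is the genuine $\eta$ versus $\Theta(\sfrac{1}{\eta})$ trade-off and is what forces the geometric-annulus accounting; it is delicate because $N$ is prescribed, so when $A$ has far more leaves than $N$ some walks must visit several leaves and the analysis must argue that the edges thereby re-traversed are, after the right grouping of leaves by scale, essentially the high-detour ones. Points (a)--(b) reduce to an Eulerian-orientation / parity argument that uses only that $N$ and the numbers $b(\Tg)$ are fixed and that $A$ is connected. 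A secondary technical point, already used above, is the gap between ``a unit of flow through $v$'' and ``a walk visiting $v$'': since a walk may traverse a customer several times, the covering guarantee survives only after relaxing to the cut inequality $x|_{E_1(\Jeuscr)}(\delta^-(S))\ge 1$, which is why the arborescence is defined through cut rather than degree constraints.
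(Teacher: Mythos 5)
Your first step is sound: a weak fractional solution does satisfy $x|_{E_1(\Jeuscr)}(\delta^-(S))\ge 1$ for every nonempty $S\subseteq V$, so a minimum-cost $s$-arborescence $A$ with respect to $c+\beta\cdot\detour$ satisfies $c(A)+\beta\,\detour(A)\le c(x|_{E_1(\Jeuscr)})+\beta\,\detour(x|_{E_1(\Jeuscr)})$. The proof breaks in the two steps that follow, and the obstacle you flag as ``delicate'' is in fact insurmountable in the form you state it. The decomposition claim --- that $A$ can be split into $N=\sum_{\Tg}b(\Tg)$ walks from $s$ covering $V$ and ending at prescribed tails at total cost at most $(1+\eta)\,c(A)+O(\sfrac{1}{\eta})\,\detour(A)$ --- is false, because $c(A)$ pays every edge \emph{once}, whereas the numbers $b(\Tg)$ can force a region to be traversed $\Theta(N)$ times. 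Take $k$ customers co-located at distance $R$ from $s$ (mutual distances $\approx 0$), one target $t$ at the same location, one group with $b(\{t\})=k$. Here $c(A)\approx R$ and $\detour(A)\approx 0$, yet any $k$ walks from $s$ ending at that location cost at least $kR$, a factor $\Theta(k)$ above your claimed bound. The theorem itself is not contradicted, because every weak fractional solution has $c(x|_{E_1(\Jeuscr)})\approx(1-\tau)kR$: the multiplicity information sits in $x|_{E_1(\Jeuscr)}$ but is destroyed by relaxing to a single arborescence, so no charging scheme (annuli or otherwise) against $c(A)$ and $\detour(A)$ alone can recover it. This is precisely why both of the paper's proofs keep the multiplicities inside the optimization: the combinatorial proof computes a min-cost \emph{flow} that must deliver $b(\Tg)$ units into each group (on a ``nice'' subinstance obtained by deleting vertices, reconnected later by a cheap forest), and the LP proof carries flow variables $f$ with $f(\delta^-(\Tg))=(1-\tau)b(\Tg)$ alongside the forest variables.

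The second unsound step is choosing the terminal edges by minimizing only their own cost and then \emph{prescribing} their tails as walk endpoints; this decouples where a walk ends from what it costs to get there. Example: a single customer $u$ at distance $D$ from $s$, targets $t_2$ next to $u$ and $t_1$ at distance $1$ from $s$, one group $\{t_1,t_2\}$ with $b=2$. Your rule picks two copies of the free edge $(u,t_2)$, forcing two walks of length at least $D$ each, total at least $2D$; but the weak fractional solution consisting of $s\to u\to t_2$ (weight $1$) and $s\to t_1$ (weight $1-2\tau$) has $c(x)\approx D$ and $\detour(x|_{E_1(\Jeuscr)})=0$, so the theorem demands cost about $\sfrac{1}{1-\tau}\cdot D$, which your output violates for large $D$. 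A correct algorithm must be free to send the second tour to the cheap target $t_1$, i.e.\ the endpoint selection has to be made jointly with the routing --- again exactly what the paper's flow/LP formulations do and what your two-stage reduction cannot.
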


We will present two approaches for solving \textsc{Vehicle Routing with Target Groups}, both implying Theorem~\ref{thm:mainVRPwTargetGroups}.
In both approaches we compute a cheap forest and a network flow to obtain tours that visit not necessarily every customer, but every connected component of the forest.
In the network flow problem we ensure that the number of tours ending in each target group meets the requirements.
Doubling the edges of the forest yields the desired walk solution of \textsc{Vehicle Routing with Target Groups}. 

The first approach (Section~\ref{sec:combinatorial}) is a simple and fast combinatorial algorithm based on the following observation.
While computing a cheapest walk solution is NP-hard, we can in polynomial time compute a cheapest walk solution with the additional property 
that every vertex $v$ has at least one predecessor that is closer to the depot than $v$. 
This problem can be reduced to a network flow problem. 
We compute such a walk solution for a subset of customers for which we can guarantee that our additional constraint makes the walk solution not too expensive.
To find such a subset of customers we use a simple greedy algorithm.
Finally, we  connect the remaining customers that are not visited by our walk solution by a minimum-cost forest.

The second approach (Section~\ref{sec:lpbased}) leverages a sophisticated LP relaxation for regret-bounded vehicle routing due to Friggstad and Swamy \cite{friggstad2017compact}. 
In contrast to the combinatorial approach, here the network flow only uses edges moving away from the depot, i.e.\ edges $(v,w)$ with $c(s,w) > c(s,v)$, (and some edges entering $s$).
Both the forest and the network flow are obtained from an optimum LP solution.
However, setting up an LP relaxation that allows to round fractional solutions to an integral forest and an (almost) acyclic network flow that visits all connected components of the forest
 turns out to be tricky.
To achieve this, the LP has variables corresponding to a network flow in some auxiliary digraph.
In this digraph we have $O(n^2)$ many vertices for each customer $v\in V$, which correspond to different possibilities for the forest component containing $v$.
We combine the rounding approach by \cite{friggstad2017compact} with a new
construction of a fractional solution, which will also enable us to obtain a 
better approximation ratio for regret-bounded vehicle routing and the school bus problem.

Sections~\ref{sec:combinatorial} and \ref{sec:lpbased} present the two different proofs of 
Theorem~\ref{thm:mainVRPwTargetGroups} and can be read independently from each other.

\section{Reducing to Vehicle Routing with Target Groups}

In this section we prove that Theorem~\ref{thm:mainVRPwTargetGroups} implies our main result.
First, in Section~\ref{sect:difficult_instances_clustered} we discuss a key property of difficult instances that we exploit in later parts of this chapter.
In Section~\ref{sect:guessing_clusters}, we describe the construction of an instance of \textsc{Vehicle Routing with Target Groups}. 
In Section~\ref{sect:matching} we bound the cost of the matching that we use to
transform a \textsc{Vehicle Routing with Target Groups} solution to a traveling salesman tour.
In Section~\ref{sect:good_solution_exists} we prove that our instance 
of  \textsc{Vehicle Routing with Target Groups}
has a weak fractional solution with cost close to $\OPT(\Ieuscr)$ and small detour, given that the \textsc{Capacitated Vehicle Routing} instance $\Ieuscr$ we consider is difficult.
Hence, Theorem~\ref{thm:mainVRPwTargetGroups} will yield a good bound.
Finally, in Section~\ref{sect:solving_difficult} we combine all this to give an algorithm with a good approximation ratio for difficult instances.

\subsection{Difficult instances are clustered}\label{sect:difficult_instances_clustered}

Let $0 < \tau, \rho \le \frac{1}{6}$ be constants that we will fix later.
\begin{definition}\label{def:peak_cluster}
	Let $(V,s,c,d)$ be an instance of \textsc{Capacitated Vehicle Routing}. Let $Q$ be a cycle with $s \in V(Q)$. Then we define
	$\peak(Q)$ to be a vertex $v\in V(Q)$ with $c(s,v)$ maximal,
	and the \emph{peak cluster} to be 
	\[C(Q) \ := \ \left\{u \in V(Q) : c(u, \peak(Q)) + \kappa \cdot \detour(u, \peak(Q))  < \rho \cdot c(s,\peak(Q))\right\}.\]
	where $\kappa:= \sfrac{1 - 2 \tau - \tau \cdot \rho}{2\tau}$. We call the peak cluster \emph{large} if $d(C(Q)) > 1-\tau$ and \emph{small} otherwise.
\end{definition}

Here and in the following we abbreviate $d(C(Q))=\sum_{v\in C(Q)}d(v)$.
Note that each vertex in the peak cluster is at most $\rho \cdot c(s,\peak(Q))$ away from $\peak(Q)$. 
We also remark that $s\notin C(Q)$ because $\rho<1$.
See Figure~\ref{fig:example_tight} for an example.
For later use we remark that $\kappa>\frac{3}{2}$ because $\tau,\rho\le\frac{1}{6}$.

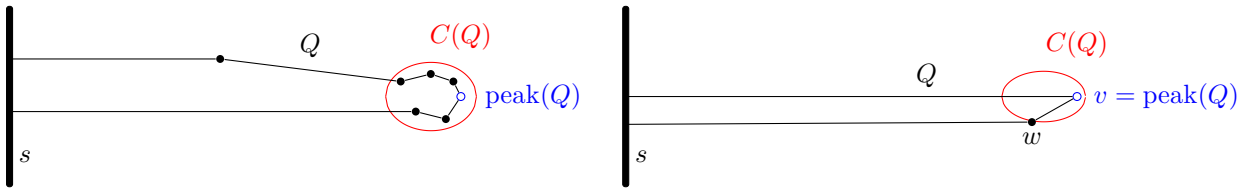
\begin{figure}[b!]
		\begin{tikzpicture}
		\begin{scope}[scale=2]
		
		\draw[draw=red] plot[domain=2.5:90/29, samples = 50] (\x, {sqrt(max(0,1/(23/12+1)^2 * (
			(1/6 + 23/12)^2 *3*3 + (23/12)^2 * \x^2 - 2 * (1/6+23/12)*3*23/12*\x - (1+23/12)^2 * (\x-3)^2)))});
		\draw[draw=red] plot[domain=2.5:90/29, samples = 50] (\x, {-sqrt(max(0,1/(23/12+1)^2 * (
			(1/6 + 23/12)^2 *3*3 + (23/12)^2 * \x^2 - 2 * (1/6+23/12)*3*23/12*\x - (1+23/12)^2 * (\x-3)^2)))});

		\tikzstyle{vertex}=[black,circle,fill,minimum size=3,inner sep=0pt]
		\tikzstyle{vertex2}=[draw=blue,circle,minimum size=3,inner sep=0pt]
		\node[vertex2] (v) at (3,0) {};			
		\node[blue, right] at (3.1,0) {$\peak(Q)$};
		\node[vertex] (1) at (2.9,-0.15) {};
		\node[vertex] (3) at (2.6,0.1) {};
		\node[vertex] (4) at (2.8,0.15) {};
		\node[vertex] (5) at (1.4,0.25) {};
		\node[vertex] (7) at (2.95,0.1) {};
		\node[vertex] (8) at (2.7,-0.1) {};
		\draw[black,fill, rounded corners=1pt] (-0.02,-0.6) rectangle (0.02,0.6);
		\node[black, right] at (0,-0.4) {$s$};
		
		\node[black, above] at (2,0.2) {$Q$};
		\node[red, above] at (3,0.25) {$C(Q)$};

		\draw (0,-0.1)--(8)--(1)--(v)--(7)--(4)--(3)--(5)--(0,0.25);

		\end{scope}
		\end{tikzpicture}
		\hfill
		\begin{tikzpicture}
		\begin{scope}[scale=2]
		
		\draw[draw=red] plot[domain=2.5:162/53, samples = 50] (\x, {sqrt(max(0,1/(235/60+1)^2 * (
			(1/6 + 235/60)^2 *3*3 + (235/60)^2 * \x^2 - 2 * (1/6+235/60)*3*235/60*\x - (1+235/60)^2 * (\x-3)^2)))});
		\draw[draw=red] plot[domain=2.5:162/53, samples = 50] (\x, {-sqrt(max(0,1/(235/60+1)^2 * (
			(1/6 + 235/60)^2 *3*3 + (235/60)^2 * \x^2 - 2 * (1/6+235/60)*3*235/60*\x - (1+235/60)^2 * (\x-3)^2)))});

		\tikzstyle{vertex}=[black,circle,fill,minimum size=3,inner sep=0pt]
		\tikzstyle{vertex2}=[draw=blue,circle,minimum size=3,inner sep=0pt]
		
		\node[vertex2] (v) at (3,0) {};			
		\node[blue, right] at (3.05,0) {$v = \peak(Q)$};
		\node[vertex] (w) at (2.7,-0.17) {};			
		\node[below] at (2.7,-0.175) {$w$};
		\draw[black,fill, rounded corners=1pt] (-0.02,-0.6) rectangle (0.02,0.6);
		\node[black, right] at (0,-0.4) {$s$};
		
		\node[black, above] at (2,0) {$Q$};
		\node[red, above] at (3,0.2) {$C(Q)$};
		
		\draw (0,-0.185)--(w)--(v)--(0,0);
		
		\end{scope}
		\end{tikzpicture}
	\caption{The left-hand side shows a tour $Q$ and its peak cluster $C(Q)$,
	assuming Euclidean distances and $\rho = \frac{1}{6}$ and $\tau = \frac{1}{6}$, which yields $\kappa = \frac{23}{12}$. 
	If each customer has demand $\frac{1}{7}$, the peak cluster of $Q$ is large since $d(C(Q)) > 1 - \tau$. 
On the right-hand side, we show another tour and its peak cluster for $\rho = \frac{1}{6}$ and $\tau = \frac{1}{10}$. This yields $\kappa = \frac{47}{12}$. 
Suppose the customer $v= \peak(Q)$ has demand $1- \tau$, and the customer $w$, which lies slightly outside of $C(Q)$, has demand $\tau$. 
Then the peak cluster of this tour is small. It can be shown that in this case the inequality in Lemma~\ref{lem:difficult_instances_clustered} is  tight.\label{fig:example_tight}}
\end{figure}

Intuitively, in an optimum solution to a difficult instance almost every tour $Q$ has the property that almost all vertices $v\in V(Q)$ are contained in the peak cluster $C(Q)$.
More precisely, the total length of tours with small peak cluster will be small compared to the total length of all tours. 
This can be derived from the following lemma, which gives a lower bound on $c(Q) - 2 \cdot \sum_{v\in V(Q)} d(v) \cdot c(s,v)$ for tours $Q$ with a small peak cluster. 
Because in a difficult instance the sum of these expressions over all tours is small (it is less than $\epsilon \cdot \OPT$), the lemma implies that in an optimum solution the total length of such tours with small peak clusters must be small. 
Note that we will choose $\tau$ and $\rho$ such that  $\tau \cdot \rho$ is much larger than $\epsilon$.

\begin{lemma}\label{lem:difficult_instances_clustered}
	Let $Q$ be a tour with small peak cluster. Then
	\begin{equation*}
	c(Q) - 2 \cdot \sum_{v\in V(Q)\setminus\{s\}} d(v) \cdot c(s,v) \ \ge \ \tau \cdot \rho \cdot  c(Q).
	\end{equation*}
\end{lemma}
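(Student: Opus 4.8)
The plan is to prove the equivalent inequality $2\sum_{v\in V(Q)\setminus\{s\}} d(v)\,c(s,v) \le (1-\tau\rho)\,c(Q)$. Write $p:=\peak(Q)$ and $R:=c(s,p)$; I may assume $R>0$, since otherwise every $c(s,v)=0$ and there is nothing to show. Split the customers of $Q$ into the peak cluster $A:=C(Q)$ and the rest $B:=(V(Q)\setminus\{s\})\setminus A$, noting that $p\in A$ because $\rho R>0$, so $A$ and $B$ partition $V(Q)\setminus\{s\}$. The two hypotheses available are $d(A)\le 1-\tau$ (the peak cluster is small) and $d(A)+d(B)\le 1$ ($Q$ obeys the capacity constraint). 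The argument rests on three elementary bounds: (a) $c(s,v)\le R$ for every $v\in V(Q)$, by the choice of the peak; (b) $c(Q)\ge c(s,v)+c(v,p)+R$ for every $v\in V(Q)$, since $s,v,p$ lie on the cycle $Q$ and the three arcs between them have length at least $c(s,v)$, $c(v,p)$ and $c(p,s)=R$; and (c) for $v\in B$, combining $v\notin C(Q)$, i.e.\ $c(v,p)+\kappa\cdot\detour(v,p)\ge\rho R$, with $\detour(v,p)=c(v,p)+c(s,v)-R$ and with (b) and solving for $c(s,v)$ yields $c(s,v)\le(1+\kappa)\,c(Q)-(1+\rho+2\kappa)\,R$. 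In particular, (b) applied to $v=p$ gives $c(Q)\ge 2R$.

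Everything then follows by substituting these bounds into $\sum_v d(v)c(s,v)$ and simplifying with the identities that $\kappa=(1-2\tau-\tau\rho)/(2\tau)$ supplies for free, namely $2\tau(1+\kappa)=1-\tau\rho$, $\tau(1+\rho+2\kappa)=1-\tau$ and $\tau(2+\rho+2\kappa)=1$. The last of these shows that the bound in (c) is at most $R$ precisely when $c(Q)\le\frac{2R}{1-\tau\rho}$, which is the natural branch point. If $c(Q)>\frac{2R}{1-\tau\rho}$, then (a) and the capacity constraint already give $2\sum_v d(v)c(s,v)\le 2R\,(d(A)+d(B))\le 2R<(1-\tau\rho)c(Q)$. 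Otherwise, applying (a) on $A$ and (c) on $B$ gives
\[
2\sum_v d(v)c(s,v)\ \le\ 2R\,d(A)+2d(B)\bigl((1+\kappa)c(Q)-(1+\rho+2\kappa)R\bigr),
\]
and, rewriting the target $(1-\tau\rho)c(Q)$ via $2\tau(1+\kappa)=1-\tau\rho$, the claim reduces to
\[
2(1+\kappa)\bigl(d(B)-\tau\bigr)\,c(Q)\ \le\ 2R\bigl((1+\rho+2\kappa)d(B)-d(A)\bigr).
\]
The left-hand side is affine in $c(Q)$ with slope of the same sign as $d(B)-\tau$, so, since $2R\le c(Q)\le\frac{2R}{1-\tau\rho}$, it suffices to verify this at $c(Q)=2R$ when $d(B)\le\tau$ and at $c(Q)=\frac{2R}{1-\tau\rho}$ when $d(B)\ge\tau$. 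In the first case the inequality collapses (using $2\tau(1+\kappa)=1-\tau\rho$) to $d(A)+(1-\rho)d(B)\le 1-\tau\rho$, which holds because $d(A)\le 1-\tau$ and $d(B)\le\tau$; in the second it collapses (using $\frac{2(1+\kappa)}{1-\tau\rho}=\frac{1}{\tau}$ and $\tau(1+\rho+2\kappa)=1-\tau$) to $d(A)+d(B)\le 1$, the capacity constraint.

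The step I expect to be the crux is bound (c) and how it meshes with the identities for $\kappa$. Customers outside the peak cluster contribute little to $\sum d(v)c(s,v)$ but, being far from the peak or having large detour, force $c(Q)$ to be large; (c) captures this trade-off exactly, and it is the specific value of $\kappa$ that makes the two boundary inequalities close with no slack — consistent with the tight example in Figure~\ref{fig:example_tight}. Everything else is routine: facts (a) and (b) are immediate, and once the threshold $\frac{2R}{1-\tau\rho}$ is identified the remaining algebra is a short evaluation at the two extreme values of $c(Q)$.
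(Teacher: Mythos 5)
Your proof is correct; I checked the per-vertex bound (c), the three identities for $\kappa$, and both endpoint evaluations, and they all work out (case 2a reduces to $d(A)+(1-\rho)d(B)\le 1-\tau\rho$, case 2b to the capacity constraint). It uses the same raw ingredients as the paper's proof — the defining inequality of $C(Q)$ with the specific value of $\kappa$, the bound $c(Q)\ge c(s,v)+c(v,\peak(Q))+c(s,\peak(Q))$, and the pair of demand constraints $d(C(Q))\le 1-\tau$, $d(V(Q)\setminus\{s\})\le 1$ — but organizes them quite differently. The paper picks the single non-cluster vertex $u$ with $c(s,u)$ maximal, bounds the weighted sum directly by the extremal demand placement $\tau\cdot c(s,u)+(1-\tau)\cdot c(s,\peak(Q))$, and finishes in one chain by combining the cluster-exclusion inequality for $u$ with the triangle bound through $s$, $u$, $\peak(Q)$; there is no case distinction. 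You instead eliminate $c(v,\peak(Q))$ to get a uniform upper bound on $c(s,v)$ for every non-cluster vertex in terms of $c(Q)$, and then solve a small linear optimization in $(c(Q),d(A),d(B))$ over the interval $[2R,\,2R/(1-\tau\rho)]$ by extreme-point evaluation, with an easy separate case when $c(Q)$ is large. The paper's route is shorter and cleaner; yours is longer but makes visible exactly which constraint is tight where (the smallness bound $d(A)\le 1-\tau$ at $c(Q)=2R$, the capacity bound at $c(Q)=2R/(1-\tau\rho)$), which matches the tightness example in Figure~\ref{fig:example_tight} and explains why the chosen $\kappa$ leaves no slack.
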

\begin{proof}
Note that $s\in V(Q)\setminus C(Q)$. Let $u\in V(Q)\setminus C(Q)$ with $c(s,u)$ maximum. 
We have $c(s,v)\le c(s,u)$ for $v\in V(Q)\setminus C(Q)$ and $c(s,v)\le c(s,\peak(Q))$ for all $v\in V(Q)$.
Since the peak cluster is small,
\begin{equation}
\label{eq:new_proof_1}
\sum_{v\in V(Q)\setminus\{s\}} d(v)\cdot c(s,v) \ \le \ \tau \cdot c(s,u) + (1-\tau)\cdot c(s,\peak(Q)).
\end{equation}
Next, $u\notin C(Q)$ implies $c(u, \peak(Q)) + \kappa \cdot \detour(u, \peak(Q)) \ge \rho \cdot c(s,\peak(Q))$.
Plugging in $\kappa=\sfrac{1 - 2 \tau - \tau \cdot \rho}{2\tau}$ and
$\detour(u,\peak(Q))=c(u,\peak(Q))+c(s,u)-c(s,\peak(Q))$ and multiplying by $2\tau$ yields
\begin{equation}
\label{eq:new_proof_2}
(1-\tau\rho)\cdot c(u, \peak(Q)) + (1 - 2 \tau - \tau \rho) \cdot c(s,u) \ \ge \ (1-2\tau+\tau\rho) \cdot c(s,\peak(Q)).
\end{equation}
Multiplying \eqref{eq:new_proof_1} by 2 and combining with \eqref{eq:new_proof_2} yields
\begin{align*}
2\cdot \sum_{v\in V(Q)\setminus\{s\}} d(v)\cdot c(s,v) &\ \le \ 2\tau \cdot c(s,u) + (2-2\tau)\cdot c(s,\peak(Q)) \\
&\ \le \ (1-\tau\rho) \cdot (c(s,\peak(Q)) + c(u,\peak(Q))+c(s,u)) \\
&\ \le \ (1-\tau\rho) \cdot c(Q),
\end{align*}
where the last inequality holds because $Q$ contains $s$, $u$, and $\peak(Q)$.
\end{proof}

 Figure~\ref{fig:example_tight} shows that $C(Q)$ is chosen smallest possible such that Lemma~\ref{lem:difficult_instances_clustered} holds. 
 Choosing $\kappa = 0$ would already be sufficient to improve the approximation ratio of \textsc{Capacitated Vehicle Routing}. In this case the peak cluster of a tour $Q$ is a ball of radius $\rho \cdot c(s, \peak(Q))$ around $\peak(Q)$. However, choosing $\kappa=0$ would yield a worse approximation ratio.

\subsection{Clustering algorithm}\label{sect:guessing_clusters}

In this section we describe an algorithm to construct an instance of \textsc{Vehicle Routing Problem with Target Groups}.
The resulting instance is the one to which we will later apply Theorem~\ref{thm:mainVRPwTargetGroups}.
Before giving a formal description of the algorithm, let us informally explain some important properties.
Intuitively, our algorithm tries to "guess" the peaks of the tours of an optimum solution; these should become target vertices and in each target $t$ should end two tours, i.e. $b(\{t\})=2$. 

Our algorithm will not always guess the peaks of the optimum tours correctly.
However, for every tour $Q$ with a large peak cluster, our algorithm will always guess a target vertex that is not far away from $\peak(Q)$.
For every target vertex $t\in \bar T$, we then consider an area $B_t$ around $t$ that is chosen large enough to guarantee that each large peak cluster of an optimum solution is fully contained in one of these areas.
Note that our algorithm might also guess targets that are not close to any of the peaks of optimum tours, but we will show that for difficult instances this happens rarely.

We want to set the numbers $b$ in the \textsc{Vehicle Routing with Target Groups} instance large enough so that for every tour with a large peak cluster, two paths are allowed to end in a target close to the peak.
Therefore, the number of paths ending in the targets will depend on the total demand $d$ in the area $B_t$ around $t$.
In order to avoid requiring a too high number of paths, we want to avoid that the demand $d(v)$ of a customer $v$ is counted twice here when $v$ is contained in two of the areas $B_t$.
Therefore, if the areas $B_t$ for different targets $t$ overlap, e.g.\ because the peaks of two tours are close to each other,
we merge the areas $B_t$ and the corresponding targets will form a group.

Let us now describe our algorithm to construct an instance of \textsc{Vehicle Routing with Target Groups} 
 (Algorithm~\ref{algo:constructing_vrpwtg_instance}). See also Figure~\ref{fig:example_reduction_vrpwtg} for an illustration.
Note that the set $\bar T$ of targets will be a subset of $V$,
which formally means that we duplicate the targets. 

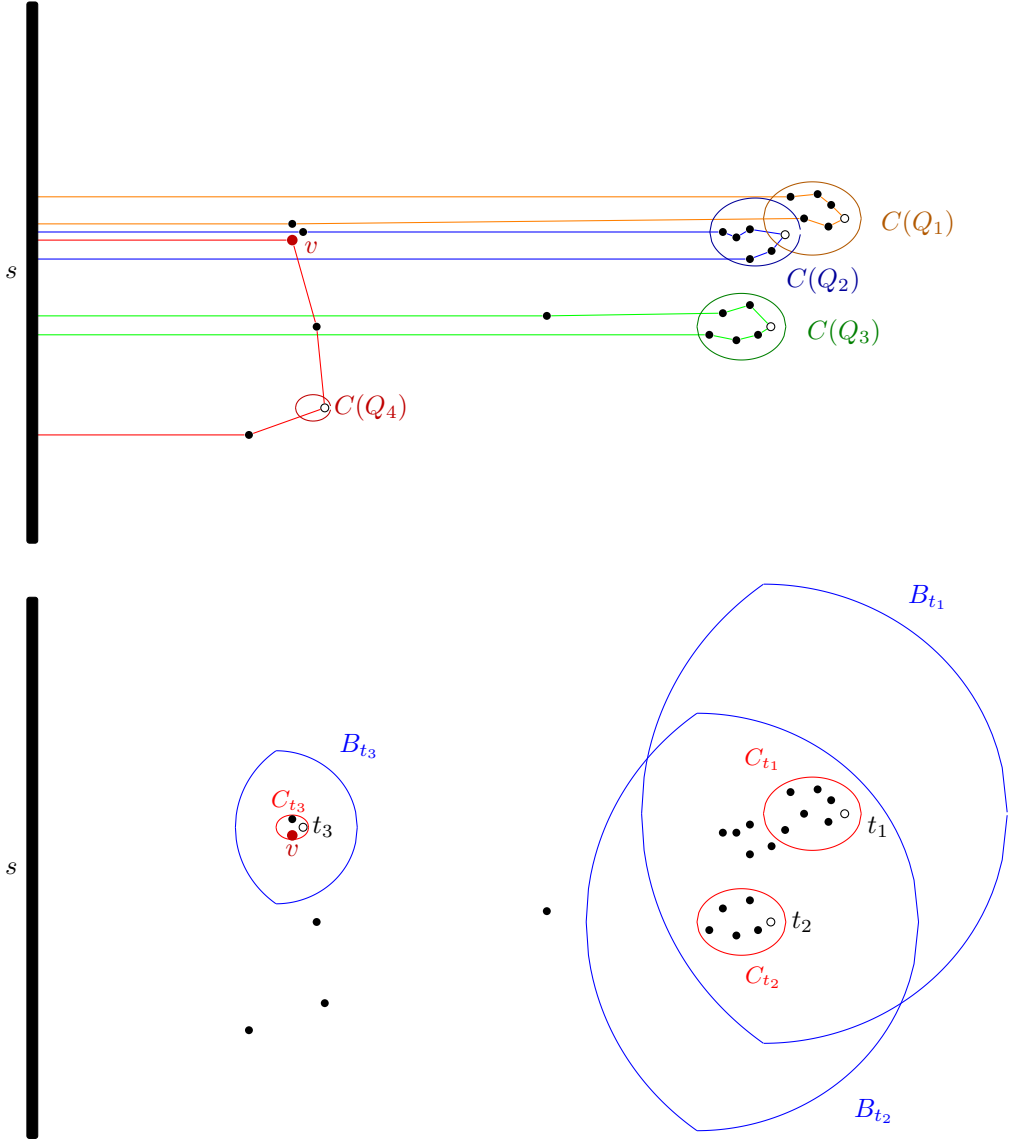
\begin{figure}[p]
	\begin{center}
		\begin{tikzpicture}
\begin{scope}[scale=3.6]
\begin{scope}[shift={(0,2.2)}]

\tikzstyle{vertex}=[black,circle,fill,minimum size=3,inner sep=0pt]
\tikzstyle{vertex2}=[draw=black,circle,minimum size=3,inner sep=0pt]
\tikzstyle{vertexred}=[darkred,circle,fill,minimum size=4,inner sep=0pt]

\node[vertex] (a3) at (2.95,0.25) {};
\node[vertex] (a2) at (2.8,0.28) {};
\node[vertex] (a5) at (2.85,0.2) {};
\node[vertex] (a6) at (2.9,0.29) {};
\node[vertex] (a7) at (2.94,0.17) {};

\node[vertex2] (b4) at (2.78,0.14) {};
\node[vertex] (b3) at (2.73,0.08) {};
\node[vertex] (b2) at (2.65,0.05) {};
\node[vertex] (b5) at (2.65,0.16) {};
\node[vertex] (b6) at (2.55,0.15) {};
\node[vertex] (b7) at (2.6,0.13) {};

\node[vertex] (c4) at (2.6,-0.25) {};
\node[vertex] (c2) at (2.55,-0.15) {};
\node[vertex] (c5) at (2.5,-0.23) {};
\node[vertex] (c1) at (1.9,-0.16) {};
\node[vertex] (c6) at (2.65,-0.12) {};
\node[vertex] (c7) at (2.68,-0.23) {};

\node[vertex] (a1) at (0.96,0.18) {};
\node[vertexred] (d2) at (0.96,0.12) {};
\node[darkred, below] at (1.03,0.15) {$v$};

\node[vertex] (d3) at (1.05,-0.2) {};
\node[vertex2] (d4) at (1.08,-0.5) {};
\node[vertex] (d5) at (0.8,-0.6) {};

\node[vertex2] (a4) at (3,0.2) {};		
\node[vertex2] (c3) at (3/1.1,-0.2) {};		
\node[vertex] (b1) at (1,0.15) {};

\draw[draw=orange] (0,0.28)--(a2)--(a6)--(a3)--(a4)--(a7)--(a5)--(a1)--(0,0.18);
\draw[draw=blue] (0,0.15)--(b1)--(b6)--(b7)--(b5)--(b4)--(b3)--(b2)--(0,0.05);
\draw[draw=green] (0,-0.16)--(c1)--(c2)--(c6)--(c3)--(c7)--(c4)--(c5)--(0,-0.23);
\draw[draw=red] (0,0.12)--(d2)--(d3)--(d4)--(d5)--(0,-0.6);

\draw[black,fill, rounded corners=1pt] (-0.02,-1) rectangle (0.02,1);
\node[black, left] at (-0.02,0) {$s$};

\node[orange!70!black, right] at (3.1,0.18) {$C(Q_1)$};
\node[darkblue, right] at (2.75,-0.03) {$C(Q_2)$};
\node[darkgreen, right] at (3/1.1+0.1,-0.22) {$C(Q_3)$};
\node[darkred, right] at (1.08,-0.5) {$C(Q_4)$};

\draw[draw=orange!70!black] plot[domain=2.7:150/49, samples = 50] (\x, {0.2+sqrt(max(0,1/(39/20+1)^2 * (
	(1/10 + 39/20)^2 *9 + (39/20)^2 * \x^2 - 2 * (1/10+39/20)*3*39/20*\x - (1+39/20)^2 * (\x-3)^2)))});
\draw[draw=orange!70!black] plot[domain=2.7:150/49, samples = 50] (\x, {0.2-sqrt(max(0,1/(39/20+1)^2 * (
	(1/10 + 39/20)^2 *9 + (39/20)^2 * \x^2 - 2 * (1/10+39/20)*3*39/20*\x - (1+39/20)^2 * (\x-3)^2)))});

\draw[draw=darkblue] plot[domain=2.7/(3/2.78):150/49/(3/2.78), samples = 50] (\x, {0.15+sqrt(max(0,1/(39/20+1)^2 * (
	(1/10 + 39/20)^2 *3/(3/2.78)*3/(3/2.78) + (39/20)^2 * \x^2 - 2 * (1/10+39/20)*3/(3/2.78)*39/20*\x - (1+39/20)^2 * (\x-3/(3/2.78))^2)))});
\draw[draw=darkblue] plot[domain=2.7/(3/2.78):150/49/(3/2.78), samples = 50] (\x, {0.15-sqrt(max(0,1/(39/20+1)^2 * (
	(1/10 + 39/20)^2 *3/(3/2.78)*3/(3/2.78) + (39/20)^2 * \x^2 - 2 * (1/10+39/20)*3/(3/2.78)*39/20*\x - (1+39/20)^2 * (\x-3/(3/2.78))^2)))});

\draw[draw=darkgreen] plot[domain=2.7/1.1:150/49/1.1, samples = 50] (\x, {-0.2+sqrt(max(0,1/(39/20+1)^2 * (
	(1/10 + 39/20)^2 *3/1.1*3/1.1 + (39/20)^2 * \x^2 - 2 * (1/10+39/20)*3/1.1*39/20*\x - (1+39/20)^2 * (\x-3/1.1)^2)))});
\draw[draw=darkgreen] plot[domain=2.7/1.1:150/49/1.1, samples = 50] (\x, {-0.2-sqrt(max(0,1/(39/20+1)^2 * (
	(1/10 + 39/20)^2 *3/1.1*3/1.1 + (39/20)^2 * \x^2 - 2 * (1/10+39/20)*3/1.1*39/20*\x - (1+39/20)^2 * (\x-3/1.1)^2)))});	

\draw[draw=darkred] plot[domain=2.7/(3/1.08):150/49/(3/1.08), samples = 50] (\x, {-0.5+sqrt(max(0,1/(39/20+1)^2 * (
	(1/10 + 39/20)^2 *1.08*1.08 + (39/20)^2 * \x^2 - 2 * (1/10+39/20)*1.08*39/20*\x - (1+39/20)^2 * (\x-1.08)^2)))});
\draw[draw=darkred] plot[domain=2.7/(3/1.08):150/49/(3/1.08), samples = 50] (\x, {-0.5-sqrt(max(0,1/(39/20+1)^2 * (
	(1/10 + 39/20)^2 *1.08*1.08 + (39/20)^2 * \x^2 - 2 * (1/10+39/20)*1.08*39/20*\x - (1+39/20)^2 * (\x-1.08)^2)))});	

\end{scope}
\draw[draw=red] plot[domain=2.7:150/49, samples = 50] (\x, {0.2+sqrt(max(0,1/(39/20+1)^2 * (
	(1/10 + 39/20)^2 *9 + (39/20)^2 * \x^2 - 2 * (1/10+39/20)*3*39/20*\x - (1+39/20)^2 * (\x-3)^2)))});
\draw[draw=red] plot[domain=2.7:150/49, samples = 50] (\x, {0.2-sqrt(max(0,1/(39/20+1)^2 * (
	(1/10 + 39/20)^2 *9 + (39/20)^2 * \x^2 - 2 * (1/10+39/20)*3*39/20*\x - (1+39/20)^2 * (\x-3)^2)))});

\draw[draw=blue] plot[domain=9/4:2.7, samples = 50] (\x,{0.2+sqrt(max(0,1 / 9 * \x * \x - (3 - \x) * (3 - \x)))});
\draw[draw=blue] plot[domain=9/4:2.7, samples = 50] (\x,{0.2 - sqrt(max(0,1 / 9 * \x * \x - (3 - \x) * (3 - \x)))});
\draw[draw=blue] plot[domain=18/5:2.7, samples = 50] (\x,{0.2+sqrt(max(0,(18 / 10 - 1 / 3 * \x) * (18 / 10 - 1 / 3 * \x) - (3 - \x) * (3 - \x)))});
\draw[draw=blue] plot[domain=18/5:2.7, samples = 50] (\x,{0.2- sqrt(max(0,(18 / 10 - 1 / 3 * \x) * (18 / 10 - 1 / 3 * \x) - (3 - \x) * (3 - \x)))});

\draw[draw=red] plot[domain=2.7/1.1:150/49/1.1, samples = 50] (\x, {-0.2+sqrt(max(0,1/(39/20+1)^2 * (
	(1/10 + 39/20)^2 *3/1.1*3/1.1 + (39/20)^2 * \x^2 - 2 * (1/10+39/20)*3/1.1*39/20*\x - (1+39/20)^2 * (\x-3/1.1)^2)))});
\draw[draw=red] plot[domain=2.7/1.1:150/49/1.1, samples = 50] (\x, {-0.2-sqrt(max(0,1/(39/20+1)^2 * (
	(1/10 + 39/20)^2 *3/1.1*3/1.1 + (39/20)^2 * \x^2 - 2 * (1/10+39/20)*3/1.1*39/20*\x - (1+39/20)^2 * (\x-3/1.1)^2)))});

\draw[draw=blue] plot[domain=9/4/1.1:2.7/1.1, samples = 50] (\x,{-0.2+sqrt(max(0,1 / 9 * \x * \x - (3/1.1 - \x) * (3/1.1 - \x)))});
\draw[draw=blue] plot[domain=9/4/1.1:2.7/1.1, samples = 50] (\x,{-0.2- sqrt(max(0,1 / 9 * \x * \x - (3/1.1 - \x) * (3/1.1 - \x)))});

\draw[draw=blue] plot[domain=18/5/1.1:2.7/1.1, samples = 50] (\x,{-0.2+sqrt(max(0,(6 * 3/1.1 / 10 - 1 / 3 * \x)^2 - (3/1.1 - \x) * (3/1.1 - \x)))});
\draw[draw=blue] plot[domain=18/5/1.1:2.7/1.1, samples = 50] (\x,{-0.2- sqrt(max(0,(6 * 3/1.1 / 10 - 1 / 3 * \x)^2 - (3/1.1 - \x) * (3/1.1 - \x)))});

\draw[draw=red] plot[domain=2.7/3:150/49/3, samples = 50] (\x, {0.15+sqrt(max(0,1/(39/20+1)^2 * (
	(1/10 + 39/20)^2 *3/3*3/3 + (39/20)^2 * \x^2 - 2 * (1/10+39/20)*3/3*39/20*\x - (1+39/20)^2 * (\x-3/3)^2)))});
\draw[draw=red] plot[domain=2.7/3:150/49/3, samples = 50] (\x, {0.15-sqrt(max(0,1/(39/20+1)^2 * (
	(1/10 + 39/20)^2 *3/3*3/3 + (39/20)^2 * \x^2 - 2 * (1/10+39/20)*3/3*39/20*\x - (1+39/20)^2 * (\x-3/3)^2)))});

\draw[draw=blue] plot[domain=9/4/3:2.7/3, samples = 50] (\x,{0.15+sqrt(max(0,1 / 9 * \x * \x - (3/3 - \x) * (3/3 - \x)))});
\draw[draw=blue] plot[domain=9/4/3:2.7/3, samples = 50] (\x,{0.15- sqrt(max(0,1 / 9 * \x * \x - (3/3 - \x) * (3/3 - \x)))});

\draw[draw=blue] plot[domain=18/5/3:2.7/3, samples = 50] (\x,{0.15+sqrt(max(0,(6 * 3/3 / 10 - 1 / 3 * \x)^2 - (3/3 - \x) * (3/3 - \x)))});
\draw[draw=blue] plot[domain=18/5/3:2.7/3, samples = 50] (\x,{0.15- sqrt(max(0,(6 * 3/3 / 10 - 1 / 3 * \x)^2 - (3/3 - \x) * (3/3 - \x)))});

\tikzstyle{vertex}=[black,circle,fill,minimum size=3,inner sep=0pt]
\tikzstyle{vertex2}=[draw=black,circle,minimum size=3,inner sep=0pt]
\tikzstyle{vertexred}=[darkred,circle,fill,minimum size=4,inner sep=0pt]

\node[vertex] (a3) at (2.95,0.25) {};
\node[vertex] (a2) at (2.8,0.28) {};
\node[vertex] (a5) at (2.85,0.2) {};
\node[vertex] (a6) at (2.9,0.29) {};
\node[vertex] (a7) at (2.94,0.17) {};

\node[vertex] (t) at (2.78,0.14) {};
\node[vertex] (b3) at (2.73,0.08) {};
\node[vertex] (b2) at (2.65,0.05) {};
\node[vertex] (b5) at (2.65,0.16) {};
\node[vertex] (b6) at (2.55,0.13) {};
\node[vertex] (b7) at (2.6,0.13) {};

\node[vertex] (t) at (2.6,-0.25) {};
\node[vertex] (t) at (2.55,-0.15) {};
\node[vertex] (t) at (2.5,-0.23) {};
\node[vertex] (t) at (1.9,-0.16) {};
\node[vertex] (c6) at (2.65,-0.12) {};
\node[vertex] (c7) at (2.68,-0.23) {};

\node[vertex] (t) at (0.96,0.18) {};
\node[vertexred] (t) at (0.96,0.12) {};

\node[vertex] (t) at (1.05,-0.2) {};
\node[vertex] (t) at (1.08,-0.5) {};
\node[vertex] (t) at (0.8,-0.6) {};
\node[darkred, below] at (0.96,0.12) {$v$};

\node[vertex2] (t) at (3,0.2) {};			
\node[black, right] at (3.05,0.15) {$t_1$};
\node[vertex2] (t) at (3/1.1,-0.2) {};			
\node[black, right] at (2.77,-0.2) {$t_2$};
\node[vertex2] (t) at (1,0.15) {};			
\node[black, right] at (1,0.15) {$t_3$};		
\draw[black,fill, rounded corners=1pt] (-0.02,-1) rectangle (0.02,1);
\node[black, left] at (-0.02,0) {$s$};

\node[blue, right] at (3.2,1) {$B_{t_1}$};
\node[blue, right] at (3,-0.9) {$B_{t_2}$};
\node[blue, right] at (1.1,0.45) {$B_{t_3}$};
\node[red, left] at (2.8,0.4) {\small{$C_{t_1}$}};
\node[red, below] at (2.7,-0.33) {\small{$C_{t_2}$}};
\node[red, above] at (0.95,0.17) {\small{$C_{t_3}$}};

\end{scope}
\end{tikzpicture}
\caption{The upper figure shows an instance $\Ieuscr=(V,s,c,d)$ of \textsc{Capacitated Vehicle Routing} with a solution $\Qscr$. The customer $v$ (shown in red) has demand $\frac{4}{7}$ and all other customers have demand $\frac{1}{7}$. The peak clusters are drawn for $\rho = \frac{1}{10}$ and $\tau= \frac{1}{6}$ and Euclidean distances. The peaks of the tours are shown as empty circles
Note that the peak clusters $C(Q_1)$, $C(Q_2)$ and $C(Q_3)$ are large. \newline
The lower figure shows the corresponding output $\Jeuscr = (V ,\bar T, s, c, \Tscr, b)$ of Algorithm~\ref{algo:constructing_vrpwtg_instance}. 
The targets $\bar T=\{t_1,t_2,t_3\}$ (shown as empty circles here) are partitioned into target groups $\Tscr=\{\{t_1,t_2\},\{t_3\}\}$. 
A solution of $\Jeuscr$ consists of $b(\{t_1,t_2\}) = 6$ tours ending in $\{t_1,t_2\}$ and $b(\{t_3\})=2$ tours ending in $t_3$. 
We will see in the next subsection that it does not harm that $t_3$ was selected as a target although it does not belong to any peak cluster. 
Note that $\peak(Q_2)$ was not identified as a target even though $Q_2$ has a large peak cluster. 
However, the peak cluster of $Q_2$ lies completely in $B_{\{t_1,t_2\}}$, which will guarantee that $b(\{t_1,t_2\})$ is chosen large enough so that $\Qscr$ can be easily transformed into a weak fractional solution of $\Jeuscr$ that costs not much more than $\Qscr$.
}\label{fig:example_reduction_vrpwtg}
	\end{center}
\end{figure}

\begin{algorithm}[H]
\caption{Constructing an instance of \textsc{Vehicle Routing with Target Groups} \label{algo:constructing_vrpwtg_instance}}

\vspace*{2mm}
\textbf{Input:} Instance $(V,s,c,d)$ of \textsc{Capacitated Vehicle Routing}. \\
\textbf{Output:} Instance $(V,\bar T, s, c, \Tscr, b)$ of \textsc{Vehicle Routing with Target Groups}.\\

\begin{algorithmic}[1]
\State Number the vertices in $V$ such that $c(s,v_1) \le c(s,v_2) \le \dots \le c(s,v_n)$. \label{step:sort_vertices}
\State Initialize $\bar T:=\emptyset$ and $Y:= \emptyset$.
\For{$v=v_n, v_{n-1},\ldots,v_1$}
\State Define $C_{v}:= \left\{u \in V : c(u, v) + \kappa \cdot \detour(u, v)  < \rho \cdot c(s,v)\right\}$. 
\If{$v \notin Y$ and $d(C_v \setminus Y) > 1 -\tau$}
\State Set $\bar T:= \bar T \cup \{v\}$. Set $Y:= Y \cup C_v$.
\EndIf
\EndFor
\State For $t \in \bar T$ define $B_t := \left\{ v\in V : c(v,t) < \frac{3\rho}{1-\rho}\cdot c(s,v),\  c(v,t) < 6\rho \cdot c(s,t) - \frac{3\rho}{1-\rho}\cdot c(s,v) \right\}$.
\State Compute the edge set $E_{B}$ that consists of all edges $\{t,t'\}$ with $t,t'\in \bar T$ for which
$
B_{t} \cap B_{t'} \neq \emptyset
$.
\State Let $\Tscr$ be the set of vertex sets of connected components of $(\bar T,E_{B})$.
\State For $\Tg   \in \Tscr$ define $B_{\Tg  }:= \bigcup\limits_{t \in \Tg  } B_{t}$.
\State Define $b:\Tscr\rightarrow 2 \cdot \mathbb{Z}_{>0}$ by 
\[
b(\Tg  ):= 2 \cdot  \left \lfloor \frac{d(B_{\Tg  })}{1-\tau} \right \rfloor \text{  } \forall\ \Tg   \in \Tscr.
\]
\State Output $(V,\bar T, s, c, \mathcal{T}, b)$.
\end{algorithmic}
\end{algorithm}

Note that $C(Q) \subseteq C_{\peak(Q)}$ for every tour $Q$ (cf.\ Definition~\ref{def:peak_cluster}). 
As mentioned above, the sets $B_t$ are chosen such that every large peak cluster is contained in one of these sets.
We will prove this below in Lemma~\ref{lem:guessing_large_clusters}.
Before, we show that the sets $B_t$ are chosen such that the following two inequalities hold (which we will need later).

\begin{lemma}\label{lemma:bound_distance_in_B_t}
	Let $t\in \bar T$ and $v\in B_t$. Then $c(v,t)< \frac{3\rho}{1-\rho} \cdot c(s,v)$. 
\end{lemma}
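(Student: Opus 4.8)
This statement is immediate from the definition of the set $B_t$ given in Algorithm~\ref{algo:constructing_vrpwtg_instance}, line~9. There, $B_t$ is defined as the set of customers $v \in V$ satisfying \emph{two} inequalities simultaneously, the first of which is exactly $c(v,t) < \frac{3\rho}{1-\rho}\cdot c(s,v)$. Hence the plan is simply to unpack this definition: any $v \in B_t$ satisfies, in particular, the first defining inequality, which is precisely the claimed bound. No estimate or case distinction is needed, and there is no real obstacle; the only point is to record this first inequality separately for convenient reference later (the second inequality in the definition of $B_t$ will be used elsewhere, e.g.\ in the companion bound controlling distances within a target group and in Lemma~\ref{lem:guessing_large_clusters}).

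\textbf{Why the lemma is isolated this way.} Although trivial, it is worth stating separately because in subsequent arguments one frequently needs to replace a distance $c(v,t)$ from a customer to a target it is assigned to by a multiple of $c(s,v)$ (the customer's distance from the depot), for instance when bounding detours of extended walks or the cost of the matching in Section~\ref{sect:matching}. Having this clean inequality available avoids repeatedly re-deriving it from the definition of $B_t$.
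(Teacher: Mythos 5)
Your proposal is correct and matches the paper's proof exactly: the paper also simply notes that the bound follows directly from the definition of $B_t$, whose first defining inequality is precisely the claimed estimate. Nothing further is needed.
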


\begin{proof}
	Follows directly by definititon of $B_t$.
\end{proof}

\begin{lemma}\label{lemma:bound_matching_edges}
	Let $t,t'\in \bar T$ such that $B_t\cap B_{t'}\not=\emptyset$.
	Then $c(t,t')< 6\rho \cdot \min\{c(s,t),c(s,t')\}$.
\end{lemma}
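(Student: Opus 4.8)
The plan is to unpack the two-part definition of $B_t$ directly and combine it with the triangle inequality; no clustering machinery is needed. Since $B_t\cap B_{t'}\neq\emptyset$ by hypothesis, I would fix a vertex $v\in B_t\cap B_{t'}$. Membership $v\in B_t$ gives the two inequalities $c(v,t)<\frac{3\rho}{1-\rho}\cdot c(s,v)$ and $c(v,t)<6\rho\cdot c(s,t)-\frac{3\rho}{1-\rho}\cdot c(s,v)$, and symmetrically $v\in B_{t'}$ gives $c(v,t')<\frac{3\rho}{1-\rho}\cdot c(s,v)$ and $c(v,t')<6\rho\cdot c(s,t')-\frac{3\rho}{1-\rho}\cdot c(s,v)$.

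Next I would apply the triangle inequality $c(t,t')\le c(t,v)+c(v,t')$ and then feed in one inequality of each type so that the $\frac{3\rho}{1-\rho}\cdot c(s,v)$ contributions cancel. Concretely: bounding $c(t,v)$ by the \emph{second} $B_t$-inequality (with the $-\frac{3\rho}{1-\rho}c(s,v)$ term) and $c(v,t')$ by the \emph{first} $B_{t'}$-inequality (with the $+\frac{3\rho}{1-\rho}c(s,v)$ term) yields $c(t,t')<6\rho\cdot c(s,t)$. Swapping the roles of $t$ and $t'$—using the second $B_{t'}$-inequality for $c(v,t')$ and the first $B_t$-inequality for $c(t,v)$—yields $c(t,t')<6\rho\cdot c(s,t')$. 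Taking the minimum of the two bounds gives $c(t,t')<6\rho\cdot\min\{c(s,t),c(s,t')\}$, as claimed. (Here $\frac{3\rho}{1-\rho}$ is a well-defined positive number since $\rho\le\frac16<1$.)

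I do not expect any real obstacle; the argument is a two-line computation. The only point worth emphasizing is conceptual: the term $\frac{3\rho}{1-\rho}\cdot c(s,v)$ appears in the definition of $B_t$ with \emph{opposite signs} in its two defining conditions precisely so that it cancels when one pairs a "$6\rho c(s,t)$-type" bound for one target with a "$\frac{3\rho}{1-\rho}c(s,v)$-type" bound for the other. This is exactly why $B_t$ is defined as an intersection of these two conditions rather than simply as a ball around $t$, and it is what makes the matching-cost bound in Section~\ref{sect:matching} go through later.
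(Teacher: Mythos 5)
Your proof is correct and is essentially the same as the paper's: pick $w\in B_t\cap B_{t'}$, bound $c(t,t')\le c(w,t)+c(w,t')$ using the second defining condition of $B_t$ and the first defining condition of $B_{t'}$ so that the $\frac{3\rho}{1-\rho}\,c(s,w)$ terms cancel, and conclude by symmetry in $t$ and $t'$. No further comments are needed.
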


\begin{proof}
	Let $w\in B_t \cap B_{t'}$.
	Then 
	\[
	c(t,t') \le c(w,t) + c(w,t') <  \left(6\rho \cdot c(s,t) - \frac{3\rho}{1-\rho}\cdot c(s,w)\right) +\frac{3\rho}{1-\rho}\cdot c(s,w) = 6\rho \cdot c(s,t).
	\]
\end{proof}

For the proof of Lemma~\ref{lem:guessing_large_clusters} we will need the following lemma.

\begin{lemma}\label{lem:possible_simplification}
Let $x,y\in V$ such that $C_x \cap C_y \ne \emptyset$.
Then \[c(x,y) < 2 \rho \cdot \min\{c(s,x),c(s,y)\}.\]
\end{lemma}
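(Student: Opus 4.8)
The plan is to use the vertex $w\in C_x\cap C_y$ together with the triangle inequality $c(x,y)\le c(w,x)+c(w,y)$. Writing out $w\in C_x$ and $w\in C_y$ and discarding the (non-negative) detour terms immediately gives $c(w,x)<\rho\cdot c(s,x)$ and $c(w,y)<\rho\cdot c(s,y)$, hence $c(x,y)<\rho\bigl(c(s,x)+c(s,y)\bigr)$ --- but this is too weak when $c(s,y)\gg c(s,x)$. The point of the argument is to \emph{keep} the detour term in $C_y$ and use it to trade the dependence on $c(s,y)$ first for a dependence on $c(s,w)$ and then, via the defining inequality of $C_x$, for a dependence on $c(s,x)$.

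Concretely: without loss of generality $c(s,x)\le c(s,y)$, so $\min\{c(s,x),c(s,y)\}=c(s,x)$; we may also assume $c(s,x)>0$, since otherwise $C_x=\emptyset$ and the hypothesis is vacuous. Fix $w\in C_x\cap C_y$. Expanding $w\in C_y$ with $\detour(w,y)=c(w,y)+c(s,w)-c(s,y)$ gives $(1+\kappa)\,c(w,y)+\kappa\,c(s,w)<(\rho+\kappa)\,c(s,y)$, and since $\detour(w,y)\ge 0$ we have $c(s,y)\le c(s,w)+c(w,y)$; substituting this on the right and simplifying yields the clean bound
\[
(1-\rho)\cdot c(w,y) \ < \ \rho\cdot c(s,w).
\]
Expanding $w\in C_x$ in the same way gives $(1+\kappa)\,c(w,x)+\kappa\,c(s,w)<(\rho+\kappa)\,c(s,x)$, i.e.
\[
c(s,w) \ < \ \tfrac{1}{\kappa}\bigl((\rho+\kappa)\,c(s,x)-(1+\kappa)\,c(w,x)\bigr).
\]
Combining the last two displays bounds $c(w,x)+c(w,y)$ by an affine function of $c(w,x)$ whose coefficient of $c(w,x)$ is $1-\tfrac{\rho(1+\kappa)}{(1-\rho)\kappa}$; using $\rho\le\tfrac16$ (so $\tfrac{\rho}{1-\rho}\le\tfrac15$) and $\kappa>\tfrac32$ (so $\tfrac{1+\kappa}{\kappa}<\tfrac53$), this coefficient is positive. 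Hence the bound is largest when $c(w,x)$ is largest, and $w\in C_x$ forces $c(w,x)\le c(w,x)+\kappa\cdot\detour(w,x)<\rho\cdot c(s,x)$. Substituting $c(w,x)=\rho\,c(s,x)$, the numerator $-\rho(1+\kappa)+(\rho+\kappa)$ simplifies to $\kappa(1-\rho)$ and everything telescopes to give $c(w,x)+c(w,y)<2\rho\cdot c(s,x)$. Therefore $c(x,y)\le c(w,x)+c(w,y)<2\rho\cdot c(s,x)=2\rho\cdot\min\{c(s,x),c(s,y)\}$.

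The main obstacle is the asymmetry just described: a single careless triangle-inequality step loses a constant factor, and one must notice that the detour term with the particular value of $\kappa$ is exactly strong enough (with no slack) to perform the two substitutions $c(s,y)\rightsquigarrow c(s,w)\rightsquigarrow c(s,x)$; indeed the resulting inequality is asymptotically tight. The remaining work --- checking the sign of the coefficient of $c(w,x)$ and the final algebraic cancellation --- is routine arithmetic in $\rho$ and $\kappa$.
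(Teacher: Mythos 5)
Your proof is correct and follows essentially the same route as the paper: fix $w\in C_x\cap C_y$, use $c(x,y)\le c(w,x)+c(w,y)$, derive $c(w,y)<\frac{\rho}{1-\rho}\,c(s,w)$ from the $C_y$-membership, keep the detour term in the $C_x$-membership, and invoke the sign condition $\frac{\rho}{1-\rho}\le\frac{\kappa}{1+\kappa}$ (from $\rho\le\frac16$, $\kappa>\frac32$) to arrive at $2\rho\,c(s,x)$. The only difference is bookkeeping: you eliminate $c(s,w)$ via the upper bound from the $C_x$-inequality and then maximize over $c(w,x)<\rho\,c(s,x)$, whereas the paper eliminates $c(s,w)$ via the lower bound $c(s,w)>(1-\rho)\,c(s,x)$ --- an equivalent rearrangement of the same linear inequalities.
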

\begin{proof}
By symmetry it suffices to show $c(x,y) < 2 \rho \cdot c(s,x)$.
Let $w\in C_x \cap C_y$.

Because $w\in C_y$, we have $c(s,w) \ge c(s,y) - c(w,y) > c(s,y) - \rho \cdot c(s,y)$, implying
\begin{equation}\label{eq:bound_w_y}
c(w,y) < \rho \cdot c(s,y) < \frac{\rho}{1-\rho} \cdot c(s,w).
\end{equation}
Moreover, because $w\in C_x$, we have 
\[
 c(w,x) + \kappa \cdot (c(w,x)- c(s,x) + c(s,w)) = c(w,x) + \kappa \cdot \detour(w,x) < \rho \cdot c(s,x),
\]
implying
\begin{equation}\label{eq:bound_w_x}
c(w,x) < \frac{\kappa + \rho}{1+\kappa} \cdot c(s,x) - \frac{\kappa}{1+\kappa} \cdot c(s,w).
\end{equation}
Next we combine \eqref{eq:bound_w_x} and \eqref{eq:bound_w_y}.
Using $\frac{\rho}{1-\rho} \le \frac{\kappa}{1+\kappa}$ (which holds because $\rho\le\frac{1}{6}$ and $\kappa>\frac{3}{2}$)
as well as $c(s,w) \ge c(s,x) - c(w,x) > (1-\rho)\cdot c(s,x)$, we obtain
\begin{align*}
c(x,y) \le&\ c(w,x) + c(w,y) \\
 <&\ \frac{\kappa + \rho}{1+\kappa} \cdot c(s,x) - \frac{\kappa}{1+\kappa} \cdot c(s,w) + \frac{\rho}{1-\rho} \cdot c(s,w) \\
<&\ \frac{\kappa + \rho}{1+\kappa} \cdot c(s,x) + \left( - \frac{\kappa}{1+\kappa}+\frac{\rho}{1-\rho}\right)(1-\rho)\cdot c(s,x) \\
=&\ 2\rho \cdot c(s,x).
\end{align*}
\end{proof}

\begin{lemma}\label{lem:guessing_large_clusters}
	Let $\Ieuscr=(V,s,c,d)$ be an instance of \textsc{Capacitated Vehicle Routing}, and let $(V,\bar T, s, c, \Tscr, b)$ be the output of Algorithm~\ref{algo:constructing_vrpwtg_instance} with input $\Ieuscr$.
	Then for every tour $Q$ with large peak cluster there exists a target $t\in \bar T$ such that 
	$C(Q) \subseteq B_{t}$.
\end{lemma}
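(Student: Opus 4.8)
Write $p:=\peak(Q)$. I would begin by recording two easy facts. Since $Q$ has a large peak cluster, $C(Q)\neq\emptyset$; as the defining inequality $c(u,p)+\kappa\cdot\detour(u,p)<\rho\cdot c(s,p)$ is unsatisfiable when $c(s,p)=0$, this forces $c(s,p)>0$. Also $C(Q)\subseteq C_{p}$, since both sets are cut out by the same inequality. The plan then has two parts: locate a target $t\in\bar T$ close to $p$, and check that $C(Q)\subseteq B_{t}$.

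For the first part, the plan is to prove that there is a target $t\in\bar T$ with $c(s,t)\ge c(s,p)$ and $c(p,t)<2\rho\cdot c(s,p)$. I would look at the iteration of the \textbf{for}-loop of Algorithm~\ref{algo:constructing_vrpwtg_instance} with $v=p$ and let $Y_{0}$ be the value of $Y$ at the start of that iteration, distinguishing three cases. If $p\in\bar T$, take $t:=p$, so $c(p,t)=0$. If $p\in Y_{0}$, then $p$ was added to $Y$ by an earlier iteration that selected a target $v_{j}$ with $p\in C_{v_{j}}$; since $v_{j}$ is processed before $p$ and the vertices are sorted by distance to $s$ we get $c(s,v_{j})\ge c(s,p)$, and since $p\in C_{p}\cap C_{v_{j}}\neq\emptyset$ (using $p\in C_{p}$, which holds because $c(s,p)>0$), Lemma~\ref{lem:possible_simplification} gives $c(p,v_{j})<2\rho\cdot c(s,p)$; take $t:=v_{j}$. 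In the remaining case $p\notin Y_{0}$ but $d(C_{p}\setminus Y_{0})\le 1-\tau$; since $d(C(Q))>1-\tau$ and $C(Q)\setminus Y_{0}\subseteq C_{p}\setminus Y_{0}$, there is a vertex $u\in C(Q)\cap Y_{0}$, and applying the previous argument to $u$ (which lies in $C(Q)\subseteq C_{p}$ and in some $C_{v_{j}}$ with $v_{j}$ a target processed before $p$) again produces $t=v_{j}$ with the two desired properties.

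For the second part, fix $v\in C(Q)$ and abbreviate $a:=c(s,p)$, $b:=c(s,v)$, $m:=c(v,p)$. From $v\in C_{p}$ and $\detour(v,p)=m+b-a\ge 0$ (triangle inequality) one extracts $m<\rho a$ and, by writing the inequality out, $(1+\kappa)m+\kappa b<(\rho+\kappa)a$. Using $c(v,t)\le m+c(p,t)<m+2\rho a$, the first defining inequality of $B_{t}$ follows once we check $m+2\rho a<\tfrac{3\rho}{1-\rho}(a-m)$ (then $c(v,t)<m+2\rho a<\tfrac{3\rho}{1-\rho}(a-m)\le\tfrac{3\rho}{1-\rho}b$), and this is equivalent to $m<\rho a$. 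For the second defining inequality it suffices, since $c(s,t)\ge a$, to show $m+\tfrac{3\rho}{1-\rho}b<4\rho a$; substituting $b<\bigl((\rho+\kappa)a-(1+\kappa)m\bigr)/\kappa$ makes the left-hand side a linear function of $m$ whose slope $1-\tfrac{3\rho(1+\kappa)}{\kappa(1-\rho)}$ is positive because $\rho\le\tfrac16$ and $\kappa>\tfrac32$, so plugging in $m<\rho a$ and simplifying gives exactly $4\rho a$. Hence $v\in B_{t}$, and therefore $C(Q)\subseteq B_{t}$.

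The main obstacle will be the first part: the clustering algorithm is greedy and may fail to pick $\peak(Q)$ itself as a target, so the real content is showing that whatever target ``claimed'' $p$ or a point of its cluster is still within $2\rho\cdot c(s,p)$ of $p$ --- which is precisely what Lemma~\ref{lem:possible_simplification} gives, once one uses the monotone processing order. The computation in the second part is routine but tight (cf.\ Figure~\ref{fig:example_tight}): the radii $\tfrac{3\rho}{1-\rho}c(s,v)$ and $6\rho\,c(s,t)$ in the definition of $B_{t}$ are calibrated exactly so that these two inequalities hold whenever $\kappa>\tfrac32$ and $\rho\le\tfrac16$, so the argument genuinely uses both the choice of constants and the expansion of the detour.
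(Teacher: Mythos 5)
Your proposal is correct and follows essentially the same route as the paper: first use the processing order of Algorithm~\ref{algo:constructing_vrpwtg_instance} together with Lemma~\ref{lem:possible_simplification} to find a target $t$ with $c(s,t)\ge c(s,\peak(Q))$ and $c(\peak(Q),t)<2\rho\cdot c(s,\peak(Q))$, then verify both defining inequalities of $B_t$ for every $v\in C(Q)$ using $\kappa>\frac{3}{2}$ and $\rho\le\frac{1}{6}$. The only differences are cosmetic (your case split on the status of $\peak(Q)$ versus the paper's split on $Y\cap C(Q)\neq\emptyset$, and a slightly reorganized but equivalent algebraic verification in the second part).
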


\begin{proof}
First, we show that there exists a vertex $t\in \bar T$ such that
\begin{enumerate}[(i)]\itemsep0pt
\item $C(Q) \cap C_t \ne \emptyset$, \label{item:clusters_intersect_C_sets} and
\item $c(s,\peak(Q)) \leq c(s,t)$. \label{item:peak_below_guessed_center}
\end{enumerate}
Consider the iteration of Algorithm~\ref{algo:constructing_vrpwtg_instance} where $v=\peak(Q)$.
If at this point of the algorithm $Y\cap C(Q) \ne \emptyset$, there is a vertex $t\in \bar T$ with $C(Q) \cap C_t \ne \emptyset$.
Moreover, we then have $c(s,\peak(Q)) \leq c(s,t)$ because $t$ was considered before $\peak(Q)$ in the for-loop of Algorithm~\ref{algo:constructing_vrpwtg_instance}.
Hence, it remains to consider the case where $Y\cap C(Q) = \emptyset$ in  the iteration of Algorithm~\ref{algo:constructing_vrpwtg_instance} where $v=\peak(Q)$.
Then $C(Q) \subseteq C_v \setminus Y$. Since $d(C(Q)) > 1-\tau$ by assumption, the vertex $v=\peak(Q)$ is added to $\bar T$ and
$C(Q) \subseteq C_v \subseteq B_v$.

Now we show that for this vertex $t \in \bar T$ we have $C(Q) \subseteq B_t$. By (i) we have $C(Q) \cap C_t \ne \emptyset$.
Note that $C(Q) \subseteq C_{\peak(Q)}$.
Hence, by Lemma~\ref{lem:possible_simplification}, we have $c(\peak(Q),t) < 2 \rho \cdot c(s,\peak(Q))$.

Let $v\in C(Q)$. Then we have $c(s,v) > (1-\rho) \cdot c(s,\peak(Q))$ and 
\[
 c(v,t) \le c(v,\peak(Q)) + c(\peak(Q),t) < \rho \cdot c(s,\peak(Q)) + 2 \rho \cdot c(s,\peak(Q)),
\]
implying $c(v,t) < \frac{3\rho}{1-\rho} c(s,v)$,
which is the first condition for membership of $v$ in $B_t$. We now show the second condition.

First,
\begin{align*}
\rho \cdot c(s,\peak(Q)) >&\ c(v,\peak(Q)) + \kappa \cdot \detour(v, \peak(Q)) \\
    =&\  (1+\kappa) \cdot c(v,\peak(Q)) + \kappa \cdot (c(s,v) - c(s,\peak(Q))),
\end{align*}
implying
\begin{align*}
c(v,\peak(Q)) <&\ \frac{\rho + \kappa}{1+\kappa} \cdot c(s,\peak(Q)) - \frac{\kappa}{1+\kappa} \cdot c(s,v) \\
=&\ \rho \cdot c(s,\peak(Q)) + \frac{\kappa \cdot (1-\rho)}{1+\kappa} \cdot c(s,\peak(Q)) - \frac{\kappa}{1+\kappa} \cdot c(s,v) \\
=&\ \rho \cdot c(s,\peak(Q)) - \frac{\kappa \cdot (1-\rho)}{1+\kappa} \cdot \left(\frac{c(s,v)}{1-\rho} - c(s,\peak(Q))  \right) \\
\le&\  \rho \cdot c(s,\peak(Q)) - 3 \rho \cdot \left(\frac{c(s,v)}{1-\rho} - c(s,\peak(Q))  \right)
\end{align*}
because $\frac{\kappa \cdot (1-\rho)}{1+\kappa} \ge 3 \rho$ (which holds since $\kappa>\frac{3}{2}$ and $\rho\le\frac{1}{6}$) and
$c(s,v) > (1-\rho)\cdot c(s,\peak(Q))$.
Recall that  $c(\peak(Q),t) < 2 \rho \cdot c(s,\peak(Q))$. This yields
\begin{align*}
c(v,t) \le&\ c( \peak(Q),t) + c(v, \peak(Q)) \\
    <&\ 2 \rho \cdot c(s,\peak(Q)) + \rho \cdot c(s,\peak(Q)) - 3 \rho \cdot \left(\frac{c(s,v)}{1-\rho} - c(s,\peak(Q))  \right) \\
    \le&\ 6\rho \cdot c(s,t) - \frac{3\rho}{1-\rho} \cdot c(s,v).
\end{align*}
because $c(s,\peak(Q)) \le c(s,t)$ by \eqref{item:peak_below_guessed_center}.
\end{proof}

Figure~\ref{fig:example_best_possible} shows that the estimates in Lemma~\ref{lemma:bound_distance_in_B_t} and Lemma~\ref{lemma:bound_matching_edges} are best possible 
for any choice of the sets $B_t$ such that $C_t \cap C(Q) \neq \emptyset$ implies $ C(Q) \subseteq B_t$ for each target $t$ and each tour $Q$, which we use to prove Lemma~\ref{lem:guessing_large_clusters}.

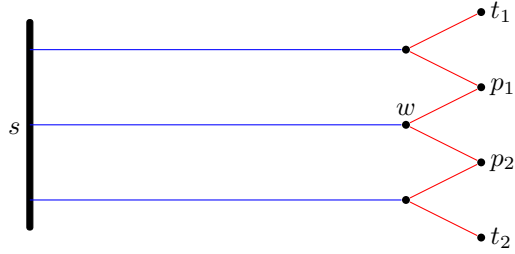
\begin{figure}[ht]
	\begin{center}
	\begin{tikzpicture}
	\begin{scope}[scale=2]

	\tikzstyle{vertex}=[black,circle,fill,minimum size=3,inner sep=0pt]
	\tikzstyle{vertex2}=[draw=blue,circle,minimum size=3,inner sep=0pt]	
	\node[vertex] (1) at (3,0.75) {};	
	\node[vertex] (2) at (3,0.25) {};
	\node[vertex] (3) at (3,-0.25) {};
	\node[vertex] (4) at (3,-0.75) {};
	\node[vertex] (5) at (2.5,0.5) {};
	\node[vertex] (6) at (2.5,0) {};
	\node[vertex] (7) at (2.5,-0.5) {};
	\draw[black,fill, rounded corners=1pt] (-0.02,-0.7) rectangle (0.02,0.7);
	\node[black, left] at (0,-0.02) {$s$};
	
	\node[black, right] at (3,0.75) {$t_1$};
	\node[black, right] at (3,0.25) {$p_1$};
	\node[black, right] at (3,-0.25) {$p_2$};
	\node[black, right] at (3,-0.75) {$t_2$};
	\node[black, above] at (2.5,0) {$w$};
	
	\draw[draw=blue] (0,0.5)--(5);
	\draw[draw=blue] (0,0)--(6);
	\draw[draw=blue] (0,-0.5)--(7);
	\draw[draw=red] (1)--(5)--(2)--(6)--(3)--(7)--(4);

	\end{scope}
	\end{tikzpicture}
	\caption{Example showing that the estimates in Section~\ref{sect:guessing_clusters} are tight.
	Blue edges have length $(1- \rho)$ and red edges have length slightly smaller than $\rho$. 
	The other distances are given by shortest paths in the drawn graph. 
Let $t_1, t_2$ be targets and $p_1,p_2$ be peaks of tours with large peak cluster. Assume that the sets $B_t$ are chosen such that $C_t \cap C(Q) \neq \emptyset$ implies $ C(Q) \subseteq B_t$ for each target $t$ and each tour $Q$.  Then $w \in B_{t_1} \cap B_{t_2}$. Hence, the estimates in Lemma~\ref{lemma:bound_distance_in_B_t} and Lemma~\ref{lemma:bound_matching_edges} are best possible since $c(w,t_1)$ is close to $\sfrac{3 \rho}{1 - \rho} \cdot c(s,w)$ and $c(t_1,t_2)$ is close to $6 \rho \cdot c(s,t_1)= 6\rho \cdot c(s,t_2)$.}\label{fig:example_best_possible}
\end{center}
\end{figure}	

\subsection{Matching paths}\label{sect:matching}

We will turn a solution $\Pscr$ to the \textsc{Vehicle Routing with Target Groups} instance
into a traveling salesman tour by connecting the endpoints of two paths in $\Pscr$ by a matching edge.
Next, we bound the cost of that matching.

We need the following well-known lemma.
\begin{lemma}\label{lem:bound_matching_by_tree}
Let $V$ be a finite set of vertices and let $c :{V \choose 2} \to  \mathbb{R}_{\ge 0}$ be a semi-metric.
Let $(V,S)$ be a spanning tree.
Then for any set $U\subseteq V$ of even cardinality, there is a perfect matching on $U$ with cost at most $c(S)$.
\end{lemma}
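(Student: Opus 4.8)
The plan is to argue by induction on $|U|$, using the tree structure to peel off one pair of vertices at a time while charging the matching edge to a path in the tree, and crucially reusing (doubling) no edge of $S$ across different pairs. First I would dispose of the trivial case $U=\emptyset$. For the inductive step, pick a leaf $\ell$ of the tree $(V,S)$ (which exists since the tree is finite and nonempty). If $\ell\notin U$, then $\ell$ has degree $1$ in $S$, so deleting $\ell$ leaves a spanning tree on $V\setminus\{\ell\}$ whose cost is at most $c(S)$, and $U\subseteq V\setminus\{\ell\}$ still has even cardinality; apply induction. So the interesting case is $\ell\in U$.

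When $\ell\in U$, let $P$ be the unique path in the tree from $\ell$ to the nearest other vertex $u\in U$ (measuring ``nearest'' by the number of tree edges, or equivalently just take the first vertex of $U$ encountered when walking away from $\ell$ along the tree). All internal vertices of $P$ lie outside $U$. Now I would contract or delete the edges of $P$: more precisely, match $\ell$ to $u$ at cost $c(\ell,u)\le c(P)$ by the triangle inequality, remove the vertices of $V(P)\setminus\{u\}$ from the tree (each internal vertex of $P$ other than $\ell$ has degree $2$ in $S$ restricted to $P$, and I should be a little careful: removing a path from a tree can disconnect it). The cleaner way is: let $S'$ be the forest obtained from $S$ by deleting the edge set $E(P)$ and the vertices $V(P)\setminus\{u\}$; then reconnect $S'$ into a spanning tree $S''$ on $V\setminus(V(P)\setminus\{u\})$ of cost $c(S'')\le c(S)-c(P)$. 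Wait — reconnecting could cost more. The right move is instead the standard one: delete a leaf edge repeatedly.

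Let me restate the clean approach. I would prove the stronger statement by induction: for a forest (not just a tree) $(V,S)$ and $U\subseteq V$ with $|U\cap C|$ even for every connected component $C$ of $(V,S)$, there is a perfect matching on $U$ of cost at most $c(S)$. Now repeatedly pick a leaf $\ell$ of the forest with incident edge $e=\{\ell,w\}$. If $\ell\notin U$: delete $\ell$ and $e$; the parity condition is preserved (we may have merged nothing), recurse, and the matching costs at most $c(S)-c(e)\le c(S)$. If $\ell\in U$: by the parity condition the component of $\ell$ contains another vertex of $U$, so $\ell$ is not an isolated vertex, hence $w$ exists; delete $\ell$ and $e$ and \emph{toggle membership of $w$ in $U$}, i.e.\ replace $U$ by $U\symdiff\{\ell,w\}$. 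Each component still has an even number of $U$-vertices. Recurse to get a matching $M'$ of cost $\le c(S)-c(e)$ on the new $U$. If $w$ was added to $U$, it is matched by $M'$ to some $w'$; replace that edge $\{w,w'\}$ by $\{\ell,w'\}$, paying $c(\ell,w')\le c(\ell,w)+c(w,w')=c(e)+c(w,w')$ by the triangle inequality, so the total is $\le c(e)+c(M')\le c(S)$. If $w$ was removed from $U$ (it was already in $U$), then $\ell$ and $w$ are both now matched by nothing, so just add the edge $\{\ell,w\}$, paying $c(e)$, total $\le c(e)+c(M')\le c(S)$.

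The main obstacle is bookkeeping the parity invariant correctly through the leaf-removal process and making sure the ``toggle'' operation keeps every component's $U$-count even — that is exactly why I pass to forests and state the component-wise parity hypothesis rather than working with a single tree, since deleting a leaf never disconnects anything but we still want a clean inductive statement. Everything else (triangle inequality to bound $c(\ell,w')$, monotonicity $c(S)-c(e)\le c(S)$) is routine.
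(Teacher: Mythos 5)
Your final (``clean'') argument is correct and is essentially the paper's own proof: delete a leaf $\ell$ with edge $e=\{\ell,w\}$, toggle $w$'s membership in $U$ when $\ell\in U$, recurse, and reroute the matching edge at $w$ to $\ell$ via the triangle inequality, charging $c(e)$; the paper phrases this as induction on $|V|$ of the tree with explicit cases ($v\notin U$; $v,w\in U$; $v\in U,w\notin U$) rather than your forest-with-componentwise-parity formulation, but the mechanism is identical. The abandoned first attempt does not matter since you replaced it with a complete argument.
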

\begin{proof} 
We use induction on $|V|$, the case $|V|\le 1$ being trivial.
Let now $|V|\ge 2$, and let $v$ be a leaf and $e=\{v,w\}\in S$ the edge incident to $v$.
Let $(V',S')$ arise from $(V,S)$ by deleting $v$ and $e$.
If $v\notin U$, applying the induction hypothesis to $(V',S',U)$ does the job.
If $v,w\in U$, applying the induction hypothesis to $(V',S',U \setminus\{w\})$
yields a perfect matching of cost at most $c(S')$, and we add $e$ to this matching.
Finally, if $v\in U$ and $w\notin U$, applying the induction hypothesis to $(V',S',U \cup\{w\})$
yields a perfect matching on $U\cup\{w\}$ of cost at most $c(S')$.
If $\{w,x\}$ is the matching edge incident to $w$, we replace it by $\{v,x\}$, which 
increases the cost by at most $c(e)$ due to the triangle inequality.
\end{proof}

We use this to obtain the lemma below, which we will then apply for $U$ being the set $\bar T_{\mathrm{odd}}$ of targets where an odd number of paths ends in our solution $\Pscr$ of \textsc{Vehicle Routing with Target Groups}.
\begin{lemma}\label{lem:bound_matching_cost}
Let $\Ieuscr=(V,s,c,d)$ be an instance of \textsc{Capacitated Vehicle Routing} and let $(V,\bar T,s,c,\Tscr,b)$ be the instance of \textsc{Vehicle Routing with Target Groups} computed by Algorithm~\ref{algo:constructing_vrpwtg_instance} applied to $\Ieuscr$.
Let $U \subseteq \bar T$ such that $|U \cap \Tg  |$ is even for every target group $\Tg  \in \Tscr$.
Then there is a perfect matching on $U$ with cost at most 
\[
  \frac{3 \rho}{(1-\rho) \cdot (1-\tau)}\cdot \OPT(\Ieuscr).
\]
\end{lemma}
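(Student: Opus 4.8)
The plan is to build a spanning tree on $U$ whose cost is bounded by $\frac{3\rho}{(1-\rho)(1-\tau)}\cdot\OPT(\Ieuscr)$, and then invoke Lemma~\ref{lem:bound_matching_by_tree} to extract a perfect matching on $U$ of no greater cost. The tree will be assembled group by group. Within a single target group $\Tg\in\Tscr$, the auxiliary graph $(\bar T, E_B)$ restricted to $\Tg$ is connected by construction, so there is a spanning tree $S_\Tg$ of $\Tg$ using only edges $\{t,t'\}$ with $B_t\cap B_{t'}\neq\emptyset$; by Lemma~\ref{lemma:bound_matching_edges} each such edge has cost at most $6\rho\cdot\min\{c(s,t),c(s,t')\}$. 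To glue the groups together I would pick, for each group $\Tg$, one representative target and connect the representatives into a tree across groups; alternatively, and more cleanly, I would root the whole construction at the depot $s$ — add the edge from $s$ to one representative target of each group — and charge these connecting edges as well. (One must remember that $s\notin U$, so after building a tree on $\{s\}\cup\bigcup_{\Tg}\Tg$ one restricts/contracts to get a tree on $U\cup\{s\}$ and then on $U$; since adding $s$ and taking a sub-multiset of the edges only helps, and since Lemma~\ref{lem:bound_matching_by_tree} is applied on the vertex set of the tree with $U$ a subset, this is harmless.)

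The cost accounting is the heart of the argument. Each target $t\in\bar T$ was put into $\bar T$ during the iteration $v=t$ of Algorithm~\ref{algo:constructing_vrpwtg_instance} precisely because $d(C_t\setminus Y)>1-\tau$ at that moment, and the set $C_t$ was then added to $Y$; hence the sets $C_t\setminus Y_t$ (where $Y_t$ is the value of $Y$ just before $t$ was processed) are pairwise disjoint over $t\in\bar T$, each of total demand exceeding $1-\tau$, and each is contained in $B_t$. For $v\in C_t$ we have $c(s,v)>(1-\rho)c(s,t)$ (since $c(v,t)<\rho\, c(s,t)$ for $v\in C_t$, as each such $v$ is within $\rho\,c(s,t)$ of $t$), so $c(s,t)<\frac{1}{1-\rho}c(s,v)$. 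Therefore
\[
(1-\tau)\cdot c(s,t) \;<\; \sum_{v\in C_t\setminus Y_t} d(v)\cdot c(s,t) \;<\; \frac{1}{1-\rho}\sum_{v\in C_t\setminus Y_t} d(v)\cdot c(s,v),
\]
and summing over $t\in\bar T$, using disjointness of the sets $C_t\setminus Y_t$, gives
\[
\sum_{t\in\bar T}(1-\tau)\cdot c(s,t) \;<\; \frac{1}{1-\rho}\sum_{v\in V} d(v)\cdot c(s,v) \;\le\; \frac{1}{2(1-\rho)}\cdot\OPT(\Ieuscr),
\]
where the last step is Proposition~\ref{prop:trivial_lower_bound}. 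Consequently $\sum_{t\in\bar T} c(s,t) < \frac{1}{2(1-\rho)(1-\tau)}\OPT(\Ieuscr)$.

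It remains to bound $c(S)$, where $S$ is the assembled spanning tree, by $6\rho$ times $\sum_{t\in\bar T}c(s,t)$ (after which the claimed bound $\frac{3\rho}{(1-\rho)(1-\tau)}\OPT$ follows by dividing). Within each group the tree $S_\Tg$ is a tree on $\Tg$, so it has $|\Tg|-1$ edges, and I would charge each edge $\{t,t'\}\in S_\Tg$, of cost $<6\rho\min\{c(s,t),c(s,t')\}$, to the endpoint achieving the minimum — orienting $S_\Tg$ toward a root chosen as the target with largest $c(s,\cdot)$ guarantees that each non-root target is charged at most once, giving $c(S_\Tg)<6\rho\sum_{t\in\Tg}c(s,t)$. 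For the edges connecting group-representatives to $s$: each has cost $c(s,t)$ for a representative $t$, and here I expect the main obstacle — a naive bound would contribute $\sum_\Tg c(s,t_\Tg)$, not $6\rho$ times that, which is far too large. The fix is that we do not actually need to connect groups through $s$: Lemma~\ref{lem:bound_matching_by_tree} needs a tree on the vertex set containing $U$, but it may be a forest if we instead apply the lemma separately to each group (each $|U\cap\Tg|$ is even by hypothesis!) and take the union of the per-group matchings. Thus I would apply Lemma~\ref{lem:bound_matching_by_tree} once per group $\Tg$ to $S_\Tg$ and $U\cap\Tg$, obtaining a perfect matching on $U\cap\Tg$ of cost at most $c(S_\Tg)<6\rho\sum_{t\in\Tg}c(s,t)$, and summing over $\Tg\in\Tscr$ yields a perfect matching on $U$ of cost $<6\rho\sum_{t\in\bar T}c(s,t)<\frac{3\rho}{(1-\rho)(1-\tau)}\OPT(\Ieuscr)$, as desired. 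This also sidesteps the need for any inter-group edges entirely, so the only estimates used are Lemma~\ref{lemma:bound_matching_edges}, the disjointness of the clusters $C_t\setminus Y_t$, and Proposition~\ref{prop:trivial_lower_bound}.
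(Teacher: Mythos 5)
Your proposal is correct and follows essentially the same route as the paper: a spanning tree inside $(\bar T,E_B)$ restricted to each group, Lemma~\ref{lem:bound_matching_by_tree} applied separately to each $U\cap\Tg$ (using the evenness hypothesis, so no inter-group or depot edges are needed), and the bound $\sum_{t\in\bar T}c(s,t)\le\frac{1}{2(1-\rho)(1-\tau)}\cdot\OPT(\Ieuscr)$ via the disjointness of the sets $C_t\setminus Y_t$, the estimate $c(s,v)>(1-\rho)\,c(s,t)$ for $v\in C_t$, and Proposition~\ref{prop:trivial_lower_bound}. The only imprecision is the tree-cost accounting: charge each edge to its child endpoint in a rooted orientation (the minimum of the two $c(s,\cdot)$-values is in particular at most the child's value) rather than to the endpoint attaining the minimum, which could be shared by several edges; this is exactly the paper's charging and the bound $c(S_\Tg)\le 6\rho\sum_{t\in\Tg}c(s,t)$ is unaffected.
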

\begin{proof}
We consider a fixed target group $\Tg  $. 
By Lemma~\ref{lem:bound_matching_by_tree} we can bound the cost of a minimum-cost perfect matching on $\Tg   \cap U$ by the cost $c(S)$ of any tree $(\Tg  , S)$.
By Lemma~\ref{lemma:bound_matching_edges} we have 
$c(t,t') < 6\rho \cdot \min\{c(s,t), c(s,t')\}$ for every edge $\{t,t'\}\in E_{B}$ (defined in Algorithm~\ref{algo:constructing_vrpwtg_instance}).
Because $\Tg  $ is the vertex set of a connected component of $(\bar T, E_{B})$, 
there is a tree $(\Tg  ,S)$ with $S\subseteq E_{B}$.
In a rooted orientation of this tree every vertex $t$ has at most one entering edge,
which costs at most $6 \rho \cdot c(s,t)$.
Hence $c(S)\le 6 \rho \cdot \sum_{t\in \Tg  } c(s,t)$.

Taking the union of the minimum-cost perfect matchings on $\Tg   \cap U$ for all target groups $\Tg  \in \Tscr$ yields a perfect matching $M$ on $U$ with cost at most $6\rho \cdot  \sum_{t\in \bar T} c(s,t)$.
We use $Y_t$ to denote the set $Y$ at the beginning of the iteration of Algorithm~\ref{algo:constructing_vrpwtg_instance} in which we add $t$ to $\bar T$.
Then the sets $C_t \setminus Y_t$ for $t\in \bar T$ in Algorithm~\ref{algo:constructing_vrpwtg_instance} are disjoint, and hence 
\begin{align*}
(1-\tau) \cdot (1-\rho) \cdot \sum_{t\in \bar T} c(s,t) <&\ (1-\rho) \cdot \sum_{t\in \bar T} c(s,t) \cdot d(C_t \setminus Y_t) \\
<&\   \sum_{t\in \bar T}\sum_{v\in C_t \setminus Y_t} c(s,v)  \cdot d(v) \\
\le&\ \sum_{v\in V} c(s,v)  \cdot d(v) \\
\le&\ \sfrac{1}{2} \OPT(\Ieuscr).
\end{align*}
Thus, we can bound the cost of the matching $M$ by $\frac{3 \rho}{(1-\rho) \cdot (1-\tau)} \cdot \OPT(\Ieuscr)$.
\end{proof}

\begin{theorem}\label{thm:good_tour_difficult_instances}
Let $\Ieuscr=(V,s,c,d)$ be a difficult instance of \textsc{Capacitated Vehicle Routing},
$\Jeuscr$ the instance of \textsc{Vehicle Routing with Target Groups} constructed by Algorithm~\ref{algo:constructing_vrpwtg_instance},
and $\Pscr$ a solution to $\Jeuscr$.
Then we can compute a traveling salesman tour $Q$  such that
\[c(Q) \ \le \ c(\Pscr)+    \frac{3 \rho}{(1-\rho) \cdot (1-\tau)}\cdot \OPT(\Ieuscr)\] in $O(n^3)$ time.
\end{theorem}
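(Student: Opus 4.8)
The plan is to take the given solution $\Pscr$ to the \textsc{Vehicle Routing with Target Groups} instance $\Jeuscr$, combine the paths in $\Pscr$ pairwise within each target group to form cycles through $s$, and then concatenate all these cycles (and the cycles already present in $\Pscr$) into a single traveling salesman tour by shortcutting. The cost of this traveling salesman tour is $c(\Pscr)$ plus the cost of the matching edges that we add to close up pairs of paths into cycles, since concatenating cycles through $s$ and shortcutting does not increase the cost (triangle inequality). So the entire task reduces to bounding the cost of the matching edges.

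First I would observe that, for each target group $\Tg\in\Tscr$, the number $b(\Tg)$ of paths ending in $\Tg$ is even by construction (Algorithm~\ref{algo:constructing_vrpwtg_instance}), but these $b(\Tg)$ paths may end at different targets within $\Tg$, and moreover the number of paths ending at any individual target $t\in\Tg$ need not be even. To combine the paths into cycles through $s$ I want to pair them up so that the added edge connects the two endpoints. I would do this in two stages. For each target $t\in\bar T$, let $p_t$ be the number of paths of $\Pscr$ ending at $t$; pair up $\lfloor p_t/2\rfloor$ disjoint pairs of these paths, joining each pair by the edge $\{t,t\}$ of cost $0$. This leaves one unpaired path at each $t$ with $p_t$ odd; let $\bar T_{\mathrm{odd}}$ be the set of such targets. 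Since $\sum_{t\in\Tg}p_t=b(\Tg)$ is even, $|\bar T_{\mathrm{odd}}\cap\Tg|$ is even for every $\Tg\in\Tscr$. Now I apply Lemma~\ref{lem:bound_matching_cost} with $U=\bar T_{\mathrm{odd}}$ to obtain a perfect matching $M$ on $\bar T_{\mathrm{odd}}$ of cost at most $\frac{3\rho}{(1-\rho)(1-\tau)}\cdot\OPT(\Ieuscr)$. For each matching edge $\{t,t'\}\in M$, I join the remaining unpaired path ending at $t$ with the one ending at $t'$, adding an edge of cost $c(t,t')$ between their endpoints; summed over $M$ this costs $c(M)$. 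After this, every path of $\Pscr$ has been incorporated into some cycle through $s$ (the paths joined in the first stage, and those joined via $M$), and $\Pscr$ may also contain cycles through $s$ already; all of these cycles contain $s$. Concatenating all of them at $s$ and shortcutting repeated visits yields a single cycle on $\{s\}\cup V$ (every customer appears, since every customer is covered by $\Pscr$), i.e., a traveling salesman tour $Q$, with $c(Q)\le c(\Pscr)+c(M)\le c(\Pscr)+\frac{3\rho}{(1-\rho)(1-\tau)}\cdot\OPT(\Ieuscr)$.

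Finally I would check the running time: computing $\Pscr$ is given; pairing paths at each target and concatenating/shortcutting is linear in the size of $\Pscr$; the dominant cost is computing the matching $M$, which via Lemma~\ref{lem:bound_matching_cost} reduces to a minimum-weight perfect matching on at most $n$ vertices within each target group (or we can even use the explicit tree-based matching from Lemma~\ref{lem:bound_matching_by_tree} and Lemma~\ref{lem:bound_matching_by_tree}'s constructive proof, which is much faster), so everything fits in $O(n^3)$ time. The main obstacle — and the only genuinely nontrivial point — is the bookkeeping that guarantees $|\bar T_{\mathrm{odd}}\cap\Tg|$ is even for every target group, which is exactly what makes Lemma~\ref{lem:bound_matching_cost} applicable; everything else is routine concatenation and shortcutting. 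I expect no difficulty beyond making this parity argument precise and confirming that shortcutting a concatenation of cycles through $s$ yields a valid Hamiltonian cycle on $\{s\}\cup V$ of no greater cost.
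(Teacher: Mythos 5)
Your proposal is correct and follows essentially the same route as the paper: both compute a minimum-cost perfect matching on the set of targets where an odd number of paths of $\Pscr$ ends (using the parity of $b(\Tg)$ and Lemma~\ref{lem:bound_matching_cost} to bound its cost), add these edges to the edges of $\Pscr$, and shortcut the resulting Eulerian/concatenated structure into a traveling salesman tour of cost at most $c(\Pscr)+c(M)$. Your explicit zero-cost pairing of paths ending at the same target is only a cosmetic reformulation of the paper's Eulerian-walk argument.
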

\begin{proof}

We will add a matching $M$ to $\bigcupp_{P\in \Pscr} E(P)$ to obtain an Eulerian multi-edge set $H= M\cupp\   \bigcupp_{P\in \Pscr} E(P)$. 
Then $(V,H)$ is connected and Eulerian and therefore it has an Eulerian walk, i.e.,
a closed walk that visits every element of $V\cup \{s\}$ at least once.
After shortcutting we obtain a traveling salesman tour of cost at most $c(H)$.

We choose $M$ to be a minimum-cost perfect matching on the set $\bar T_{\text{odd}}$ of targets $t\in \bar T$ in which an odd number of paths in $\Pscr$ ends.
To bound the cost of this matching, we observe that $b(\Tg  )$ is even for every target group $\Tg  $ and hence an even number of paths in $\Pscr$ ends in $\Tg  $.
Therefore, every target group contains an even number of elements of $\bar T_{\text{odd}}$.
By Lemma~\ref{lem:bound_matching_cost}, the cost of $M$ is at most
\[
   \frac{3 \rho}{(1-\rho) \cdot (1-\tau)}\cdot \OPT(\Ieuscr).
\]
\end{proof}

\subsection{Existence of a good fractional solution for vehicle routing with target groups}
\label{sect:good_solution_exists}

We now prove that the instance $\Jeuscr$ constructed by Algorithm~\ref{algo:constructing_vrpwtg_instance} with input $\Ieuscr$
has a weak fractional solution $x$ that is not much more expensive than $\OPT(\Ieuscr)$ and has small detour.
Our main objective will be to minimize the detour, not counting the detour of the final edges of the tours 
(entering $\{s\}\cup \bar T$), because we will apply Theorem~\ref{thm:mainVRPwTargetGroups} to this $x$. 

\begin{lemma}\label{lem:good_sol_exists}
Let $\Ieuscr=(V,s,c,d)$ be a difficult instance of \textsc{Capacitated Vehicle Routing}, 
and let $\Jeuscr=(V,\bar T,s,c,\Tscr,b)$ be the instance of  \textsc{Vehicle Routing with Target Groups} constructed by Algorithm~\ref{algo:constructing_vrpwtg_instance} applied to $\Ieuscr$.
Then there is a weak fractional solution $x$ to $\Jeuscr$ such that
\begin{align}
\detour(x |_{E_1(\Jeuscr)})&\ \le\ \epsilon \cdot \OPT(\Ieuscr), \label{eq:bound_detour}
\intertext{and}
c(x |_{E_1(\Jeuscr)})&\ \le \ \OPT(\Ieuscr), \label{eq:bound_length_before_t}
\intertext{and}
c(x) &\ <\ \left(1 + \zeta\right) \cdot \OPT(\Ieuscr), \label{eq:bound_length_all}
\intertext{where}
\zeta &\ = \  \frac{3\rho+\tau-4\tau\rho}{1-\rho} 
   + \frac{\epsilon}{\tau \cdot \rho}\cdot \left(1-\tau\cdot \rho - \frac{3\rho+\tau-4\tau\rho}{1-\rho}\right).
\label{eq:def_zeta}   
\end{align}
\end{lemma}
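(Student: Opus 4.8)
The plan is to build the weak fractional solution $x$ directly from an optimum solution $\Qscr$ to the difficult instance $\Ieuscr$. For each tour $Q\in\Qscr$ and each customer $v\in V(Q)\setminus\{s\}$, split $Q$ at $v$ into two $s$-$v$-paths, and append to each of them a final arc. The final arc is chosen as follows: if $v$ lies in some set $B_t$ for a target $t$ (equivalently $v\in B_{\Tg}$ for some target group $\Tg$), append the arc $(v,t)$ where $t$ is chosen in $\Tg$ so as to make the arrival counts work out; otherwise append the arc $(v,s)$ back to the depot (creating a closed walk). Each of the two resulting walks gets weight $\lambda_P=d(v)$. Since every customer $v$ lies in its tour $Q$, summing over all customers gives total weight on $v$ equal to $\sum_{Q\ni v}\sum_{u\in V(Q)}d(u)\cdot[\text{$v$ on the split path}]$, and because the two split paths of a fixed $Q$ at the split point $u$ together cover every vertex of $Q$, the weight accumulated on $v$ is $\sum_{u\in V(Q)} d(u) \ge$ the requirement. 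Here I will need the standard bookkeeping that for a fixed tour $Q$, summing $d(u)$ over its split points and distributing weight $d(u)$ to the two halves puts total weight $\sum_{u\in V(Q)}d(u)$ on each vertex of $Q$; since $\Qscr$ is feasible this is $\le 1$, but we only need $\ge$ something — actually we need exactly $\ge 1$ on each customer, which forces a small rescaling argument or the observation that $d(V(Q))$ may be less than $1$, so one pads by scaling weights up, paying a factor $\le \frac{1}{1-\tau}$-type loss somewhere. This rescaling is exactly why the statement only asks $c(x)<(1+\zeta)\OPT$ rather than $c(x)\le\OPT$.

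Next, the arrival constraint: for each target group $\Tg$ we must have the total weight of walks ending in $\Tg$ equal to $(1-\tau)\cdot b(\Tg)=(1-\tau)\cdot 2\lfloor d(B_\Tg)/(1-\tau)\rfloor$. The weight routed into $\Tg$ equals $\sum_{Q}\sum_{v\in V(Q)\cap B_\Tg} 2d(v)$ (factor $2$ from the two split paths), summed over tours — and since each customer belongs to exactly one tour, this is $2\sum_{v\in B_\Tg}d(v)=2d(B_\Tg)$, which is at least $(1-\tau)b(\Tg)$ and within $2(1-\tau)$ of it. So I need to shave off the excess: reroute a small fraction of the walks ending in $\Tg$ to instead end at $s$ (i.e. replace the final arc $(v,t)$ by $(v,s)$). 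The triangle inequality $c(v,s)\le c(v,t)+c(t,s)$ and Lemma~\ref{lemma:bound_distance_in_B_t} control the extra cost of these rerouted final edges; crucially this rerouting only touches arcs in $E_2(\Jeuscr)$, so $x|_{E_1(\Jeuscr)}$ is unaffected. The targets $t$ within a group used to receive the walks are chosen (among $\Tg$) arbitrarily subject to totals — the partition into target groups is precisely what gives us this freedom, so we never need more than $b(\Tg)$ distinct endpoints.

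It remains to verify the three bounds. For \eqref{eq:bound_detour}: $\detour(x|_{E_1(\Jeuscr)})=\sum_Q\sum_{v\in V(Q)}d(v)\cdot\detour(\text{the two split halves of $Q$ at $v$, sans final arc})$. For a fixed $Q$, the detour of the two halves summed is at most $2\bigl(c(Q)-2c(s,\peak(Q))\bigr)$-ish by the triangle inequality for $\detour$ and the fact that each half is an $s$-to-$v$ path inside $Q$; more carefully one bounds it by $c(Q)-2\sum_{v}d(v)c(s,v)$-type quantities after weighting. The point is that $\detour(x|_{E_1})\le$ (a constant times) $\sum_Q\bigl(c(Q)-2\sum_{v\in V(Q)}d(v)c(s,v)\bigr) = c(\Qscr)-2\sum_v d(v)c(s,v) = \OPT - 2\sum_v d(v)c(s,v) < \epsilon\cdot\OPT$ by the definition of a difficult instance (here Lemma~\ref{lem:difficult_instances_clustered} is the tool that converts the per-tour detour into the per-tour slack, for tours with small peak cluster; tours with large peak cluster contribute essentially zero detour because the final arc lands inside $B_t\supseteq C(Q)$ by Lemma~\ref{lem:guessing_large_clusters}). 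For \eqref{eq:bound_length_before_t}: $c(x|_{E_1(\Jeuscr)})=\sum_Q\sum_{v}d(v)\cdot(\text{cost of the two halves})=\sum_Q d(V(Q))\cdot c(Q)\le\sum_Q c(Q)=\OPT$ since $d(V(Q))\le 1$. For \eqref{eq:bound_length_all}, add the cost of the final arcs: a large-cluster tour's final arc costs at most $\frac{3\rho}{1-\rho}c(s,v)$ by Lemma~\ref{lemma:bound_distance_in_B_t} (and $v$ close to the peak), contributing $\le\frac{3\rho}{1-\rho}\cdot 2\sum d(v)c(s,v)\le\frac{3\rho}{1-\rho}\OPT$; small-cluster tours contribute a $\tau$-weighted correction and the rerouted-to-$s$ parts another; assembling these with the coefficients $\frac{3\rho+\tau-4\tau\rho}{1-\rho}$ and the $\frac{\epsilon}{\tau\rho}$ term from bounding the total length of small-peak-cluster tours via Lemma~\ref{lem:difficult_instances_clustered} gives exactly $\zeta$ in \eqref{eq:def_zeta}.

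\textbf{Main obstacle.} The delicate part is the accounting in \eqref{eq:bound_length_all}: one must simultaneously (a) bound the cost of appended final arcs for customers in large peak clusters using $c(v,t)<\frac{3\rho}{1-\rho}c(s,v)$, (b) handle customers in small peak clusters — whose total tour length is small, quantitatively $\le\frac{\epsilon}{\tau\rho}\OPT$ by Lemma~\ref{lem:difficult_instances_clustered} — separately and pessimistically, and (c) account for the rerouting to $s$ that fixes the arrival counts, all while keeping $x|_{E_1(\Jeuscr)}$ untouched so that \eqref{eq:bound_detour} and \eqref{eq:bound_length_before_t} survive. Getting the exact constant $\zeta$ as displayed requires combining these three contributions with the right weights; I expect this to be the only genuinely fiddly computation, with everything else being a careful but routine splitting-and-triangle-inequality argument.
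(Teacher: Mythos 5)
Your overall strategy matches the paper's (split each optimum tour $Q$ at every customer $v$ into two $s$-$v$-paths with weight $d(v)$, append a final arc to a nearby target or back to $s$, and bound detour and cost via the difficulty of $\Ieuscr$ and Lemma~\ref{lem:difficult_instances_clustered}), but two concrete pieces are missing, and both are exactly where the lemma's precise constants come from. First, your fix for the coverage constraint ($\sum_{P\ni v}\lambda_P\ge 1$) is a vague ``rescaling'' by roughly $\sfrac{1}{1-\tau}$, and you attribute the slack in \eqref{eq:bound_length_all} to it. That is not right: a tour $Q$ may have $d(V(Q)\setminus\{s\})$ arbitrarily small (small peak clusters impose no lower bound on total demand), so per-tour rescaling by $1/d(V(Q)\setminus\{s\})$ is unbounded and, even where bounded, it inflates the final-arc contribution and perturbs the target-arrival totals, so you would not recover the stated $\zeta$. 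The paper instead repairs coverage by adding $1-d(V(Q)\setminus\{s\})$ copies of each tour $Q$ as a closed walk ending at $s$; this costs nothing beyond $c(\Qscr)=\OPT(\Ieuscr)$ (so \eqref{eq:bound_length_before_t} and \eqref{eq:bound_detour} are untouched) and does not interfere with the target-group equalities, since closed walks do not end in any $\Tg$. All of $\zeta$ comes from the appended final arcs, not from any rescaling.

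Second, your repair of the arrival equalities (``reroute a small fraction of the walks ending in $\Tg$ to end at $s$'') lacks the per-tour control that the cost accounting needs. To obtain the coefficient $\sfrac{3\rho+\tau-4\tau\rho}{1-\rho}$ you must guarantee that every tour with a large peak cluster keeps at least $1-\tau$ of its demand attached to a nearby target, so that at most $\tau$ of its demand pays the full $2d(v)c(s,v)\le d(v)c(Q)$ for an arc back to $s$. A per-group rerouting chosen arbitrarily can concentrate the entire excess (up to $2(1-\tau)$ units of weight per group) on the customers of one long tour, blowing the $\tau\cdot c(Q)$ budget; your appeal to $c(v,s)\le c(v,t)+c(t,s)$ and Lemma~\ref{lemma:bound_distance_in_B_t} does not control this. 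The paper resolves it \emph{before} building $x$: it constructs a set $C^*\subseteq\bigcup_\Tg B_\Tg$ (splitting at most one customer per selection) with $2d(B_\Tg\cap C^*)=(1-\tau)b(\Tg)$ exactly for every group, while ensuring that for each large-peak-cluster tour a sub-cluster of $C(Q)\cap B_\Tg$ of demand exactly $1-\tau$ lies in $C^*$ (possible by Lemma~\ref{lem:guessing_large_clusters}); only customers outside $C^*$ get the arc $(v,s)$. Without this simultaneous per-group equality and per-tour guarantee — and with the rescaling issue above — your argument establishes a bound of the right flavour but not \eqref{eq:bound_length_all} with the $\zeta$ of \eqref{eq:def_zeta}. (Your derivations of \eqref{eq:bound_detour} and \eqref{eq:bound_length_before_t} are essentially correct once the coverage fix is done the paper's way; note also that large-peak-cluster tours are not ``essentially zero detour'' — their final arcs simply lie in $E_2(\Jeuscr)$ and are never counted in $\detour(x|_{E_1(\Jeuscr)})$.)
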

\begin{proof}

We fix an optimum solution $\Qscr$ to $\Ieuscr$ and construct a weak fractional solution $x$ to $\Jeuscr$ as follows.
During the course of the construction, we may modify the instance multiple times 
by ``splitting'' a customer $v$, replacing it by two copies $v'$ and $v''$ with $d(v')+d(v'')=d(v)$.
Of course, every weak fractional solution to the resulting instance $\Jeuscr'$ induces a weak fractional solution
to $\Jeuscr$ with the same cost and detour.

\begin{enumerate}[(a)]
\item First, we construct a subset $C^*$ of the set $C:=\bigcup_{\Tg  \in\Tscr} B_{\Tg  }$ of clustered points. This will have the property that 
for every target group $\Tg  \in \Tscr$, we have 
\[
(1-\tau) \cdot b(\Tg  )=2\cdot d(B_{\Tg  } \cap C^*).
\]
By Lemma~\ref{lem:guessing_large_clusters}, for each tour $Q$ with large peak cluster there is a target group $\Tg  $ such that $d(C(Q)\cap B_{\Tg  }) \ge 1-\tau$. We will choose $C^*\subseteq C$ such that we maintain this property when restricting to $C^*$, i.e.\ we will have $d(C(Q)\cap B_{\Tg  }\cap C^*) \ge 1-\tau$.

We now define $C^*$. To this end, we define $C^* \cap B_{\Tg  }$ for each target group $\Tg  \in \Tscr$.
This defines $C^*\subseteq C$ because $\{B_{\Tg  } : \Tg  \in\Tscr\}$ is a partition of $C$.
For each target group $\Tg  \in\Tscr$, we consider every tour $Q\in \Qscr$ with $d(C(Q)\cap B_{\Tg  }) \ge 1-\tau$.
For each such tour we choose a subset of $C(Q)\cap B_{\Tg  }$ of total demand exactly $1-\tau$ and include it in  $C^* \cap B_{\Tg  }$. 
(This can be achieved after splitting one customer.) 
Then we include arbitrary additional vertices from $B_{\Tg  }$ in $B_{\Tg  }\cap C^*$ such that $(1-\tau) \cdot b(\Tg  )=2\cdot d(B_{\Tg  } \cap C^*)$. 
(Again, this can be achieved after possibly splitting one customer.)

\item \label{item:satisfy_target_demands}
Now we construct $x$, starting with the all-zero vector.
For each customer $v \in V$, we partition the tour $Q\in \Qscr$  containing $v$ into two $s$-$v$-paths, which we orient away from $s$.
Then we add $d(v)$ times the incidence vectors of these oriented paths to $x$. 

\item \label{item:final_arcs}
Moreover, for each $v\in V$, we add  $2\cdot d(v)$ times the incidence vector of an edge $(v,t)$, where $t\in \{s\} \cup \bar T$.
If $v\in C^*$ we choose $t$ to be a target with $v\in B_t$.
Otherwise, we choose $t=s$.
Thus, overall for every vertex $v\in V$, we add $d(v)$ times the incidence vector of two $s$-$t$-paths. 
After this step we have $x(\delta^-(\Tg  ))= 2\cdot d(B_{\Tg  } \cap C^*) = (1-\tau) \cdot b(\Tg  )$ for every target group $\Tg  $.

\item \label{item:visit_all}
Finally, for each tour $Q\in \Qscr$, we add $1-d(V(Q)\setminus\{s\})$ times the incidence vector of any orientation of $Q$ (to a directed cycle) to $x$.
\end{enumerate}

The resulting vector $x$ is a weak fractional solution to the instance $\Jeuscr$. 
Indeed, we added fractions of walks from $s$ to $\{s\}\cup \bar T$ in \eqref{item:satisfy_target_demands} and \eqref{item:final_arcs}
and fractions of walks from $s$ to $s$ in \eqref{item:visit_all}.
Moreover, every vertex $v$ of a tour $Q$ is visited by walks of total value $d(V(Q)\setminus\{s\})+d(v)$ 
in \eqref{item:satisfy_target_demands} and \eqref{item:final_arcs}
and by a walk of value $1-d(V(Q)\setminus\{s\})$ in \eqref{item:visit_all}, so at least 1 overall. 
Finally, the total value of the walks ending in each target group $\Tg  \in\Tscr$ is exactly $(1-\tau) \cdot b(\Tg  )$, as required.

Let $x^{\eqref{item:final_arcs}}$ denote the contribution of the edges $(v,t)$ added in step \eqref{item:final_arcs} to $x$.
Note that by construction $c(x-x^{\eqref{item:final_arcs}})=c(\Qscr)$, which implies \eqref{eq:bound_length_before_t}.

Next, for all $v\in V$, the total value of walks ending with an edge $(v,t)$ added in \eqref{item:final_arcs} is exactly $2d(v)$.  
Thus 
\[\detour(x-x^{\eqref{item:final_arcs}}) = c(x-x^{\eqref{item:final_arcs}}) - \sum_{v\in V} 2d(v)c(s,v) = c(\Qscr)  - \sum_{v\in V} 2d(v)c(s,v),\] 
implying \eqref{eq:bound_detour} because $\Ieuscr$ is difficult.

To show \eqref{eq:bound_length_all}, we finally bound the total cost of $x^{\eqref{item:final_arcs}}$.
When we add a contribution of an edge $(v,t)$ in \eqref{item:final_arcs}, we have $v\in C^*\cap B_t$ and
hence $c(v,t)< \frac{3\rho}{1-\rho} \cdot c(s,v)$
by Lemma~\ref{lemma:bound_distance_in_B_t},
or $v\in V\setminus C^*$ and $t=s$.

For a tour $Q$ with large peak cluster, the contribution of edges $(v,t)$ that we add in \eqref{item:final_arcs} for $v\in V(Q)$ is at most
\begin{align*}
\sum_{v\in V(Q)\cap C^*} 2 \cdot d(v)\cdot \sfrac{3\rho}{1-\rho} \cdot c(s,v) 
       + \sum_{v\in V(Q)\setminus (\{s\}\cup C^*)} 2 \cdot d(v) \cdot c(s,v) 
&\ \le \ \left( \sfrac{3\rho}{1-\rho} (1-\tau) + \tau \right) \cdot c(Q) \\     
&\ = \ \sfrac{3\rho+\tau-4\rho\tau}{1-\rho} \cdot c(Q)      
\end{align*}       
because $\sfrac{3\rho}{1-\rho}< 1$ and $d(V(Q)\cap C^*)\ge 1-\tau$.

For a tour $Q$ with small peak cluster, the contribution of edges $(v,t)$ that we add in \eqref{item:final_arcs} for $v\in V(Q)$ is at most
\begin{align*}
\sum_{v\in V(Q)\cap C^*} \sfrac{3\rho}{1-\rho} \cdot 2 \cdot d(v)\cdot c(s,v) 
       + \sum_{v\in V(Q)\setminus (\{s\}\cup C^*)} 2 \cdot d(v) \cdot c(s,v) 
&\ \le \  \sum_{v\in V(Q)\setminus \{s\}} 2 \cdot d(v) \cdot c(s,v)  \\
&\ \le \ (1-\tau\cdot\rho) \cdot c(Q)     
\end{align*}       
by Lemma~\ref{lem:difficult_instances_clustered}.
Summing over all tours $Q\in\Qscr$, we get
\begin{align*}
c(x^{\eqref{item:final_arcs}}) 
\ \le \ \sfrac{3\rho+\tau-4\rho\tau}{1-\rho} \cdot c(\Qscr) 
+  \left(1-\tau\cdot \rho - \sfrac{3\rho+\tau-4\tau\rho}{1-\rho} \right) \cdot \sum_{Q\in\Qscr: d(C(Q))\le 1-\tau}  c(Q). 
\end{align*} 
Again using Lemma~\ref{lem:difficult_instances_clustered}, we can bound this by
\begin{align*}
c(x^{\eqref{item:final_arcs}}) 
&\le \ \sfrac{3\rho+\tau-4\rho\tau}{1-\rho} \cdot c(\Qscr) 
+   \left(1-\tau\cdot \rho - \sfrac{3\rho+\tau-4\tau\rho}{1-\rho} \right) \cdot \!\!
\sum_{Q\in\Qscr: d(C(Q))\le 1-\tau} \sfrac{1}{\tau\cdot\rho} \left( c(Q) - \!\sum_{v\in V(Q)\setminus\{s\}} 2 \cdot d(v) \cdot c(s,v)\right) \\
&\le \ \sfrac{3\rho+\tau-4\rho\tau}{1-\rho} \cdot c(\Qscr) 
+  \frac{ \left(1-\tau\cdot \rho - \sfrac{3\rho+\tau-4\tau\rho}{1-\rho} \right) }{\tau\cdot\rho} \cdot 
\sum_{Q\in\Qscr} \left( c(Q) - \!\sum_{v\in V(Q)\setminus\{s\}} 2 \cdot d(v) \cdot c(s,v)\right) \\
&\le \  \left( \frac{3\rho+\tau-4\tau\rho}{1-\rho} 
   + \epsilon\cdot \frac{1-\tau\cdot \rho - \sfrac{3\rho+\tau-4\tau\rho}{1-\rho}}{\tau\cdot\rho} \right) \cdot c(\Qscr)
\end{align*}  
because $\Ieuscr$ is difficult.
Together with \eqref{eq:bound_length_before_t} this implies \eqref{eq:bound_length_all}.
\end{proof}

\subsection{Completing the proof}\label{sect:solving_difficult}

In this section we derive the main result of this chapter from Theorem~\ref{thm:mainVRPwTargetGroups}.

\begin{theorem}\label{thm:CVR_3plus}
There is a function $f:\mathbb{R}_{>0}\to\mathbb{R}_{>0}$ with
$\lim_{\epsilon\to 0}f(\epsilon)=0$ and a
polynomial-time algorithm for \textsc{Capacitated Vehicle Routing} that returns a solution of cost at most
\begin{equation*}
\left( 3+ f(\epsilon) \right)\cdot \OPT(\Ieuscr)
\end{equation*}
for any given difficult instance $\Ieuscr$.
\end{theorem}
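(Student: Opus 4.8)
The plan is to assemble the pieces from Sections~\ref{sect:difficult_instances_clustered}--\ref{sect:good_solution_exists} together with Theorem~\ref{thm:mainVRPwTargetGroups} and the classical tour partitioning of Theorem~\ref{thm:tour_splitting}. Given a difficult instance $\Ieuscr$, I would first run Algorithm~\ref{algo:constructing_vrpwtg_instance} to obtain an instance $\Jeuscr$ of \textsc{Vehicle Routing with Target Groups}; this takes polynomial time. By Lemma~\ref{lem:good_sol_exists}, $\Jeuscr$ admits a weak fractional solution $x$ with $\detour(x|_{E_1(\Jeuscr)})\le\epsilon\cdot\OPT(\Ieuscr)$ and $c(x)<(1+\zeta)\cdot\OPT(\Ieuscr)$, where $\zeta$ is the explicit expression in~\eqref{eq:def_zeta}. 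The key point is that $\zeta\to 0$ as $\epsilon,\tau,\rho\to 0$ (with $\tau,\rho$ chosen as functions of $\epsilon$ so that $\epsilon/(\tau\rho)\to 0$ — e.g.\ $\tau=\rho=\epsilon^{1/3}$), so $\zeta$ is itself a function of $\epsilon$ tending to $0$.

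Next I would apply Theorem~\ref{thm:mainVRPwTargetGroups} with a suitable $\eta=\eta(\epsilon)$ (say $\eta=\sqrt{\epsilon}$, so that $\eta\to 0$ but $\frac{1}{\eta}\cdot\epsilon\to 0$ as well). This produces, in polynomial time, a feasible solution $\Pscr$ of $\Jeuscr$ with
\[
c(\Pscr) \ < \ \left(\tfrac{1}{1-\tau}+\eta\right)\cdot c(x) + O\!\left(\tfrac{1}{\eta}\right)\cdot\detour(x|_{E_1(\Jeuscr)}) \ \le\ \left(\tfrac{1}{1-\tau}+\eta\right)(1+\zeta)\cdot\OPT(\Ieuscr) + O\!\left(\tfrac{\epsilon}{\eta}\right)\cdot\OPT(\Ieuscr).
\]
Every factor on the right is $1+o(1)$ as $\epsilon\to 0$, so $c(\Pscr)\le (1+g_1(\epsilon))\cdot\OPT(\Ieuscr)$ for some $g_1$ with $\lim_{\epsilon\to 0}g_1(\epsilon)=0$. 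Then I would invoke Theorem~\ref{thm:good_tour_difficult_instances} to turn $\Pscr$ into a traveling salesman tour $Q$ with $c(Q)\le c(\Pscr)+\frac{3\rho}{(1-\rho)(1-\tau)}\cdot\OPT(\Ieuscr)$ in $O(n^3)$ time; since $\rho=\rho(\epsilon)\to 0$, the additive term is again $o(1)\cdot\OPT(\Ieuscr)$, giving $c(Q)\le(1+g_2(\epsilon))\cdot\OPT(\Ieuscr)$ with $g_2\to 0$.

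Finally I would feed $Q$ into the tour partitioning algorithm of Theorem~\ref{thm:tour_splitting}, obtaining a feasible \textsc{Capacitated Vehicle Routing} solution of cost at most
\[
c(Q)+\sum_{v\in V}4d(v)c(s,v) \ \le\ c(Q)+2\cdot\OPT(\Ieuscr) \ \le\ \bigl(3+g_2(\epsilon)\bigr)\cdot\OPT(\Ieuscr),
\]
using Proposition~\ref{prop:trivial_lower_bound} for the middle inequality; setting $f(\epsilon):=g_2(\epsilon)$ completes the proof, and the whole procedure runs in polynomial time. The main thing to be careful about — rather than an ``obstacle'' in the sense of a missing idea — is the bookkeeping: one must verify that a \emph{single} choice of the constants $\tau,\rho,\eta$ as functions of $\epsilon$ simultaneously makes $\zeta$, $\eta$, $\epsilon/\eta$, and $\frac{3\rho}{(1-\rho)(1-\tau)}$ all tend to $0$, and that $f$ can then be taken independent of the instance; this is exactly the constant-setting deferred to Section~\ref{sect:final_result}, so here I would just state the qualitative claim $\lim_{\epsilon\to 0}f(\epsilon)=0$ and leave the numerics to that section.
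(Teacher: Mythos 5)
Your proposal is correct and follows essentially the same route as the paper's own proof: Algorithm~\ref{algo:constructing_vrpwtg_instance}, then Lemma~\ref{lem:good_sol_exists} combined with Theorem~\ref{thm:mainVRPwTargetGroups} (with $\tau=\rho=\sqrt[3]{\epsilon}$ and $\eta=\sqrt{\epsilon}$, exactly the paper's choices), then Theorem~\ref{thm:good_tour_difficult_instances}, and finally Theorem~\ref{thm:tour_splitting} with Proposition~\ref{prop:trivial_lower_bound}. The bookkeeping you flag is precisely what the paper does, so there is no gap.
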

\begin{proof}
Let $\epsilon\le 6^{-3}$.
We set $\tau = \rho = \sqrt[3]{\epsilon}$ and $\eta = \sqrt{\epsilon}$.
We first apply Algorithm~\ref{algo:constructing_vrpwtg_instance} to obtain an instance $\Jeuscr=(V,\bar T,s,c,\Tscr,b)$ of \textsc{Vehicle Routing with Target Groups}. 
Then we apply Theorem~\ref{thm:mainVRPwTargetGroups} to compute a solution $\Pscr$ for this instance $\Jeuscr$.
Lemma~\ref{lem:good_sol_exists} and Theorem~\ref{thm:mainVRPwTargetGroups} imply
\begin{align*}
 c(\Pscr) \le&\  (\sfrac{1}{1-\tau}+\eta)\cdot c(x) 
 + O(\sfrac{1}{\eta}) \cdot \detour(x|_{E_1(\Jeuscr)}) \\
 \le&\ (\sfrac{1}{1-\tau}+\eta)\cdot (1 + \zeta) \cdot \OPT(\Ieuscr) +
 O(\sfrac{1}{\eta}) \cdot \epsilon \cdot \OPT(\Ieuscr),
\end{align*}
where $\zeta = \frac{3\rho+\tau-4\tau\rho}{1-\rho} 
   + \frac{\epsilon}{\tau\cdot\rho} \cdot (1-\tau\cdot \rho - \sfrac{3\rho+\tau-4\tau\rho}{1-\rho})$.
By the choice of $\tau$ and $\rho$, we have $ \lim_{\epsilon\to 0} \zeta = 0$.
Using also the choice of $\eta$, we get $\lim_{\epsilon\to 0}  c(\Pscr) = \OPT(\Ieuscr)$.
Next, we apply Theorem~\ref{thm:good_tour_difficult_instances} and get a traveling salesman tour $Q$ with cost at most
\begin{align*}
c(Q) \le&\ c(\Pscr) + \sfrac{3 \rho}{(1-\rho) \cdot (1-\tau)} \cdot \OPT(\Ieuscr).
\end{align*}
Because $\lim_{\epsilon\to 0}  \sfrac{3 \rho}{(1-\rho) \cdot (1-\tau)} = 0$, we get that 
$c(Q) \le (1+f(\epsilon)) \cdot \OPT(\Ieuscr)$ for some function $f:\mathbb{R}_{>0}\to\mathbb{R}_{>0}$ with
$\lim_{\epsilon\to 0}f(\epsilon)=0$.
Finally we apply Theorem~\ref{thm:tour_splitting} to this traveling salesman tour and obtain a solution $\Qscr$ to $\Ieuscr$ of cost at most
\begin{align*}
c(\Qscr) \le&\ c(Q) + \sum\limits_{v\in V}4d(v)c(s,v) \
\le\   (1+f(\epsilon)) \cdot \OPT(\Ieuscr) + 2 \cdot \OPT(\Ieuscr),
\end{align*}
where we used Proposition~\ref{prop:trivial_lower_bound} in the second inequality. 
\end{proof}

This now leads to a better approximation ratio than $\alpha + 2$. 
Call the algorithm by Altinkemer and Gavish (Section~\ref{sect:classical_algo})
and our new algorithm and return the cheaper of the two solutions. 
If the given instance is not difficult, the algorithm by Altinkemer and Gavish returns a solution of cost at most 
$\alpha + 2 \cdot (1-\epsilon)$ times the optimum. 
If the given instance is difficult, our new algorithm returns a solution of cost at most $3 + f(\epsilon)$ times the optimum. 
Choose $\epsilon>0$ such that $3 + f(\epsilon) \le \alpha + 2 \cdot (1-\epsilon)$. 
Note that the best choice of $\epsilon$ depends on $\alpha$. 
We will compute the constants in Section~\ref{sect:final_result}.

More importantly, we still have to prove Theorem~\ref{thm:mainVRPwTargetGroups}.

\section{Combinatorial algorithm for vehicle routing with target groups \label{sec:combinatorial}}

In this section we prove Theorem~\ref{thm:mainVRPwTargetGroups}, which we restate here in a more specific form:

\begin{theorem}\label{thm:mainVRPwTargetGroupsCombinatorial}
There is an algorithm for \textsc{Vehicle Routing with Target Groups} that 
runs in $O(n^3)$ time, where $n:=|V|+|\bar T|$, and
computes for every instance $\Jeuscr=(V,\bar T,s,c,\mathcal{T}, b)$ and any given $\gamma>2$
a feasible solution $\Pscr$ of $\Jeuscr$ such that 
\[
c(\Pscr) \ < \ 
\sfrac{1}{1-\tau} \cdot c(x) 
+ \sfrac{8}{\gamma-2} \cdot c(x|_{E_1(\Jeuscr)}) 
+ \bigl( 4+\sfrac{8}{\gamma-2}+2\gamma - \sfrac{\tau}{1-\tau} \bigr) \cdot \detour(x|_{E_1(\Jeuscr)}),
\]
for every weak fractional solution $x$ of $\Jeuscr$. 
\end{theorem}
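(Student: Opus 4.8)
The plan is to prove Theorem~\ref{thm:mainVRPwTargetGroupsCombinatorial} by reducing the construction of a feasible solution to two subproblems that can each be solved by network flow / minimum-cost forest computations, and then to charge their cost against a given weak fractional solution $x$. First I would fix a weak fractional solution $x = \sum_{P\in\Pscr}\lambda_P\chi^{E(P)}$ and consider the ``shortcut'' digraph in which we only keep arcs $(v,w)$ with $c(s,w) > c(s,v)$ (plus arcs entering $s$ and the targets). The key structural observation promised in the overview is that a cheapest \emph{monotone} walk solution — one in which every customer $v$ has at least one predecessor strictly closer to the depot — can be computed in polynomial time via a network-flow formulation (balancing in/out degrees at customers, prescribing $-b(\Tg)$ net flow into each target group, and enforcing connectivity by a Gomory--Hu / splitting-off style argument or simply by adding a cheap forest afterwards). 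So the algorithm has three stages: (1) greedily select a subset $V'\subseteq V$ of customers on which the monotonicity constraint is not too expensive; (2) compute a minimum-cost monotone walk solution serving $V'$ with the correct target-group degrees; (3) connect the remaining customers $V\setminus V'$ by a minimum-cost forest and double its edges, then invoke Proposition~\ref{prop:walk_solution_suffices} to extract paths.

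The cost accounting is the heart of the proof. For the walk-solution part I would argue that from $x$ one can build a \emph{monotone} fractional walk solution whose cost is controlled: each walk $P$ of $x$ can be ``re-routed'' so that it becomes monotone by replacing each maximal descending subwalk between two vertices $a,b$ (with $c(s,a)\ge\dots\ge c(s,b)$) by the direct arc or by a detour through $s$; the triangle inequality plus the definition of detour bounds the extra cost in terms of $\detour(x|_{E_1(\Jeuscr)})$ and $c(x|_{E_1(\Jeuscr)})$, which is exactly where the $\sfrac{8}{\gamma-2}$ and the linear-in-$\gamma$ terms come from — the greedy threshold parameter $\gamma$ trades off how much detour we allow a retained customer against how many customers get pushed into the forest stage. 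The factor $\sfrac{1}{1-\tau}$ appears because the weak fractional solution only routes $(1-\tau)b(\Tg)$ units into each group, so to meet the integral requirement $b(\Tg)$ we scale the fractional object up by $\sfrac{1}{1-\tau}$ before rounding; scaling multiplies $c(x)$ but, crucially, can be arranged to interact with the detour terms only through the stated coefficients. The forest stage for $V\setminus V'$ is bounded using that each such customer was discarded precisely because a short (bounded-detour) connection to the already-chosen structure exists, so $c(\text{forest}) = O\bigl(\sfrac{1}{\gamma-2}\bigr)\cdot c(x|_{E_1})+O(1)\cdot\detour(x|_{E_1})$ as well; doubling it accounts for the factor $2$ on $\sfrac{8}{\gamma-2}$ and the ``$4$''.

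Concretely the steps, in order, are: (i) set up the shortcut digraph and the network-flow LP whose integral optima are exactly the monotone walk solutions serving a prescribed customer set with prescribed target-group in-degrees, and show this LP is integral (it is a transshipment/circulation problem with integral data); (ii) describe the greedy algorithm that scans customers in order of decreasing $c(s,\cdot)$ and decides, using threshold $\gamma$, whether to keep a customer in $V'$ or defer it to the forest; (iii) prove the existence of an integral monotone walk solution on $V'$ of the claimed cost by exhibiting a fractional one derived from $x$ (re-routing + scaling by $\sfrac{1}{1-\tau}$) and appealing to integrality of the flow polytope; (iv) bound the doubled minimum-cost forest on $V\setminus V'$; (v) combine via Proposition~\ref{prop:walk_solution_suffices} and collect terms to match the stated bound; (vi) check the $O(n^3)$ running time (one min-cost flow, one MST, and the linear-time walk decomposition). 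The main obstacle I expect is step (iii): making the re-routing of each descending subwalk lose only a \emph{bounded} amount while simultaneously ensuring that after shortcutting and scaling the target-group in-degrees come out to exactly the even integers $b(\Tg)$ — this requires care because scaling $(1-\tau)b(\Tg)$ by $\sfrac{1}{1-\tau}$ gives exactly $b(\Tg)$ fractionally, but the rounding must respect both the parity/integrality of the flow and the connectivity requirement without a further cost blowup. Getting the constants $4+\sfrac{8}{\gamma-2}+2\gamma-\sfrac{\tau}{1-\tau}$ to come out exactly, rather than merely up to constants, is the bookkeeping-heavy part that I would defer to the detailed proof.
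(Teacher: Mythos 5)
Your three-stage skeleton (greedy selection of a subset with threshold $\gamma$, a min-cost-flow computation of a cheapest ``monotone''/forward walk solution on the kept customers, a doubled minimum-cost forest for the discarded ones, then Proposition~\ref{prop:walk_solution_suffices}) matches the paper's algorithm. But both of the crucial cost analyses are missing or would fail as described. First, your bound on the monotone walk solution rests on re-routing each descending subwalk of a walk in the support of $x$ ``by the direct arc or by a detour through $s$,'' charged via the triangle inequality. This does not work: the direct arc between the endpoints of a descending subwalk is still a backward arc, so it does not restore monotonicity, and routing through $s$ costs roughly twice the distance to the depot, which cannot be charged to $\detour(x|_{E_1(\Jeuscr)})$ or to $\sfrac{1}{\gamma-2}\cdot c(x|_{E_1(\Jeuscr)})$; indeed the paper's Figure~\ref{figure:no_cheap_forward_walk_solution} exhibits instances where \emph{every} forward walk solution is a factor $\Omega(\log n)$ more expensive than $x$, so no charging argument that ignores the greedy deletion can succeed. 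The actual mechanism (Lemma~\ref{lemma:cheap_if_nice}) is different: after deleting $Y$ the surviving instance is \emph{nice}, one augments, for each kept customer $v$ with deficient forward in-flow, flow along the cycle through $\parent(v)$, and the cost $2c(\parent(v),v)$ is charged to the backward arcs $(y,v)$ carried by $x$ via the niceness inequality $c(\parent(v),v)\le\gamma\cdot\detour(y,v)$ — this is where the $2\gamma$ coefficient comes from, and your proposal never formulates this inequality or uses the deletion of $Y$ in the charging. (The $\sfrac{1}{1-\tau}$ scaling you worry about is in fact the easy part: one simply ships $\sfrac{\tau}{1-\tau}\,x(\delta^-(t))$ extra units directly from $s$ to each target, and integrality comes for free from min-cost flow.)

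Second, the forest stage is dismissed in one sentence (``a short bounded-detour connection exists because the customer was discarded''), but this is the technically hardest part of the paper (Lemmas~\ref{lemma:short_forest_from_fractional}--\ref{lem:key_forest_lemma}), and the one-edge-per-discarded-customer argument does not give the stated bound. Being discarded only tells you that $y$ has a witness $w(y)$ with $c(y,w(y))\le\detour(y,w(y))<\sfrac{1}{\gamma}\,c(\parent(w(y)),w(y))$, a quantity measured against the \emph{kept} structure, not against $x$; many discarded customers can share the same witness, and summing these edges is not comparable to $\sfrac{4}{\gamma-2}\cdot c(x|_{E_1(\Jeuscr)})+(2+\sfrac{4}{\gamma-2})\cdot\detour(x|_{E_1(\Jeuscr)})$. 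The paper instead decomposes $x$ into paths, builds the forest from subpaths $P_{[x(y),z(y)]}$ between carefully chosen nondominated vertices plus one connecting edge $\{y,w(y)\}$ per selected vertex, and the key geometric estimate \eqref{eq:forest_bound_main_lemma} together with a minimality argument (each subpath counted at most twice, distinct entering edges $(\pred_P(x(y)),x(y))$) yields the $\sfrac{4}{\gamma-2}$ and the constant $2$. Without an argument of this kind your coefficients $\sfrac{8}{\gamma-2}$ and $4$ are unsupported, so the proposal as it stands has a genuine gap at both charging steps, even though the algorithmic outline is the right one.
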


The first observation is that, informally, if a weak fractional solution $x$ has small total detour, 
then for almost every arc $(u,v)$ in almost every walk that is part of $x$:
\begin{enumerate}[(i)]
\item $u$ is closer to $s$ than $v$, or
\item the distance from $u$ to $v$ is small.
\end{enumerate}

Of course, we do not know $x$ and hence we cannot classify the vertices accordingly.
Nevertheless we do something similar. 
In a walk from $s$ to a target, a customer may have more than one entering arc, 
and we will require that for every customer $v$ there is at least one entering arc $(u,v)$ that fulfills (i).
Let us call this a \emph{forward walk solution} (see Subsection~\ref{subsection:forward_walks} for precise definitions). 
A cheap forward walk solution does not always exist, but we will give a sufficient condition (which we call \emph{nice}): 
roughly, for every $v\in V$, 
there is a vertex closer to $s$ that is not much farther from $v$ than any other vertex.

In Subsection~\ref{subsection:compute_nice_subinstance} we will compute a subset of $V$ 
that induces a nice subinstance and hence is spanned by a cheap forward walk solution.
Finding a cheapest forward walk solution is a simple network flow problem (cf.\ Subsection~\ref{subsection:forward_walks}).
Afterwards, we can insert the other vertices into the resulting walks at a small cost; 
this is shown in Subsection~\ref{section:short_forest}. 
In the end, we shortcut the walks to paths, using Proposition~\ref{prop:walk_solution_suffices}.

\subsection{Cheapest forward walk solutions \label{subsection:forward_walks}}

Given an instance $\Jeuscr=(V,\bar T,s,c,\mathcal{T}, b)$ of \textsc{Vehicle Routing with Target Groups},
let $\prec$ be a total order on $\{s\}\cup V$ such that $s\prec v$ for all $v\in V$ and
$v\prec w$ whenever $c(s,v)<c(s,w)$. We call such an order a \emph{depot distance order}.
Now consider the digraph $G(\Jeuscr)$ associated with this instance (cf.\ Section~\ref{sec:overview}).
We call an arc $(v,w)\in E_1(\Jeuscr)$ a \emph{forward arc} if $v\prec w$, otherwise a \emph{backward arc}.
A \emph{forward walk solution} is a walk solution $H$ that 
contains a forward arc entering $v$ for every $v\in V$. 
Note that this condition implies that every $v\in V$ can be reached from $s$ 
and thus $H$ connects all elements of $V$.

We have seen that computing a cheapest walk solution is 
equivalent to solving \textsc{Vehicle Routing with Target Groups} and hence APX-hard.
In contrast, finding a cheapest \emph{forward} walk solution is easy:

\begin{lemma}\label{lemma:find_cheap_forward_walk}
Given an instance of \textsc{Vehicle Routing with Target Groups} and a depot distance order, 
one can compute a cheapest forward walk solution $H$ in $O(n^3)$ time, 
where $n=|V|+|\bar T|$. Moreover, $|H|\le n^2$.
\end{lemma}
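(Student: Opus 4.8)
The plan is to reduce the computation of a cheapest forward walk solution to a minimum-cost flow problem with arc lower bounds. First I would build an auxiliary digraph $D$ that encodes $G(\Jeuscr)$ together with the forward-arc requirement. Take the depot $s$ as a source of supply $\sum_{\Tg\in\Tscr}b(\Tg)$ and, for each target group $\Tg$, a sink node of demand $b(\Tg)$ reached by cost-$0$ arcs from each target $t\in\Tg$, where each such $t$ is in turn reached by arcs $(v,t)$ of cost $c(v,t)$ for $v\in\{s\}\cup V$ (these represent the final arcs of $E_2(\Jeuscr)$ ending in $\Tg$); also keep an arc $(v,s)$ of cost $c(v,s)$ for each $v\in V$ (the final arcs of cycle tours). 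To force ``at least one incoming forward arc'' into each customer $v$ — which is a lower bound on a \emph{bundle} of arcs and hence not directly a flow constraint — I would split $v$ into an in-copy and an out-copy: all \emph{forward} arcs $(u,v)\in E_1(\Jeuscr)$ (those with $u\prec v$) enter the in-copy, there is an internal arc from the in-copy to the out-copy with cost $0$ and lower bound $1$, all \emph{backward} arcs of $E_1(\Jeuscr)$ as well as this internal arc feed the out-copy, and every arc leaving $v$ leaves the out-copy. Flow conservation at the two copies together with the unit lower bound on the internal arc is exactly the condition ``$\ge 1$ incoming forward arc''. The digraph $D$ has $O(n)$ nodes, $O(n^2)$ arcs, integral data, and total flow value $\sum_{\Tg}b(\Tg)=O(n)$.

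Next I would establish a cost-preserving correspondence between integral feasible flows in $D$ and forward walk solutions of $\Jeuscr$: reading off, for each arc of $E(\Jeuscr)$, the flow value of the corresponding arc of $D$ as its multiplicity (and discarding the internal arcs) turns a feasible flow into a multiset $H$ with $|\delta^+_H(v)|=|\delta^-_H(v)|$ for all $v\in V$, with $b(\Tg)$ arcs entering each group, and with a forward arc entering every customer; conversely every forward walk solution induces such a flow. The one point needing a remark is the connectivity condition of Definition~\ref{def:walk_solution}: it holds automatically for any forward walk solution, since from any $v\in V$ the mandated incoming forward arc leads to a $\prec$-smaller vertex, and iterating this descent (strictly decreasing in $\prec$) reaches $s$; hence every customer is connected to $s$ in $H$, and it is legitimate that $D$ models only flow conservation, the target demands, and the unit lower bounds. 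A feasible flow exists because forward walk solutions do — e.g. take, for each $v\in V$, the two arcs $(s,v)$ and $(v,s)$ (note $(s,v)$ is forward as $s\prec v$), plus $b(\Tg)$ copies of an arc $(s,t)$ with $t\in\Tg$ for each group. Since all costs are non-negative, a minimum-cost integral flow exists and corresponds to a cheapest forward walk solution.

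For the running time I would use standard minimum-cost-flow machinery: a feasibility phase that saturates the unit lower bounds and balances the resulting excesses and deficits by a single feasible-flow computation, followed by optimization of the residual flow via successive shortest augmenting paths with node potentials. Because the total flow value is $O(n)$ there are $O(n)$ augmentations, each a shortest-path computation costing $O(n^2)$ on the dense digraph $D$, so the total running time is $O(n^3)$. For the size bound I would pick, among optimal integral flows, one of minimum total flow value; then no cost-zero directed cycle in its support can be reduced without violating a unit lower bound, so up to $O(n)$ one-unit cycles pinned by those lower bounds its support decomposes into $O\bigl(\sum_{\Tg}b(\Tg)\bigr)=O(n)$ directed walks, each of length $O(n)$, giving $|H|=O(n^2)$.

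The part I expect to require the most care is getting the flow model exactly right: encoding the bundle lower bound ``$\ge 1$ incoming forward arc'' as a single-arc lower bound via node-splitting, and then verifying that the correspondence with forward walk solutions is exact in both directions — in particular that the connectivity requirement comes for free from the forward arcs, that cycle tours through $s$ are represented faithfully, and that the choice of which target within a group a walk ends at is handled correctly. Everything else — feasibility of the flow, the $O(n^3)$ bound from textbook algorithms, and the $O(n^2)$ bound on $|H|$ via a minimum-total-value optimal solution — is routine bookkeeping.
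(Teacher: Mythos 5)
Your reduction is the same as the paper's: splitting each customer into an in-copy and an out-copy with a mandatory unit on the internal arc is exactly the construction of the digraph $G'$ with vertices $v^-,v^+$ and the requirement of at least one unit of flow on $(v^-,v^+)$; the group sink nodes with cost-$0$ arcs from their targets, the flow/walk correspondence, and the observation that connectivity comes for free (the mandated incoming forward arcs give a $\prec$-decreasing path back to $s$) all match the paper's argument. So the modelling part is correct.

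The gap is in the running-time and size analysis. You assert that the total flow value $\sum_{\Tg\in\Tscr}b(\Tg)$ is $O(n)$, but nothing in the definition of \textsc{Vehicle Routing with Target Groups} gives this: the numbers $b(\Tg)$ are arbitrary positive integers supplied in the input (the bound $\sum_{\Tg\in\Tscr}b(\Tg)=O(|V|)$ does hold for the instances produced by Algorithm~\ref{algo:constructing_vrpwtg_instance}, but Lemma~\ref{lemma:find_cheap_forward_walk} is stated for general instances and you neither invoke nor prove such a bound). Consequently ``one augmentation per unit of flow'' does not yield $O(n)$ shortest-path computations, and your bound on $|H|$ (a decomposition into $O\bigl(\sum_{\Tg\in\Tscr}b(\Tg)\bigr)$ walks of length $O(n)$) collapses for the same reason; even granting the $O(n)$ assumption it gives only $O(n^2)$ rather than the claimed $|H|\le n^2$. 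The paper's proof avoids any dependence on the magnitude of $b$: it first ships, for each group $\Tg$, all $b(\Tg)$ units in a single step along the direct arc to a cheapest target $t\in\Tg$ — by the triangle inequality this is already a minimum-cost flow for the group demands, so these units never need to be rerouted one by one — and then performs exactly $|V|$ augmentations, one per lower-bound arc $(v^-,v^+)$, each a Dijkstra computation with node potentials in $O(n^2)$ time; this gives $O(n^3)$ overall and at most $n$ augmenting paths of length at most $n$, hence $|H|\le n^2$. A smaller inaccuracy in the same paragraph: successive shortest paths does not ``optimize'' an arbitrary feasible flow (it preserves optimality only when started from an optimal pseudoflow, such as the bulk shipment above); repairing an arbitrary feasible flow would require negative-cycle canceling. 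Your proof becomes correct once the generic ``standard machinery'' step is replaced by this bulk-shipment-plus-$|V|$-augmentations scheme.
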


\begin{proof}
This can be formulated as a minimum-cost flow problem.
From the digraph $G(\Jeuscr)$ associated with the instance $\Jeuscr$ we construct another digraph $G'$.
To this end, we replace the depot vertex $s$ by $s^+$,
replace every customer vertex $v\in V$ by two vertices $v^-$ and $v^+$, joined by a new arc $(v^-,v^+)$ of cost zero, 
replace every forward arc $(v,w)$ by $(v^+,w^-)$, every backward arc $(v,w)$ by $(v^+,w^+)$,
every arc $(v,t)$ for $t\in \bar T$ by $(v^+,t)$, 
and every arc $(v,s)$ by $(v^+,s^+)$. 
For each $\Tg  \in\Tscr$ we add a vertex $\Tg  $ and arcs $(t,\Tg  )$ with cost zero for all $t\in \Tg  $.

We are looking for a flow that ships $\sum_{\Tg  \in\Tscr}b(\Tg  )$ units out of $s^+$, 
$b(\Tg  )$ units into $\Tg  $ for each $\Tg  \in\Tscr$, and at least one unit of flow along each new arc $(v^-,v^+)$.
We call such a flow \emph{feasible}. 

Every integral feasible flow $f$ corresponds to a forward walk solution of the same cost
(by taking $f(e')$ copies of each $e\in E$, where $e'$ is the corresponding arc in $G'$),
and vice versa.
It is well-known that there exists a minimum-cost feasible flow that is integral. 

Moreover, such a flow can
be found in $O(n^3)$ time by the successive shortest paths algorithm as follows. 
First, for each target group $\Tg  \in\Tscr$, ship $b(\Tg  )$ units from $s^+$ via $t$ to $\Tg  $, where $t\in \Tg  $
is a target in that group for which $c(s,t)$ is minimum.
Then, successively for each $v\in V$, find a shortest augmenting path from $v^+$ to $v^-$ in the residual network.
Each of these $|V|$ iterations can be performed by Dijkstra's algorithm and thus 
takes $O(E(G'))=O(|V|\cdot (|V|+|\bar T|))$ time if we keep updating a feasible potential in the residual network
(\cite{EdmondsKarp,Tomizawa}).
At the end, add one unit of flow from $v^-$ to $v^+$ for all $v\in V$.

Since the flow results from augmenting along at most $n$ paths, each of length at most $n$, 
we get $|H|\le n^2$ for the resulting forward walk solution $H$.
\end{proof}

Unfortunately, a cheap forward walk solution does not always exist.
For $w\in V$ and $W\subseteq V$ we denote by 
\begin{equation*}
\parent_W(w) \ := \ \arg\min\{c(p,w) : p\in \{s\}\cup W,\, p\prec w\}
\end{equation*}
a vertex closest to $w$ among all vertices in $\{s\}\cup W$ that precede $w$ in the depot distance order.
Then the cheapest forward walk solution has cost at least $\sum_{v\in V}c(\parent_V(v),v)$.
Figure~\ref{figure:no_cheap_forward_walk_solution} shows that this can be a factor $\Omega(\log n)$
larger than a cheapest solution to the instance of \textsc{Vehicle Routing with Target Groups} (even with a single target $t$).
However, this is not the case if the instance is \emph{nice}, in the following sense:

\begin{figure}[ht]
\begin{center}
\begin{tikzpicture}[xscale=0.9,yscale=0.9, rotate=270]

\tikzstyle{vertex}=[blue,circle,fill,minimum size=3,inner sep=0pt]
\node[vertex] (v2) at (4,-0.1) {};
\node[vertex] (v3) at (2,0.0) {};
\node[vertex] (v4) at (6,0.0) {};
\node[vertex] (v5) at (1,0.1) {};
\node[vertex] (v6) at (3,0.1) {};
\node[vertex] (v7) at (5,0.1) {};
\node[vertex] (v8) at (7,0.1) {};
\node[vertex] (v9) at (0.5,0.2) {};
\node[vertex] (v10) at (1.5,0.2) {};
\node[vertex] (v11) at (2.5,0.2) {};
\node[vertex] (v12) at (3.5,0.2) {};
\node[vertex] (v13) at (4.5,0.2) {};
\node[vertex] (v14) at (5.5,0.2) {};
\node[vertex] (v15) at (6.5,0.2) {};
\node[vertex] (v16) at (7.5,0.2) {};
\node[vertex] (t) at (8,0.2) {};

\draw[blue,fill, rounded corners=1pt] (0,-4.04) rectangle (8,-3.96);

\draw[draw=darkgreen, thick] (0.5,-4)--(v9)--(v5)--(v10)--(v3)--(v11)--(v6)--(v12)--(v2)--(v13)--(v7)--(v14)--(v4)--(v15)--(v8)--(v16)--(t);
 
\node[right] at (3,-4) {\textcolor{blue}{$s$}};
\node[left] at (t) {\textcolor{blue}{$t$}};

\end{tikzpicture}
\hspace*{5cm}
\begin{tikzpicture}[xscale=0.9,yscale=0.9, rotate=270]

\tikzstyle{vertex}=[blue,circle,fill,minimum size=3,inner sep=0pt]
\node[vertex] (v2) at (4,-0.1) {};
\node[vertex] (v3) at (2,0.0) {};
\node[vertex] (v4) at (6,0.0) {};
\node[vertex] (v5) at (1,0.1) {};
\node[vertex] (v6) at (3,0.1) {};
\node[vertex] (v7) at (5,0.1) {};
\node[vertex] (v8) at (7,0.1) {};
\node[vertex] (v9) at (0.5,0.2) {};
\node[vertex] (v10) at (1.5,0.2) {};
\node[vertex] (v11) at (2.5,0.2) {};
\node[vertex] (v12) at (3.5,0.2) {};
\node[vertex] (v13) at (4.5,0.2) {};
\node[vertex] (v14) at (5.5,0.2) {};
\node[vertex] (v15) at (6.5,0.2) {};
\node[vertex] (v16) at (7.5,0.2) {};
\node[vertex] (t) at (8,0.2) {};

\draw[blue,fill, rounded corners=1pt] (0,-4.04) rectangle (8,-3.96);

\tikzstyle{edge}=[red,draw, thick]
\draw[edge] (4,-4)--(v2);
\draw[edge] (v2)--(v3);
\draw[edge] (v2)--(v4);
\draw[edge] (v3)--(v5);
\draw[edge] (v3)--(v6);
\draw[edge] (v4)--(v7);
\draw[edge] (v4)--(v8);
\draw[edge] (v5)--(v9);
\draw[edge] (v5)--(v10);
\draw[edge] (v6)--(v11);
\draw[edge] (v6)--(v12);
\draw[edge] (v7)--(v13);
\draw[edge] (v7)--(v14);
\draw[edge] (v8)--(v15);
\draw[edge] (v8)--(v16);
 
\node[right] at (3,-4) {\textcolor{blue}{$s$}};
\node[left] at (t) {\textcolor{blue}{$t$}};

\end{tikzpicture}
\caption{An instance of \textsc{Vehicle Routing with Target Groups} with a single target $t$ and $b(\{t\})=1$.
Distances to $s$ correspond to $x$-coordinates, other distances correspond to differences of $y$-coordinates.
On the left, an optimum solution (an $s$-$t$-tour) is shown. 
On the right, we see the edges $\{\parent_V(v),v\}$ for $v \in V$. 
The total length of these edges, and hence the cost of any forward walk solution, 
is more than a factor $\frac{1}{3}\log_2 n$ larger.\label{figure:no_cheap_forward_walk_solution}}
\end{center}
\end{figure}
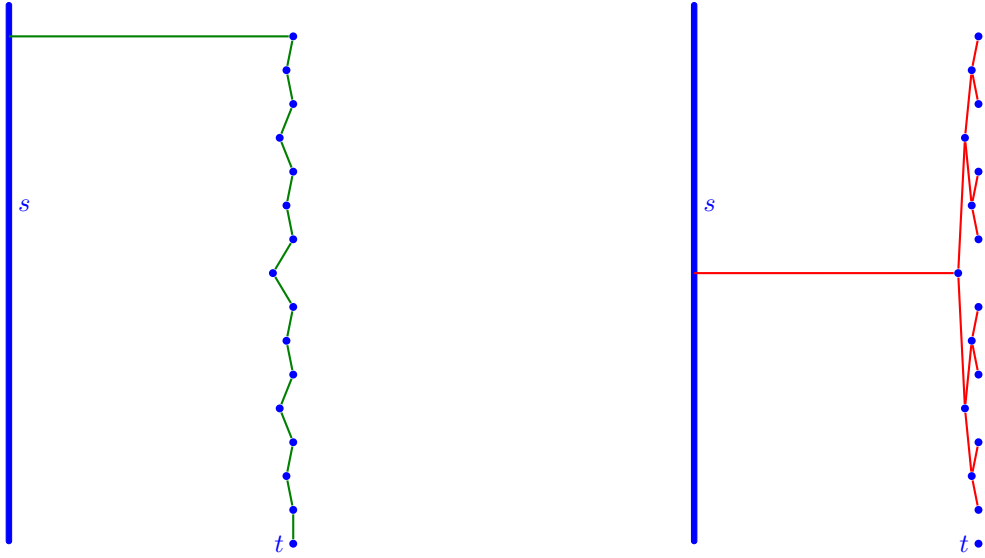

\begin{definition} 
Let $\gamma >2$ be a constant. 
Let $\Jeuscr=(V,\bar T,s,c,\mathcal{T},b)$ be an instance of \textsc{Vehicle Routing with Target Groups} 
with a depot distance order $\prec$. 
We call the pair $(\Jeuscr,\prec)$ \emph{nice} if for each $v \in V$:
\begin{equation}\label{eq:parent_close}
c(\parent_V(v),v) \ \leq \ \gamma \cdot \detour(y,v) \ \text{ for all } y \in V \text{ with } v\prec y.
\end{equation}
\end{definition}

\begin{figure}[ht]
\begin{center}
\begin{tikzpicture}
\begin{scope}[scale=0.9,rotate=270]
\draw[->, thick] (-3.3,-3)--(-3.3,1);
\node[below] at (-3.3,1) {$\prec$};

\draw[draw=yellow,fill=yellow] (-3,0) arc (180:360:3)--cycle;

\draw[draw=red,fill=red] plot[domain=-1.5:1.5,samples=50] (\x,{(2.25-\x*\x)/3});
\node[red,right] at (0.6,0.6) {\small{$A_{v}$}};

\tikzstyle{vertex}=[blue,circle,fill,minimum size=3,inner sep=0pt]
\node[vertex] (w) at (0,0) {};			
\node[blue, left] at (0,0) {$v$};
\node[vertex] (parent) at (2.4,-1.8) {};
\node[blue, left] at (2.4,-1.8) {$\parent(v)$};

\end{scope}
\end{tikzpicture}
\end{center}
\caption{In a nice instance, the parent of any vertex $v$ must not be much farther away than any other vertex.
For $\gamma=2$, Euclidean distances, and the depot located far west, this means that if the large (yellow) hemicycle is empty, 
the (red) area $A_v$ must also be empty.\label{fig:explain_nice}}
\end{figure}
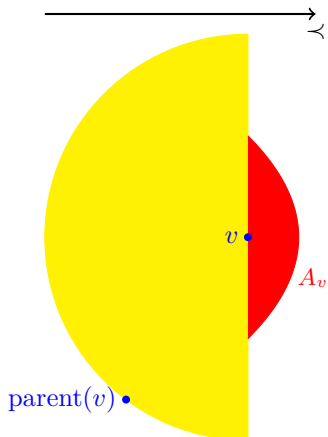				

See Figure~\ref{fig:explain_nice} for an illustration.
We now show that if an instance is nice, then it has a cheap forward walk solution:

\begin{lemma}\label{lemma:cheap_if_nice}
Let $\Jeuscr=(V,\bar T,s,c,\mathcal{T},b)$ be an instance of \textsc{Vehicle Routing with Target Groups} 
with a depot distance order $\prec$ such that $(\Jeuscr,\prec)$ is nice.
Then, for every weak fractional solution $x$ there is a forward walk solution of cost at most
\[
\sfrac{1}{1-\tau} \cdot c(x) + (2\gamma -  \sfrac{\tau}{1-\tau}) \cdot \detour(x|_{E_1(\Jeuscr)}).
\]
\end{lemma}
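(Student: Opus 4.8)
The plan is to exhibit a feasible \emph{fractional} flow in the min-cost-flow network $G'$ from the proof of Lemma~\ref{lemma:find_cheap_forward_walk} whose cost is at most the claimed bound, and then invoke the integrality of the min-cost flow polyhedron: a minimum-cost integral flow then exists with cost at most that, and by the correspondence in that proof it is exactly a (integral) forward walk solution. I start from a walk decomposition $x=\sum_{P\in\Pscr}\lambda_P\chi^{E(P)}$ of the given weak fractional solution. Note $x$ already satisfies balance at every customer and covers every customer (so $x(\delta^-(v))\ge1$), and it sends exactly $(1-\tau)\,b(\Tg)$ units into each target group $\Tg$; what is missing is (a) the full demand $b(\Tg)$ at each group and (b) the forward-entering property. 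I fix these in two steps, each adding a controlled amount to the cost.

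\textbf{Step 1 (target demands, without scaling).} The naive fix — scale all of $x$ by $\tfrac1{1-\tau}$ — inflates the detour term to $\tfrac{2\gamma}{1-\tau}\detour(x|_{E_1(\Jeuscr)})$, which is too much. Instead, for each target group $\Tg$ I add $\tau\,b(\Tg)$ units of the single arc $(s,t_\Tg)$, where $t_\Tg\in\Tg$ minimizes $c(s,\cdot)$. This restores $b(\Tg)$ units into $\Tg$, and since the new arcs lie in $E_2(\Jeuscr)$ it leaves $x|_{E_1(\Jeuscr)}$ unchanged. To bound $\sum_\Tg\tau\,b(\Tg)\,c(s,t_\Tg)$ I charge $c(s,t_\Tg)$ against the walks ending in $\Tg$: any such walk $P$ has $c(P)=\detour(P|_{E_1(\Jeuscr)})+\detour(P|_{E_2(\Jeuscr)})+c(s,t_\Tg)\ge\detour(P|_{E_1(\Jeuscr)})+c(s,t_\Tg)$, and these walks have total weight $(1-\tau)\,b(\Tg)$, so $\tau\,b(\Tg)\,c(s,t_\Tg)\le\tfrac{\tau}{1-\tau}\sum_{P\to\Tg}\lambda_P\bigl(c(P)-\detour(P|_{E_1(\Jeuscr)})\bigr)$. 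Finally, every walk ending at $s$ has $c(P)-\detour(P|_{E_1(\Jeuscr)})=\detour(P|_{E_2(\Jeuscr)})\ge0$, so passing from target-ending walks to all walks only increases the right-hand side, giving $\sum_\Tg\tau\,b(\Tg)\,c(s,t_\Tg)\le\tfrac{\tau}{1-\tau}\bigl(c(x)-\detour(x|_{E_1(\Jeuscr)})\bigr)$.

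\textbf{Step 2 (forward-entering).} For each customer $v$, let $\delta^-_{\mathrm{fwd}}(v)$ and $\delta^-_{\mathrm{bwd}}(v)$ be its entering forward and backward arcs, and set $\delta_v:=\bigl(1-x(\delta^-_{\mathrm{fwd}}(v))\bigr)^+$. I add $\delta_v$ units of the $2$-cycle on $\{v,\parent_V(v)\}$, i.e. of the forward arc $(\parent_V(v),v)$ together with the backward arc $(v,\parent_V(v))$. This preserves balance at every customer and, since $\parent_V(v)\in\{s\}\cup V$, affects no target group. The crucial observation is that this operation never \emph{decreases} the forward in-flow of any vertex — it only adds a backward arc into $\parent_V(v)$ — so there is no cascade, and after performing it for all $v$ every customer has forward in-flow at least $1$. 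Because $x(\delta^-(v))\ge1$ we have $\delta_v\le x(\delta^-_{\mathrm{bwd}}(v))$, so the cost of the added cycles is at most $2\sum_v x(\delta^-_{\mathrm{bwd}}(v))\,c(\parent_V(v),v)=2\sum_v\sum_{(y,v)\text{ backward}}x_{(y,v)}\,c(\parent_V(v),v)$; for each backward arc $(y,v)$ we have $v\prec y$ and $y\in V$, so niceness gives $c(\parent_V(v),v)\le\gamma\,\detour(y,v)$, and the sum is $\le 2\gamma\,\detour(x|_{E_1(\Jeuscr)})$.

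\textbf{Wrap-up and main obstacle.} Adding the bounds from Steps~1 and~2 to $c(x)$ gives total cost at most $\bigl(1+\tfrac{\tau}{1-\tau}\bigr)c(x)+\bigl(2\gamma-\tfrac{\tau}{1-\tau}\bigr)\detour(x|_{E_1(\Jeuscr)})=\tfrac1{1-\tau}\,c(x)+\bigl(2\gamma-\tfrac{\tau}{1-\tau}\bigr)\detour(x|_{E_1(\Jeuscr)})$. The resulting vector is a feasible fractional flow in $G'$ — balance at every customer, $b(\Tg)$ into each target group, and at least one unit of forward in-flow at every customer, which as noted in the excerpt also yields connectivity by tracing forward arcs back to $s$ — so rounding to an integral minimum-cost flow produces a forward walk solution of no larger cost. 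I expect the main obstacle to be pinning down the detour coefficient as $2\gamma-\tfrac{\tau}{1-\tau}$ rather than the wasteful $\tfrac{2\gamma}{1-\tau}$: this is precisely what forces one to avoid scaling $x$ globally and instead add the cheap direct arcs $(s,t_\Tg)$, charging them against $c(x)-\detour(x|_{E_1(\Jeuscr)})$ and exploiting that $s$-ending walks contribute nonnegatively to that difference. The second point requiring care is the no-cascade observation in Step~2, namely that forward in-degree is monotone under the addition of backward arcs.
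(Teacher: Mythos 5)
Your proposal is correct and follows essentially the same route as the paper's proof: both turn $x$ into a feasible fractional flow in the network $G'$ of Lemma~\ref{lemma:find_cheap_forward_walk} by adding direct $s$-to-target arcs whose cost is charged to $\sfrac{\tau}{1-\tau}\bigl(c(x)-\detour(x|_{E_1(\Jeuscr)})\bigr)$ and by adding cycles through $\parent_V(v)$ whose cost is bounded via niceness by $2\gamma\cdot\detour(x|_{E_1(\Jeuscr)})$, and then appeal to integrality of the min-cost flow polyhedron. The only (immaterial) difference is that the paper routes the extra $\tau\, b(\Tg)$ units proportionally over all targets $t\in\bar T$ rather than through a cheapest target per group, and your intermediate identity $c(P)=\detour(P)+c(s,t_\Tg)$ should read $c(s,t)$ for the actual endpoint $t$ of $P$, which only strengthens the inequality you use.
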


\begin{proof}
Let $x$ be a weak fractional solution. Then $x$ can be interpreted as a flow
in $G(\Jeuscr)$ that sends $(1-\tau)\sum_{\Tg  \in\Tscr}b(\Tg  )$ units out of $s$ 
and $(1-\tau)b(\Tg  )$ units into $\Tg  $ for all $\Tg  \in\Tscr$.
This maps to a flow $x'$ in $G'$ with the same properties, where $G'$ is the digraph constructed in the
proof of Theorem~\ref{lemma:find_cheap_forward_walk}.

In order to make $x'$ feasible, we need to send more flow into the target groups and 
ensure at least one unit of flow along $(v^-,v^+)$ for all $v\in V$.

First, to ensure enough flow arriving at the target groups, 
we send $\frac{\tau}{1-\tau} x(\delta^-(t))$ units of flow directly from $s$ to $t$ for all $t\in \bar T$.
This makes the missing $\tau b(\Tg  )$ units of flow arrive in $\Tg  $ for each target group $\Tg  \in\Tscr$.
This step increases the cost by $\frac{\tau}{1-\tau}(c(x)-\detour(x))$.

Second, we ensure at least one unit of flow along $(v^-,v^+)$ for all $v\in V$.
To this end, for each $v\in V$, we add $\max\{0,1-x'_{(v^-,v^+)}\}$ units of flow along the cycle
$(\parent(v)^+,v^-)$, $(v^-,v^+)$, $(v^+,\parent(v)^+)$.
Let us call the resulting flow $f$.
Note that $x$ ships at least one unit through $v$ and hence at least $\max\{0,1-x'_{(v^-,v^+)}\}$ 
into $v$ along backward arcs. Therefore the cost of $f$ is
\begin{align*}
& c(x) + \sfrac{\tau}{1-\tau} (c(x) -\detour(x) )
   + \sum_{v\in V} \max\{0,1-x'_{(v^-,v^+)}\} \cdot 2 \cdot c(\parent(v),v) \\
\ \le \ & c(x) + \sfrac{\tau}{1-\tau} (c(x) -\detour(x) )
   + \sum_{v\in V} \sum_{y\in V:v\prec y} x_{(y,v)} \cdot 2 \cdot c(\parent(v),v) \\
\ \le \ & c(x) + \sfrac{\tau}{1-\tau} (c(x) -\detour(x) )
   + \sum_{v\in V} \sum_{y\in V:v\prec y} x_{(y,v)} \cdot 2\gamma\cdot \detour(y,v) \\
\ \le \ & \sfrac{1}{1-\tau} \cdot c(x) + (2\gamma -  \sfrac{\tau}{1-\tau}) \cdot \detour(x|_{E_1(\Jeuscr)}),
\end{align*}
where we used that $(\Jeuscr,\prec)$ is nice in the second inequality.
\end{proof}

We remark that it is essential that we allow walks rather than paths in a forward walk solution
because otherwise the cost could increase by a factor $\Omega(n)$, even for nice instances.

Of course, not every instance is nice. We will now deal with general instances.

\subsection{Computing a nice subinstance and overall algorithm\label{subsection:compute_nice_subinstance}}

If a given instance is not nice, we compute a maximal nice subinstance.

\begin{lemma}\label{lemma:compute_nice_subinstance}
Given an instance of \textsc{Vehicle Routing with Target Groups} with a depot distance order, 
we can compute in $O(n^2)$ time 
a subset $Y\subseteq V$ such that the instance that results from deleting $Y$ is nice
and for every $y\in Y$ there exists a $w(y)\in V\setminus Y$ such that $w(y)\prec y$ and
\begin{equation*}
c \bigl( \parent_{V\setminus Y}(w(y)),w(y) \bigr) \ > \ \gamma \cdot \detour(y,w(y)).
\end{equation*}
\end{lemma}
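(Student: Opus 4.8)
The plan is to build $Y$ greedily in a single pass over $V$ in the depot distance order $\prec$, from the customer closest to $s$ to the one farthest away. Throughout the sweep I maintain the set $V'$ of customers kept so far together with, for each kept customer $w$, the value $r(w) := c(\parent_{\{s\}\cup V'}(w),w)$ recorded at the moment $w$ is added to $V'$. When the sweep reaches a customer $y$, I test whether there is some $w\in V'$ with $r(w) > \gamma\cdot\detour(y,w)$. If such a $w$ exists, I put $y$ into $Y$ and set $w(y)$ to be one such $w$; otherwise I add $y$ to $V'$ and store $r(y)$. At the end, $Y$ is the set of discarded customers.

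The first thing to verify is that the stored values are the intended ones, i.e.\ that $r(w) = c(\parent_{V\setminus Y}(w),w)$ for every finally kept $w$. This is exactly where processing in the order $\prec$ pays off: when $w$ is handled, $V'$ consists precisely of the already-kept customers among $\{v : v\prec w\}$, which is the same as the set of finally kept customers preceding $w$ (no kept customer is ever discarded later, and every customer added to $V'$ after $w$ is $\succ w$ and therefore cannot be $\parent(w)$). Since $s\prec w$ always holds, $\parent_{\{s\}\cup V'}(w)$ is well defined and $r(w) = c(\parent_{V\setminus Y}(w),w)$, as claimed. I expect this stability argument to be the one genuinely delicate point; the rest is bookkeeping.

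Given this, both conclusions follow. For the second assertion: when $y$ is put into $Y$, the chosen $w(y)$ lies in $V'$ at that moment, so $w(y)\prec y$ and $w(y)$ is finally kept, i.e.\ $w(y)\in V\setminus Y$; moreover $r(w(y)) > \gamma\cdot\detour(y,w(y))$, which by the stability fact is exactly $c(\parent_{V\setminus Y}(w(y)),w(y)) > \gamma\cdot\detour(y,w(y))$. For niceness of the instance obtained by deleting $Y$: suppose it fails, so there are $v,y\in V\setminus Y$ with $v\prec y$ and $c(\parent_{V\setminus Y}(v),v) > \gamma\cdot\detour(y,v)$. Then $v$ was already in $V'$ when $y$ was processed, and $r(v) = c(\parent_{V\setminus Y}(v),v)$, so the test for $y$ would have succeeded (witnessed by $w=v$), forcing $y\in Y$, a contradiction. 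Note that targets are irrelevant here, since niceness and $\parent$ refer only to $V$.

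Finally, for the running time: the sweep has $|V|\le n$ iterations, and each iteration computes $\parent_{\{s\}\cup V'}(y)$ and scans all $w\in V'$ to compare $r(w)$ with $\gamma\cdot\detour(y,w) = \gamma\cdot(c(y,w)+c(s,y)-c(s,w))$, which costs $O(|V'|) = O(n)$; hence $O(n^2)$ total, as required.
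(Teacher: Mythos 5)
Your proof is correct and takes essentially the same route as the paper's: the paper's greedy scan marks, for each kept vertex $v$, all later vertices in $A_v = \{y : v\prec y,\ c(\parent_{V\setminus Y}(v),v) > \gamma\cdot\detour(y,v)\}$ for deletion, while you test each newly scanned vertex against the previously kept ones with their recorded parent distances --- the two formulations produce the same set $Y$, and your stability observation (the parent of a kept vertex within $V\setminus Y$ is already determined at the moment it is scanned, since only later vertices are decided afterwards) is precisely the fact that makes the paper's terse argument valid. No gaps.
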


\begin{proof}
Initially set $Y=\emptyset$. 
Scan the elements of $V$ in the depot distance order $\prec$.
If we scan an element that has been added to $Y$ before, we do nothing.
Otherwise, if we scan an element $v$ that has not been added to $Y$, then we add all elements of
\begin{equation}\label{eq:def_A_w}
A_v \ := \ \left\{ y\in V: v\prec y,\, c(\parent_{V\setminus Y}(v),v) > \gamma\cdot\detour(y,v) \right\}
\end{equation}
to $Y$ and set $w(y):=v$ for all $y\in A_v$.
See Figure~\ref{fig:explain_nice}.
\end{proof}

We will solve the nice subinstance that results from deleting $Y$ as described in the previous subsection. 
The deleted vertices will then be inserted into the solution of this subinstance.
To this end, we will compute a cheapest forest that connects all elements of $Y$ to some elements of $V\setminus Y$;
in the end we traverse the edges of this forest in both directions.
The main technical lemma, to be shown in the next subsection, says that there is a cheap forest
that connects all elements of $Y$ to some elements of $V\setminus Y$:

\begin{lemma}\label{lemma:short_forest_from_fractional} 
Let $\Jeuscr$ be an instance of \textsc{Vehicle Routing with Target Groups} and $x$ a weak fractional solution.
Let $\gamma>2$. 
Let $Y$ be a set of vertices as in Lemma~\ref{lemma:compute_nice_subinstance}. 
Then there exists a forest $(V,F_x)$ that connects each vertex in $Y$ to $V\setminus Y$ and such that
\begin{equation}\label{eq:boundlengthofforest_fractional}
	c(F_x) \ < \ \sfrac{4}{\gamma-2} \cdot c(x|_{E_1(\Jeuscr)}) + (2+\sfrac{4}{\gamma-2}) \cdot \detour(x|_{E_1(\Jeuscr)}).
\end{equation}
\end{lemma}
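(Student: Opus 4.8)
The plan is to build $F_x$ explicitly from a walk decomposition of $x$, using the vertices $w(y)$ supplied by Lemma~\ref{lemma:compute_nice_subinstance} to handle the problematic parts. Write $x=\sum_{P\in\Pscr}\lambda_P\cdot\chi^{E(P)}$ as in Definition~\ref{def:weak_fractional_solution}, and for each walk $P=(s=p_0,p_1,\dots,p_\ell)$ consider only its arcs in $E_1(\Jeuscr)$ (i.e.\ drop a possible final arc into $\{s\}\cup\bar T$). Since $s\notin Y$, the customers visited by $P$ split into maximal runs of consecutive elements of $Y$; such a run $p_i,\dots,p_j\subseteq Y$ is either immediately preceded on $P$ by some $p_{i-1}\in V\setminus Y$, or (when $i=1$) by $s$. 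In the first case I would attach the whole run to $V\setminus Y$ by the path edges $\{p_{i-1},p_i\},\dots,\{p_{j-1},p_j\}$; in the second case I would instead use $\{p_1,w(p_1)\}$ together with $\{p_1,p_2\},\dots,\{p_{j-1},p_j\}$, which is legitimate because $w(p_1)\in V\setminus Y$ and $w(p_1)\prec p_1$. Collecting these edges over all runs of all walks yields a \emph{fractional} selection of $Y$-incident edges connecting each $y\in Y$ to $V\setminus Y$; note that the $w(\cdot)$-edges always point strictly backward in $\prec$ and land in $V\setminus Y$, which is the ``backbone'' that will guarantee acyclicity after rounding.

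For the cost estimate I would classify the arcs used. If $(a,b)$ is a walk arc with $b\in Y$ and $a\prec b$ (forward), its cost is charged to $c(x|_{E_1(\Jeuscr)})$; if $b\prec a$ (backward) then $c(a,b)\le\detour(a,b)$, so it is charged to $\detour(x|_{E_1(\Jeuscr)})$. For a replacement edge $\{p_1,w(p_1)\}$ I would bound
$c(p_1,w(p_1))\le\detour(p_1,w(p_1))<\sfrac1\gamma\cdot c\bigl(\parent_{V\setminus Y}(w(p_1)),w(p_1)\bigr)\le\sfrac2\gamma\cdot c(s,p_1)$,
using the inequality defining $Y$, the ordering $\parent_{V\setminus Y}(w(p_1))\prec w(p_1)\prec p_1$, and the triangle inequality; and $c(s,p_1)$ is exactly the cost of the first arc of $P$ (which has zero detour), hence is already accounted for in $c(x|_{E_1(\Jeuscr)})$. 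The delicate point is that a single arc of $x$ is traversed by the runs of all walks through it, so one must ensure the per-arc charges do not accumulate beyond that arc's $x$-value; this is handled by carrying out the accounting per walk and summing with the coefficients $\lambda_P$, together with the fact that the walks only need to cover each customer once. Done carefully — and, I expect, by iterating the reroute estimate along $\parent$-chains that stay inside $Y$, each step shrinking by a factor governed exactly by $c(\parent_{V\setminus Y}(w(y)),w(y))>\gamma\cdot\detour(y,w(y))$ — this produces a fractional object whose cost is at most $\sfrac{2}{\gamma-2}\cdot c(x|_{E_1(\Jeuscr)})+(1+\sfrac{2}{\gamma-2})\cdot\detour(x|_{E_1(\Jeuscr)})$, the coefficient $\sfrac{2}{\gamma-2}=\sum_{i\ge 1}(2/\gamma)^i$ being the source of the $\gamma-2$ in the denominator.

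The final step is to replace this fractional selection by an actual forest $(V,F_x)$ that still connects every $y\in Y$ to $V\setminus Y$ and satisfies \eqref{eq:boundlengthofforest_fractional}. This is a Steiner-forest-type rounding, and it is here that the factor $2$ appears. The backbone observation above — the edges $\{y,w(y)\}$ strictly decrease $\prec$ and reach $V\setminus Y$ — shows that \emph{any} choice of one backward edge per $y$ that is compatible with the fractional selection yields a forest with the required connectivity, and a standard averaging/uncrossing argument then extracts one of cost at most twice the fractional cost; doubling the fractional bound above gives precisely $\sfrac{4}{\gamma-2}\cdot c(x|_{E_1(\Jeuscr)})+(2+\sfrac{4}{\gamma-2})\cdot\detour(x|_{E_1(\Jeuscr)})$. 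I expect the main obstacle to be exactly this combination: simultaneously controlling (a) the integrality gap of the forest (the factor $2$) and (b) the geometric accumulation of cost along nested $Y$-runs / $\parent$-chains (the $\sfrac{2}{\gamma-2}$), while keeping the $c$- and $\detour$-parts separated so that the weak fractional solution only needs to be charged through its $E_1$-restriction. Everything else should be routine bookkeeping.
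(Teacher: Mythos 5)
There is a genuine gap, and it sits exactly where the real work of this lemma lies. Your construction keeps, for every maximal run of consecutive $Y$-vertices of a walk $P$, \emph{all} of the run's path edges in the forest and charges their forward portions at full cost to $c(x|_{E_1(\Jeuscr)})$. That yields a bound whose coefficient on $c(x|_{E_1(\Jeuscr)})$ is about $1$, whereas the statement requires coefficient $\sfrac{4}{\gamma-2}$ (and your intermediate fractional claim requires $\sfrac{2}{\gamma-2}$), which tends to $0$ as $\gamma$ grows: for a walk that moves monotonically away from $s$ with zero detour and whose customers all lie in $Y$, your forest has cost essentially $c(P)$ while the right-hand side of \eqref{eq:boundlengthofforest_fractional} is only $O(\sfrac{1}{\gamma})\cdot c(P)$. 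Membership in $Y$ does not forbid such long $Y$-runs, so the construction as described cannot meet the bound. The step you defer to ``iterating the reroute estimate along $\parent$-chains, each step shrinking by a factor'' is precisely the missing argument; in the paper it is replaced by a different mechanism: one defines radii $r_w=\sfrac{1}{\gamma}\,c(\parent_W(w),w)$ and regions $A_w\subseteq S_w$, keeps only the subpaths $P_{[x(y),z(y)]}$ for a minimal set $\bar Y$ of \emph{nondominated} vertices (each vertex lying in at most two such subpaths, which is where the coefficient $2$ on the detour comes from), attaches each via the single edge $\{y,w(y)\}$, and bounds $c(x(y),y)+c(y,z(y))+c(y,w(y))$ by Lemma~\ref{lem:key_forest_lemma}; the term $\sfrac{4}{\gamma-2}\cdot c(P)$ arises because the predecessor edges $(\pred_P(x(y)),x(y))$ are distinct across $y\in\bar Y$, not from a geometric series $\sum_{i\ge 1}(2/\gamma)^i$.

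Your final rounding step is also off, though less centrally. The paper does not lose a factor $2$ there and does not need Steiner-forest rounding: it applies the per-path Lemma~\ref{lemma:short_forest} to the shortcut walks, aggregates $l=\sum_{P}\lambda_P\chi^{F_P}$, and shows directly that after contracting $V\setminus Y$ the vector $l$ lies in the connector polyhedron (using that every $y\in Y$ is covered with total weight at least $1$ and that each $F_P$ crosses every part of a partition that it touches), so an \emph{integral} forest of cost at most $c(l)$ exists. Your plan to pay a factor $2$ for integrality would only be consistent with a fractional bound of $\sfrac{2}{\gamma-2}\cdot c+(1+\sfrac{2}{\gamma-2})\cdot\detour$, which, as explained above, your construction does not establish; and the claim that any compatible choice of one backward edge per $y$ plus an averaging/uncrossing argument gives a $2$-approximate forest is not substantiated (no feasibility of your fractional object for a cut-based forest LP is verified). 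So both halves of the argument need the paper's machinery (domination, the radii $r_w$, Lemma~\ref{lem:key_forest_lemma}, and the connector-polyhedron aggregation) or a genuine substitute for it.
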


Using this lemma, which we prove in the following subsection, 
we can prove Theorem~\ref{thm:mainVRPwTargetGroupsCombinatorial}:

\begin{proof}[Proof of Theorem~\ref{thm:mainVRPwTargetGroupsCombinatorial}]
The algorithm consists of the following steps.
\begin{enumerate} 
\item Apply Lemma~\ref{lemma:compute_nice_subinstance} to compute a subset $Y\subseteq V$.
\item Apply Lemma~\ref{lemma:find_cheap_forward_walk} to find a cheapest forward walk solution $H$
in the nice subinstance obtained by deleting $Y$.
\item Find a minimum-cost set $F$ of edges that connect all elements of $Y$ to some elements of $V\setminus Y$.
\item Orient the edges of $F$ in both directions and insert them into $H$ to obtain a walk solution to the original instance.
\item Apply Proposition~\ref{prop:walk_solution_suffices}.
\end{enumerate}
Step 3 reduces to computing a minimum-cost spanning tree in the complete graph $Y\cup\{w\}$ where the cost
of an edge $\{y,w\}$ is $\min\{c(y,v): v\in V\setminus Y\}$ and can thus be done in $O(n^2)$ time. 
Step 5 runs in $O(n^2)$ time because $|H|\le n^2$.
Hence the overall running time is dominated by the application of Lemma~\ref{lemma:find_cheap_forward_walk}.
It is obvious that step 4 yields a walk solution, hence the algorithm is correct.

To bound the total cost of the solution, let $x$ be any weak fractional solution to a given instance $\Jeuscr$.
From $x$ we obtain a weak fractional solution to our nice subinstance by shortcutting without increasing cost or detour.
By Lemma~\ref{lemma:cheap_if_nice}, 
$c(H)\le \frac{1}{1-\tau} \cdot c(x) +(2\gamma-\frac{\tau}{1-\tau}) \cdot \detour(x|_{E_1(\Jeuscr)})$.

By Lemma~\ref{lemma:short_forest_from_fractional}, there is a forest $(V,F_x)$ that 
connects each vertex in $Y$ to $V\setminus Y$ and such that \eqref{eq:boundlengthofforest_fractional} holds.
We conclude that our final solution has total cost at most
\begin{equation*}
c(H)+2c(F) \ \le \ 
\sfrac{1}{1-\tau} \cdot c(x) + \sfrac{8}{\gamma-2} \cdot c(x|_{E_1(\Jeuscr)}) 
+(4+\sfrac{8}{\gamma-2}+2\gamma-\sfrac{\tau}{1-\tau}) \cdot \detour(x|_{E_1(\Jeuscr)}),
\end{equation*}
as required.
\end{proof}

\subsection{Connecting the omitted vertices (Proof of Lemma~\ref{lemma:short_forest_from_fractional})}\label{section:short_forest}

To prove Lemma~\ref{lemma:short_forest_from_fractional}, we fix a set
$Y$ of vertices as in Lemma~\ref{lemma:compute_nice_subinstance}
and decompose $x$ into walks, which we shortcut to paths that begin in $s$ and whose other vertices belong to $Y$.
Let $P$ be such a path.
We will construct a forest that connects each vertex in $Y\cap V(P)$ to some vertex in $W:=V\setminus Y$. 
This forest will consist of pieces of $P$ and a connection of each piece to $W$.
More specifically, we will prove the following and apply it to each of these paths:

\begin{lemma}\label{lemma:short_forest} 
Let $\gamma>2$. 
Let $Y$ be a set of vertices as in Lemma~\ref{lemma:compute_nice_subinstance}. 
Let $P$ be a path that begins in $s$ and with $V(P) \setminus \{s\} \subseteq Y$. 
Then there exists a forest $(V,F_P)$ that connects each vertex in $Y \cap V(P)$ to $V\setminus Y$ and such that
\begin{equation}\label{eq:boundlengthofforest}
	c(F_P) \ < \ \sfrac{4}{\gamma-2} \cdot c(P) + (2+\sfrac{4}{\gamma-2}) \cdot \detour(P).
\end{equation}
\end{lemma}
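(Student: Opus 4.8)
**The plan is to walk along $P$ and greedily cut it into pieces, each of which can be cheaply hung onto $W = V \setminus Y$.**

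First I would set up the walk along $P$. Write $P = (s = p_0, p_1, \ldots, p_k)$ with $p_1, \ldots, p_k \in Y$. For each vertex $p_i$ on the path I want to use the defining property of $Y$ from Lemma~\ref{lemma:compute_nice_subinstance}: there is a witness $w(p_i) \in W$ with $w(p_i) \prec p_i$ and
\[
c\bigl(\parent_{W}(w(p_i)), w(p_i)\bigr) \ > \ \gamma \cdot \detour\bigl(p_i, w(p_i)\bigr).
\]
The idea is that attaching $p_i$ to $W$ via the edge $\{p_i, w(p_i)\}$ costs $c(p_i, w(p_i))$, and this is controlled by $\detour(p_i, w(p_i))$ — but only \emph{after} we also pay to connect $w(p_i)$, which is not yet in our forest. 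So the right move is to use the witnesses sparingly: we will pick a subset of ``anchor'' vertices $p_{i_1}, p_{i_2}, \ldots$ along $P$, attach each anchor $p_{i_j}$ to $W$ through $w(p_{i_j})$ (paying also for the $\parent_W$-edge of $w(p_{i_j})$), and attach every other path vertex to the nearest anchor by a sub-path of $P$.

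The greedy rule I would use: scan $p_1, p_2, \ldots, p_k$ in order; open a new piece at $p_i$ (making it an anchor), and keep adding consecutive vertices $p_i, p_{i+1}, \ldots$ to the current piece as long as the cost of the traversed sub-path of $P$, namely $\sum_{\ell} c(p_\ell, p_{\ell+1})$ over the piece, does not exceed $\frac{2}{\gamma - 2}$ times the detour $\detour(p_i, p_{i+1}) + \cdots$ accumulated inside the piece — more precisely I would threshold on something like: close the piece just before the partial cost would exceed $\frac{2}{\gamma-2}\cdot(\text{detour so far in the piece})$, or more cleanly, compare the running sub-path length against $\frac{2}{\gamma-2}\cdot\detour$ of that sub-path and of the connecting edges. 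The forest $F_P$ then consists of: all path-edges inside each piece, plus for each anchor $p_{i_j}$ the two edges $\{p_{i_j}, w(p_{i_j})\}$ and $\{w(p_{i_j}), \parent_W(w(p_{i_j}))\}$. This is acyclic (each piece is a sub-path, attached once to $W$) and connects every $p_i$ to $W$.

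For the cost bound: the path-edges inside the pieces together cost at most $c(P)$, trivially. The two ``hanging'' edges at an anchor $p_{i_j}$: by the witness inequality, $c(p_{i_j}, w(p_{i_j})) + c(w(p_{i_j}), \parent_W(w(p_{i_j})))$ is at most $(1 + \frac{1}{\gamma})\,c(\parent_W(w(p_{i_j})), w(p_{i_j}))$ — wait, better to bound $c(p_{i_j}, w(p_{i_j})) \le \detour(p_{i_j}, w(p_{i_j})) + c(s,p_{i_j}) - c(s,w(p_{i_j}))$ and note $c(s,p_{i_j}) - c(s,w(p_{i_j})) \le c(p_{i_j}, \text{next vertex of its piece}) + \cdots$ telescopes toward $c(s, \text{end of piece})$; I would arrange the accounting so that $c(s,p_{i_j})$ is charged against the cost of the sub-path of $P$ from $s$ up to $p_{i_j}$ (triangle inequality), but that is shared across anchors — so instead the clean route is: use that the anchor is opened because the \emph{previous} piece's sub-path was ``long relative to its detour'', i.e. the greedy threshold guarantees the number/total weight of anchors' hanging edges is controlled by $\frac{4}{\gamma-2} c(P|_{\text{pieces}}) $ plus $(2 + \frac{4}{\gamma-2})\detour(P)$, matching \eqref{eq:boundlengthofforest}. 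Summing the ``inside'' edges ($\le c(P)$, absorbed into the detour term via $c(P) = \detour(P) + c(s, p_k) \le \ldots$ — actually $c(P) \ge \detour(P)$ always, so one has to be careful which direction) and the ``hanging'' edges gives the claimed inequality. Then Lemma~\ref{lemma:short_forest_from_fractional} follows by decomposing $x$ (restricted to $E_1$) into such paths with multipliers $\lambda_P$, applying Lemma~\ref{lemma:short_forest} to each, and taking the union of the forests (shortcutting any resulting cycles, which only decreases cost).

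The main obstacle I expect is the charging argument for the hanging edges — specifically, controlling $c(s, p_{i_j})$, the distance from the depot to an anchor, since the witness inequality only bounds $c(p_{i_j}, w(p_{i_j}))$ in terms of $\detour(p_{i_j}, w(p_{i_j})) = c(p_{i_j}, w(p_{i_j})) + c(s, p_{i_j}) - c(s, w(p_{i_j}))$, and $c(s,p_{i_j})$ can be large. The resolution must be that the greedy piece-cutting is precisely designed so that each anchor is ``paid for'' by the detour accumulated on the piece that ended just before it (the reason the previous piece was closed), so that the total anchor cost telescopes against $\detour(P)$ and against the sub-path cost with the coefficient $\frac{4}{\gamma - 2}$; getting the constant exactly to $\frac{4}{\gamma-2}$ and $2 + \frac{4}{\gamma-2}$ will require choosing the greedy threshold correctly and doing the telescoping carefully, and I would expect the $\gamma > 2$ hypothesis to enter exactly in making $\frac{2}{\gamma-2}$ (hence $\frac{4}{\gamma-2}$) a finite positive constant.
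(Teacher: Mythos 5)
There is a genuine gap here: you have sketched the right kind of plan (cover $Y\cap V(P)$ by subpaths of $P$, attach each piece to $W=V\setminus Y$ through a witness $w(y)$, and charge the attachment edges against $\detour(P)$ and a $\frac{1}{\gamma-2}$-fraction of $c(P)$), but the two steps that actually carry the proof are left as hopes. First, the attachment cost. Since $w(y)\prec y$, the witness inequality only gives $c(y,w(y))\le\detour(y,w(y))< \frac{1}{\gamma}c(\parent_W(w(y)),w(y))=:r_{w(y)}$, and nothing in your greedy piece-cutting controls $r_{w(y)}$, which depends on how isolated $w(y)$ is inside $W$ and has no a priori relation to the detour accumulated along the previous piece. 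The paper's proof controls it via a structural argument you do not have: it introduces the regions $S_w$, discards \emph{dominated} vertices, chooses the piece endpoints $x(y)$ (first vertex with $w(y)\prec x(y)$) and $z(y)$ (last vertex within radius $r_{w(y)}$), and proves the key inequality $(\gamma-2)\,r_{w(y)}\le 2\,c(p,x(y))+\detour(x(y),y)$ (Lemma~\ref{lem:key_forest_lemma}), where $p$ is the predecessor of $x(y)$ on $P$; the $\frac{4}{\gamma-2}\,c(P)$ term then comes from the fact that these predecessor edges are distinct for a minimal cover. Your proposal explicitly flags this as ``the main obstacle'' and asserts ``the resolution must be that the greedy piece-cutting is precisely designed so that\ldots'', which is exactly the part that needs a proof. (Also, the extra edge $\{w(y),\parent_W(w(y))\}$ you add is both unnecessary — $w(y)$ is already in $V\setminus Y$ — and unaffordable, since its length is $\gamma\, r_{w(y)}$, the \emph{large} side of the witness inequality.)

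Second, the accounting of the path edges inside the pieces. You charge them ``trivially'' at $c(P)$, but the target bound allows only $\frac{4}{\gamma-2}\,c(P)+(2+\frac{4}{\gamma-2})\detour(P)$, which is far smaller than $c(P)$ when $\gamma$ is large and $\detour(P)\ll c(P)$; your own parenthetical (``one has to be careful which direction'') notices this but does not fix it. The paper avoids the problem by rewriting $c(P_{[x(y),z(y)]})$ as $c(x(y),y)+c(y,z(y))$ plus detour-difference terms, bounding the chords $c(x(y),y)+c(y,z(y))+c(y,w(y))$ through the key inequality above, and using that the (possibly overlapping) subpaths cover no vertex more than twice, so their detours sum to at most $2\detour(P)$. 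Without the nondominance machinery and this chord-based accounting, the claimed constants $\frac{4}{\gamma-2}$ and $2+\frac{4}{\gamma-2}$ are not reachable by the argument as proposed.
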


Before we prove this lemma, we show that it implies Lemma~\ref{lemma:short_forest_from_fractional}.

\begin{proof}[Proof of Lemma~\ref{lemma:short_forest_from_fractional}]
We decompose $x=\sum_{P\in\Pscr}\lambda_P\chi^{E(P)}$ according to Definition~\ref{def:weak_fractional_solution}.
Each $P\in\Pscr$ is a walk in $G(\Jeuscr)$ that begins in $s$ and ends in $\{s\}\cup \bar T$.
Now shortcut $P$ by skipping all vertices in $V\setminus Y$,
skipping a vertex in $Y$ if it has been visited by $P$ before, and removing the end vertex of $P$.
Apply Lemma~\ref{lemma:short_forest} to the resulting path, and let $F_P$ be the resulting forest.
Consider $l=\sum_{P\in\Pscr} \lambda_P \chi^{F_P}$. 
The cost of $l$ is $\sum_{P\in\Pscr} \lambda_P c(F_P) \le \sum_{P\in\Pscr} \lambda_P
\left( \sfrac{4}{\gamma-2} \cdot c(P) + (2+\sfrac{4}{\gamma-2}) \cdot \detour(P) \right)
\le \sfrac{4}{\gamma-2} \cdot c(x|_{E_1(\Jeuscr)}) + (2+\sfrac{4}{\gamma-2}) \cdot \detour(x|_{E_1(\Jeuscr)})$.
In the last inequality we used the triangle inequality: shortcutting does not increase cost or detour.

We claim that after contracting $V\setminus Y$ to a vertex $w$, the vector $l$ belongs to the connector polyhedron
(the convex hull of incidence vectors of all connected multigraphs on vertex set $Y\cup\{w\}$). This directly implies the result.

To show that $l$ belongs to the connector polyhedron, 
we need to show that for every partition $\Yscr=\{Y_0,\ldots,Y_k\}$ of $V$ for which $V\setminus Y\subseteq Y_0$
we have $l(\delta(\Yscr))\ge k$, where $\delta(\Yscr)$ denotes the set of edges with endpoints in different sets of the partition.
For every $P\in\Pscr$, let $k_P=\{j\in\{1,\ldots,k\}: F_P\cap \delta(Y_j)\not=\emptyset\}$ 
be the number of sets that $F_P$ connects to $Y_0$.
Since  $\sum_{P\in\Pscr: v\in V(P)} \lambda_P\ge 1$ for all $v\in Y$, we have $\sum_{P\in\Pscr}\lambda_P k_P \ge k$.
Moreover, $k_P\le |F_P\cap\delta(\Yscr)|$. 
Therefore $l(\delta(\Yscr))=\sum_{P\in\Pscr} \lambda_P |F_P\cap\delta(\Yscr)| \ge \sum_{P\in\Pscr} \lambda_P k_P \ge k$,
as required.
\end{proof}

\begin{figure}[ht]
\begin{center}
\begin{tikzpicture}
\begin{scope}[scale=1.1,rotate=270]
\draw[->, thick] (-3.3,-5)--(-3.3,6);
\node[below] at (-3.3,6) {$\prec$};
\tikzstyle{wvertex}=[circle,fill,minimum size=5,inner sep=0pt]
\tikzstyle{yvertex}=[circle,draw,minimum size=4,inner sep=0pt]

\draw[violet,very thick] (5.1,-0.1)--(5.1,0.4);
\draw[red,very thick] (5.0,0)--(5.0,1.5);
\draw[brown,very thick] (5.1,0.5)--(5.1,1.15);
\draw[darkgreen,very thick] (5.0,2.3)--(5.0,3.6);
\draw[blue,very thick] (5.1,3.2)--(5.1,3.95);

\fill[red!10] (-3.25,0) rectangle (4.95,1.5);
\fill[darkgreen!10] (-3.25,2.3) rectangle (4.95,3.6);
\node[above,red] at (4.95,0.75) {$S_{w_2}$};
\node[above,darkgreen] at (4.95,2.95) {$S_{w_4}$};

\draw[draw=blue!50, thick, densely dotted] (-0.6,3.2) arc (180:360:1.5)--cycle;
\draw[draw=blue,fill=blue!30] plot[domain=-0.75:0.75,samples=50] ({\x+0.9},{3.2+(9/16-\x*\x)/1.5});
\node[blue,right] at (0.8,3.5) {{$A_{w_5}$}};
\node[wvertex,blue] (w5) at (0.9,3.2) {};			
\node[blue, left] at (0.7,3.3) {$w_5$};

\draw[draw=darkgreen!50, thick, densely dotted] (-0.5,2.3) arc (180:360:2.6)--cycle;
\draw[draw=darkgreen,fill=darkgreen!30] plot[domain=-1.3:1.3,samples=50] ({\x+2.1},{2.3+(1.69-\x*\x)/2.6});
\node[darkgreen,right] at (3.2,2.45) {{$A_{w_4}$}};
\node[wvertex,darkgreen] (w4) at (2.1,2.3) {};			
\node[darkgreen, left] at (2.2,2.3) {$w_4$};

\draw[draw=brown!50, thick, densely dotted] (-2.5,0.5) arc (180:360:1.3)--cycle;
\draw[draw=brown,fill=brown!30] plot[domain=-0.65:0.65,samples=50] (\x-1.2,{0.5+(1.69/4-\x*\x)/1.3});
\node[brown,right] at (-1.7,0.6) {{$A_{w_3}$}};
\node[wvertex,brown] (w3) at (-1.2,0.5) {};			
\node[brown, left] at (-1.45,0.6) {$w_3$};

\draw[draw=red!50, thick, densely dotted] (-3,0) arc (180:360:3)--cycle;
\draw[draw=red,fill=red!30] plot[domain=-1.5:1.5,samples=50] (\x,{(2.25-\x*\x)/3});
\node[red,right] at (0.1,0.7) {{$A_{w_2}$}};
\node[wvertex,red] (w2) at (0,0) {};			
\node[red, left] at (0,0) {$w_2$};

\draw[draw=violet!50, thick, densely dotted] (2.1,-0.1) arc (180:360:1)--cycle;
\draw[draw=violet,fill=violet!30] plot[domain=-0.5:0.5,samples=50] (3.1+\x,{-0.1+(0.25-\x*\x)/1});
\node[violet,right] at (3.2,0.1) {{$A_{w_1}$}};
\node[wvertex,violet] (w1) at (3.1,-0.1) {};			
\node[violet, left] at (3.1,-0.1) {$w_1$};

\node[wvertex,gray] (parent2) at (1.8,-2.4) {};
\node[gray,left] at (1.8,-2.4) {$\parent(w_2)$};
\node[wvertex,gray] (parent1) at (3.9,-0.7) {};
\node[gray,left] at (3.9,-0.7) {$\parent(w_1)$};

\node[yvertex] (y1) at (2.9,0.05) {};
\node[above right] at (y1) {$x(y)$};
\node[yvertex] (y2) at (0.1,0.4) {};
\node[below] at (y2) {$y$};
\node[yvertex] (y3) at (-0.7,0.5) {};
\node[yvertex] (y4) at (-1.2,0.7) {};
\node[right] at (y4) {$z(y)$};
\node[yvertex] (y5) at (2.5,2.45) {};
\node[below] at (y5) {$x(y')$};
\node[yvertex] (y6) at (1.9,2.8) {};
\node[below right] at (y6) {$\!\!\!\!y'\!=\!z(y')$};
\node[yvertex] (y7) at (1.2,3.3) {};

\draw (2.9,-2) -- (y1) -- (y5) -- (y2) -- (y3) -- (y4) -- (y7) -- (y6);
\draw[->] (y6) -- (1.9,6);
\node[below] at (2.9,-2) {$P$};
\node[below] at (1.9,6) {$P$};

\end{scope}
\end{tikzpicture}
\end{center}
\caption{An example how a set $\bar Y \subseteq Y$ of nondominated customers could be chosen so that
$\bigcup_{y\in\bar Y} V(P_{[x(y),z(y)]})$ contains all elements of $Y\cap V(P)$. 
Here the filled circles represent customers $\parent(w_1),\parent(w_2),w_1,w_2,w_3,w_4,w_5$ in $W=V \setminus Y$, 
and the empty circles represent customers in $Y\cap V(P)$.
The hemicycles and the sets $A_w$ and $S_w$ ($w\in W$) are drawn for $\gamma=2$, Euclidean distances, and the depot located far west
(the stripes $S_{w_1},S_{w_3},S_{w_5}$ are only indicated at the bottom).
The customer $z(y)$ in $A_{w_3}$ is dominated because $w(y)=w_2\prec w_3 = w(z(y))$ and $S_{w_3}\subseteq S_{w_2}$.
A possible choice is $\bar Y=\{y,y'\}$.\label{fig:select_bar_Y}}
\end{figure}
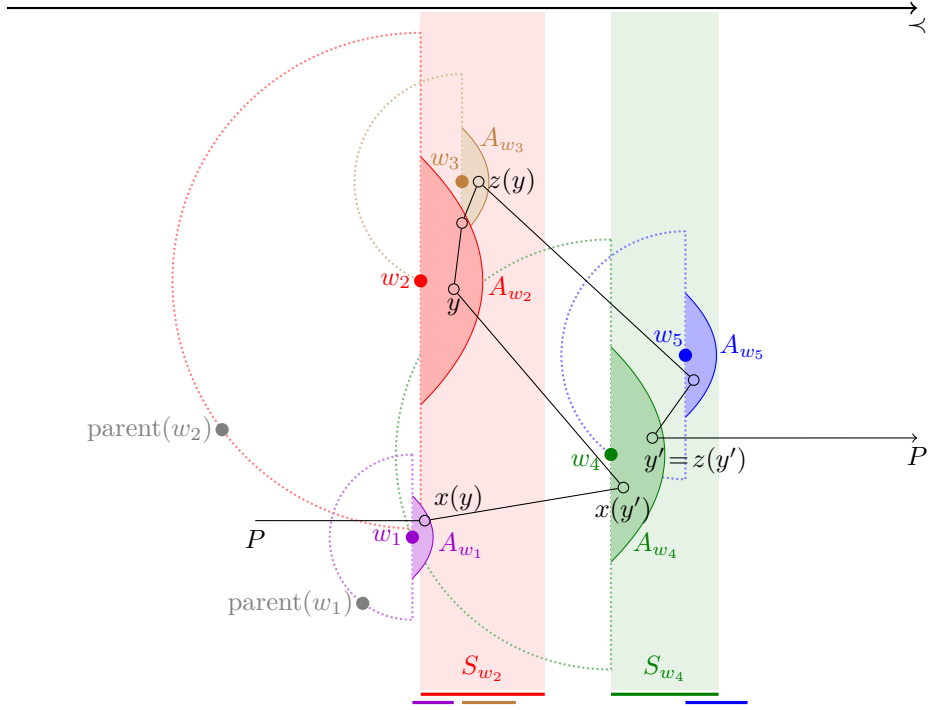

Now it remains to prove Lemma~\ref{lemma:short_forest}.

First, we will associate with each $w\in W$ a ``radius''
\begin{equation}
r_w \ := \ \sfrac{1}{\gamma} \cdot c \bigl(\parent_W(w),w \bigr).
\end{equation}
We have 
$A_w =\{ y\in V: w\prec y,\,  \detour(y,w) < r_w\}$ (cf.\ \eqref{eq:def_A_w}), 
in particular every element of $A_w$ has distance to $w$ smaller than $r_w$ (cf.\ Figure~\ref{fig:explain_nice}).

Now we define a larger region $S_w$ by
\begin{equation}
S_w \ := \ \{y \in Y : w\prec y,\, c(s,y) \le c(s,w) + r_w \}.
\end{equation}
Note that $A_w\subseteq S_w$.

Let now $P$ be a path that begins in $s$ and with $V(P)\setminus\{s\}\subseteq Y$.
For $y\in Y\cap V(P)$ let $w(y)$ be a vertex as guaranteed by Lemma~\ref{lemma:compute_nice_subinstance}.
Let $x(y)$ be the first vertex of $P$ with $w(y)\prec x(y)$,
and let $z(y)$ be the last vertex of $P$ with $c(s,z(y))\le c(s,w(y))+ r_{w(y)}$.
Denote the $x(y)$-$z(y)$-subpath of $P$ by $P_{[x(y),z(y)]}$. 
This subpath contains all of $S_{w(y)}\cap V(P)$, but it may contain more elements of $Y$.
We will select a subset $\bar Y$ of $Y\cap V(P)$. 
For each selected customer $y\in\bar Y$ we add $P_{[x(y),z(y)]}$ plus the edge $\{y,w(y)\}$ to the forest. 
The set $\bar Y$ will be chosen such that this forest connects all customers in $V(P) \cap Y$ to $W$.

We call $y\in V(P)\cap Y$ \emph{dominated} by $y'\in V(P)\cap Y$ if $w(y') \prec w(y)$ and $S_{w(y)}\subseteq S_{w(y')}$.
We will not include dominated customers in $\bar Y$.
If $y$ is dominated by $y'$, then $P_{[x(y),z(y)]}$ is a subpath of $P_{[x(y'),z(y')]}$,
so we can indeed cover all of $V(P)\cap Y$ without considering dominated vertices.
See Figure~\ref{fig:select_bar_Y}.

Let $y$ be nondominated, and let $p(y)$ be the predecessor of $x(y)$ on $P$. Then $p(y)\prec w(y)$.
The key lemma that we use to bound the cost of our forest is the following.
The idea of its proof is that either $x(y)$ is close to $w(y)$, then $p(y)$ is far away from $x(y)$, or 
$x(y)$ is far away from $w(y)$, then $P$ makes a large detour between $x(y)$ and $y$.
(In Figure~\ref{fig:select_bar_Y}, $y'$ is of the first kind and $y$ is of the second kind.) 

\begin{lemma}\label{lem:key_forest_lemma}
Let $y\in Y$ be nondominated and $w=w(y)$. 
Let $p\in\{s\}\cup Y$ and $x,z\in Y$ with $p\prec w\prec x$ and $c(s,z) \le c(s,w)+ r_w$.
Then 
\begin{equation}\label{eq:forest_bound_main_lemma}
c(x,y)+c(y,z)+c(y,w) \ \le \ (1+\sfrac{2}{\gamma-2}) \cdot \left( \detour(x,y) + \detour(y,z) \right) + \sfrac{4}{\gamma-2} \cdot c(p,x).
\end{equation}
\end{lemma}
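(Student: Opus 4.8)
The plan is first to put \eqref{eq:forest_bound_main_lemma} into a more convenient shape. Set $a:=c(s,x)-c(s,w)$, $b:=c(s,y)-c(s,w)$, $d:=c(s,z)-c(s,w)$ and $\mu:=c(y,w)$. From $p\prec w\prec x$ and $w\prec y$ we get $a\ge 0$ and $b\ge 0$; the hypothesis $c(s,z)\le c(s,w)+r_w$ gives $d\le r_w$; and since $w=w(y)$, Lemma~\ref{lemma:compute_nice_subinstance} gives $\detour(y,w)=\mu+b<r_w$. Now $\detour(x,y)=c(x,y)+a-b$ and $\detour(y,z)=c(y,z)+b-d$, so after multiplying by $\gamma-2>0$ the claim \eqref{eq:forest_bound_main_lemma} is equivalent to
\[
\gamma(d-a)+(\gamma-2)\mu \ \le\ 2c(x,y)+2c(y,z)+4c(p,x).
\]
I would also record the elementary bounds $c(p,x)\ge c(s,x)-c(s,p)=a+h$ with $h:=c(s,w)-c(s,p)\ge 0$ (so $c(p,x)\ge a$ and $h\le c(p,x)-a$), $c(x,y)\ge c(x,w)-\mu$, and $c(y,z)\ge |b-d|\ge\max\{0,\,d+\mu-r_w\}$, the last step using $b<r_w-\mu$.

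Next I would split according to whether $p=s$ or $p\in Y$. If $p=s$, then $c(p,x)=c(s,x)\ge c(s,w)\ge c(\parent_W(w),w)=\gamma r_w$, since $s$ is one of the vertices competing to be $\parent_W(w)$; hence $4c(p,x)\ge 4\gamma r_w\ge(2\gamma-2)r_w\ge\gamma(d-a)+(\gamma-2)\mu$ by $d\le r_w$, $\mu<r_w$, $a\ge 0$, and we are done. If $p\in Y$, put $p':=w(p)$, which by Lemma~\ref{lemma:compute_nice_subinstance} lies in $W$ and satisfies $p'\prec p\prec w$. Thus $p'$ competes to be $\parent_W(w)$, so $c(p',w)\ge\gamma r_w$; and Lemma~\ref{lemma:compute_nice_subinstance} applied to $p$ gives $\detour(p,p')<r_{w(p)}$, hence $c(p,p')<r_{w(p)}-g$ where $g:=c(s,p)-c(s,p')\ge 0$. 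The crucial input is that $y$ is not dominated: applied with $y'=p$ — here we use that in every application of this lemma $p$ is a vertex of the relevant path $P$, so $p\in V(P)\cap Y$, and that $w(p)=p'\prec w=w(y)$ — this gives $S_w\not\subseteq S_{p'}$, and unwinding the definitions of $S_w,S_{p'}$ forces $c(s,w)+r_w>c(s,p')+r_{w(p)}$, i.e.\ $r_{w(p)}<r_w+h+g$. Plugging this and $c(p,p')<r_{w(p)}-g$ into the triangle chain $\gamma r_w\le c(p',w)\le c(p',p)+c(p,x)+c(x,w)$ makes $r_{w(p)}$ and $g$ cancel, yielding $c(p,x)>(\gamma-1)r_w-h-c(x,w)$; combining with $h\le c(p,x)-a$ gives
\[
4c(p,x)\ >\ 2(\gamma-1)r_w+2a-2c(x,w).
\]

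To conclude in the case $p\in Y$, I would substitute this bound and $2c(x,y)\ge 2\bigl(c(x,w)-\mu\bigr)$ into the reduced inequality (so the $c(x,w)$ terms cancel) and split once more: if $d+\mu\le r_w$, use $c(y,z)\ge 0$ and $d\le r_w-\mu$, and the inequality reduces to $0\le(\gamma-2)r_w+(\gamma+2)a$; if $d+\mu>r_w$, use $c(y,z)\ge d+\mu-r_w$ and $\mu<r_w$, and it reduces to $(\gamma-2)d\le(\gamma-2)r_w+(\gamma+2)a$. Both hold because $\gamma>2$, $a\ge 0$ and $d\le r_w$. The step I expect to be the main obstacle is the estimate on $c(p,x)$ when $p\in Y$: the bounds on $r_{w(p)}$, on $c(p,p')$ and on $c(s,w)-c(s,p')$ all refer to one another, and they must be combined in exactly the order above so that $r_{w(p)}$ and $g$ drop out; a secondary subtlety is that one must not weaken $c(s,z)-c(s,w)$ or $c(s,y)-c(s,w)$ to $r_w$ prematurely, since the bound $c(y,z)\ge d+\mu-r_w$ is precisely what rescues the case where $d$ is close to $r_w$. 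Conceptually this matches the dichotomy sketched before the lemma — either $c(x,w)$ is large, so $P$ detours substantially between $x$ and $y$, or $c(x,w)$ is small, in which case the nondominated vertex $y$ forces its neighbour $p$ to be far from $x$ — but organizing the split around $p=s$ versus $p\in Y$ keeps the bookkeeping shortest.
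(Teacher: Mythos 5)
Your proof is correct, and it uses the same key ingredients as the paper's: that $w(p)\prec p\prec w$ makes $w(p)$ a candidate for $\parent_W(w)$, hence $c(w(p),w)\ge\gamma r_w$; the detour bounds $\detour(p,w(p))<r_{w(p)}$ and $\detour(y,w)<r_w$ coming from Lemma~\ref{lemma:compute_nice_subinstance}; and the nondomination inequality $c(s,w)+r_w> c(s,w(p))+r_{w(p)}$ --- which, exactly as you flag, needs $p\in V(P)\cap Y$; the paper's proof relies on the same fact implicitly, since in the only application $p$ is the predecessor of $x(y)$ on $P$, so this is not a gap. The assembly differs, though: the paper first derives the intermediate bound $(\gamma-2)\,r_w\le 2\,c(p,x)+\detour(x,y)$ by running the triangle chain from $w(p)$ to $w$ through $p$, $x$ \emph{and} $y$, and then finishes in two lines from $c(x,y)+c(y,z)+c(y,w)=\detour(x,y)+\detour(y,z)+c(s,z)-c(s,x)+c(y,w)\le\detour(x,y)+\detour(y,z)+2r_w$; this even gives the slightly sharper coefficient $1$ on $\detour(y,z)$. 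Your chain $c(w(p),w)\le c(w(p),p)+c(p,x)+c(x,w)$ bypasses $y$, which forces the compensation $c(x,y)\ge c(x,w)-\mu$ and the extra case split on $d+\mu$ versus $r_w$, with $c(y,z)\ge d+\mu-r_w$ rescuing the second case; I checked your reduction to $\gamma(d-a)+(\gamma-2)\mu\le 2c(x,y)+2c(y,z)+4c(p,x)$, the bound $4c(p,x)>2(\gamma-1)r_w+2a-2c(x,w)$, and both final cases, and they are sound. The net effect is the same lemma at the cost of a bit more bookkeeping; the paper's route is worth seeing because isolating the bound on $r_w$ makes the conclusion essentially mechanical.
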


\begin{proof}
First we claim
\begin{equation}\label{eq:boundrw}
(\gamma-2) \cdot r_w \ \le \ 
2 \cdot c(p,x) + \detour(x,y).
\end{equation}
If $p=s$, we have $c(p,x)\ge c(p,w)\ge c(\parent(w),w)=\gamma \cdot r_w$ and hence \eqref{eq:boundrw}.
So assume $p\in Y$. 
Then 
\begin{equation}\label{eq:boundnondominated}
c(s,w) +  r_w \ \ge \ c(s,w(p)) + r_{w(p)} 
\end{equation}
because otherwise $S_{w}\subseteq S_{w(p)}$ and $y$ would be dominated.
Next, using the triangle inequality, $p\in A_{w(p)}$, $y\in A_w$, $w\prec x$, and \eqref{eq:boundnondominated}, we get
\begin{align*}
\gamma\cdot r_w & \ \le \ c(w(p),w) \\
& \ \le \ c(w(p),p) + c(p,x) + c(x,y) + c(y,w) \\ 
& \ = \ (\detour(p,w(p)) + c(s,w(p)) - c(s,p)) + c(p,x) + \detour(x,y) + \detour(y,w) + c(s,w) - c(s,x) \\
& \ \le \ r_{w(p)} + c(s,w(p)) - c(s,p) + c(p,x) + \detour(x,y) + r_w + c(s,w) - c(s,x) \\
& \ \le \ r_{w(p)} + c(s,w(p)) - c(s,p) + c(p,x) + \detour(x,y) + r_w + c(s,x) - c(s,w) \\
& \ \le \ r_{w(p)} + c(s,w(p)) + 2c(p,x) + \detour(x,y) + r_w - c(s,w) \\
& \ \le \ r_{w(p)} + 2c(p,x) + \detour(x,y) + r_w + r_w - r_{w(p)}, 
\end{align*}
which yields \eqref{eq:boundrw}.

Finally, using  $c(s,z)-c(s,x)\le c(s,z)-c(s,w)\le r_w$ and $c(y,w)\le \detour(y,w)\le r_w$,
\begin{align*}
c(x,y) + c(y,z) + c(y,w) 
&\ = \ \detour(x,y) + \detour(y,z) + c(s,z)-c(s,x) + c(y,w) \\
&\ \leq \ \detour(x,y) + \detour(y,z) + 2\cdot r_w \\
&\ \le \ \detour(x,y) + \detour(y,z) + \sfrac{4}{\gamma-2} c(p,x) + \sfrac{2}{\gamma-2}  \detour(x,y),
\end{align*}
where we used the bound for $r_w$ that we derived in \eqref{eq:boundrw} in the second inequality.
\end{proof}

To finish the proof of Lemma~\ref{lemma:short_forest},
let $\bar Y\subseteq V(P)\cap Y$ be a minimal subset of nondominated vertices such that
\begin{equation*}
\bigcup_{y\in\bar Y} V(P_{[x(y),z(y)]}) \ = \ Y\cap V(P).
\end{equation*}
Then no vertex belongs to more than two of the subpaths $P_{[x(y),z(y)]}$ ($y\in\bar Y$).
In particular, if we sum up the detours of these subpaths, we get at most twice the total detour of $E(P)$.

Taking the union of these subpaths plus the edges $\{y,w(y)\}$ for $y\in\bar Y$ yields
a forest that connects all elements of $Y\cap V(P)$ to $V\setminus Y$ and that has length at most
\begin{align*}
&\sum_{y\in\bar Y} \left( c(P_{[x(y),z(y)]}) + c(y,w(y)) \right) \\
\ = \ & \sum_{y\in\bar Y} \bigl( c(x(y),y)+ c(y,z(y)) + \detour(P_{[x(y),y]}) + \detour(P_{[y,z(y)]})  - \detour(x(y),y) - \detour(y,z(y)) \\
& \hspace*{15mm} + c(y,w(y)) \bigr) \\
\ \le \ & 2\detour(P) + \sum_{y\in\bar Y} \left(\big. c(x(y),y)+ c(y,z(y)) + c(y,w(y))  - \detour(x(y),y) - \detour(y,z(y)) \right).  \\
\intertext{
By Lemma~\ref{lem:key_forest_lemma}, this is at most
}
& 2\detour(P) + \sum_{y\in\bar Y} \left( \sfrac{2}{\gamma-2} \cdot \left( \detour(x(y),y) + \detour(y,z(y)) \right) 
+ \sfrac{4}{\gamma-2} \cdot c(\pred_P(x(y)),x(y)) \right) \\
\ \le \ &(2+\sfrac{4}{\gamma-2})\detour(P) + \sfrac{4}{\gamma-2} \cdot c(P),
\end{align*}
where $\pred_P(x(y))$ denotes the predecessor of $x(y)$ on $P$.
In the last inequality we used that by the minimality of $\bar Y$, the edges $(\pred_P(x(y)),x(y))$ are distinct for different $y\in \bar Y$.
This concludes the proof of Lemma~\ref{lemma:short_forest}.

\section{LP-based approach to vehicle routing with target groups\label{sec:lpbased}}

In this section we give a second proof of Theorem~\ref{thm:mainVRPwTargetGroups},
using an approach of Friggstad and Swamy \cite{friggstad2017compact}.
This will lead to a worse running time but a better approximation ratio.
To be precise, we prove:

\begin{theorem}\label{thm:mainVRPwTargetGroups2}
There is a polynomial-time algorithm for \textsc{Vehicle Routing with Target Groups} that
computes for every instance $\Jeuscr=(V,\bar T,s,c,\mathcal{T}, b)$ and any given $\theta  \in (0,1-\tau]$
a feasible solution $\Pscr$ to $\Jeuscr$ such that 
\[
c(\Pscr) \ < \  \frac{1}{\theta}  \cdot c(x) +  \frac{1-\tau - \theta}{\theta \cdot (1-\tau)} \cdot c(x|_{E_1(\Jeuscr)}) 
      + \frac{3}{1-\theta} \cdot \detour(x|_{E_1(\Jeuscr)}),
\]
for every weak fractional solution $x$ of $\Jeuscr$. 
\end{theorem}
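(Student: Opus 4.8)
The plan is to run the LP-rounding framework of Friggstad and Swamy~\cite{friggstad2017compact} for regret-bounded vehicle routing, adapted to target groups, on a \emph{new} fractional solution built from the given weak fractional solution $x$. In four steps: (1) formulate a compact LP relaxation of \textsc{Vehicle Routing with Target Groups} on an auxiliary digraph $D$, whose integral solutions encode a rooted forest on $V$ together with a flow that moves away from the depot (and may enter $s$), so that the root of every forest component lies on some flow-walk and exactly $b(\Tg)$ flow-walks reach each target group $\Tg$; (2) produce a feasible fractional point $y$ of this LP from $x$ with objective value at most the right-hand side of the theorem; (3) round $y$, essentially by the argument of~\cite{friggstad2017compact}, to an integral forest $F$ and an almost-acyclic flow; (4) double $F$, add the flow, and apply Proposition~\ref{prop:walk_solution_suffices} to obtain the desired $\Pscr$, whose cost is at most the flow cost plus $2c(F)$ and hence bounded by the LP objective.

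The auxiliary digraph $D$ has the depot $s$, one sink per target group $\Tg\in\Tscr$, and, for each customer $v\in V$, a family of $O(n^2)$ copies $v_{a,b}$ indexed by pairs of customers $a,b$ with $c(s,a)\le c(s,v)\le c(s,b)$: the pair $(a,b)$ is a guess that $v$'s forest component has $a$ as the vertex closest to $s$ (its root, through which flow enters the component) and $b$ as the vertex farthest from $s$, so that $c(s,b)-c(s,a)$ bounds the radius of the component and hence the extra cost of a doubled excursion in it. Every arc of $D$ either moves strictly away from the depot or enters $s$, so $D$ is acyclic apart from the arcs into $s$, and the forest is read off from the chosen pairs. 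The LP has flow variables on the arcs of $D$, asks for exactly $b(\Tg)$ units to reach each sink, and requires the flow charged to each customer (summed over its copies) to be at least $\theta$. This is in essence the relaxation of~\cite{friggstad2017compact} with the regret bound replaced by the target-group equalities, with the per-customer coverage scaled down from $1$ to $\theta$ so that the eventual scaling by $\tfrac1\theta$ is affordable, and with the total demand already lowered by the factor $1-\tau$ of Definition~\ref{def:weak_fractional_solution}.

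Step (2) is the new ingredient and where the difficulty sits. Decompose $x=\sum_{P\in\Pscr}\lambda_P\chi^{E(P)}$ as in Definition~\ref{def:weak_fractional_solution} and handle each walk $P=(s=v_0,\dots,v_k,t)$ separately. Call an arc $(v_{i-1},v_i)$ a \emph{descent} if $c(s,v_i)\le c(s,v_{i-1})$; for a descent arc, and more generally for any subpath of $P$ that ends no farther from $s$ than it begins, one has $c(\cdot)\le\detour(\cdot)$. I would take the left-to-right maxima of $c(s,\cdot)$ along $P$ as a monotone \emph{spine}, let its vertices be the component roots that the flow-walk of $P$ visits (contributing the shortcut arcs of $D$ between consecutive spine vertices plus a final arc into $\{s\}\cup\bar T$), and attach each off-spine vertex to the preceding spine vertex through the \emph{valley} subpath of $P$ that contains it, the farthest valley vertex serving as the label $b$. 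Because valleys lie below the current record, the entire forest cost and the excess of the doubled excursions are bounded by $\detour(x|_{E_1(\Jeuscr)})$; the spine and final arcs carry the remaining cost, which after scaling the flow by $\tfrac1\theta$ (to compensate the relaxed coverage) and after the target-group bookkeeping becomes exactly $\tfrac1\theta c(x)+\tfrac{1-\tau-\theta}{\theta(1-\tau)}c(x|_{E_1(\Jeuscr)})$. The factor $\tfrac{3}{1-\theta}$ on the detour then comes out of the interaction of this scaling with the rounding of the detour-bearing part, exactly as in~\cite{friggstad2017compact}.

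Steps (3) and (4) are close to~\cite{friggstad2017compact}: decompose the LP flow into $s$-to-sink walks, apply the threshold rounding of~\cite{friggstad2017compact} at level $\theta$ to obtain an integral forest $F$ and an integral flow that is acyclic away from $s$, has every component root on some walk, and sends exactly $b(\Tg)$ walks to each group; the only extra care needed beyond~\cite{friggstad2017compact} is that it is now the target-group equalities, not a regret bound, that must survive the rounding, which is precisely what the $\tfrac1\theta$-scaling of the coverage constraints buys. Doubling $F$, adding the flow, and shortcutting with Proposition~\ref{prop:walk_solution_suffices} gives a feasible $\Pscr$ of cost at most the LP objective, which together with step (2) proves the theorem. (Specialized to singleton target groups with the target-group constraint replaced by the regret bound, the same fractional construction improves the $15$-approximation of~\cite{friggstad2017compact} for regret-bounded vehicle routing to $10$.) The principal obstacle is step (2): forcing the walk decomposition of $x$ into the rigid shape that $D$ demands --- a monotone spine with radius-labelled valleys --- while charging every unit of cost either to a spine or final arc (paid at rate $\tfrac1\theta$) or to a detour (paid at rate $\tfrac{3}{1-\theta}$), with nothing left over.
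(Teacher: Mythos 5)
Your proposal reproduces the paper's overall architecture (an LP with assignment, forest, and flow variables on an auxiliary digraph with $O(n^2)$ copies per customer; a fractional point built from $x$; Goemans--Williamson plus flow rounding at threshold $\theta$; doubling the forest, adding the flow, and shortcutting via Proposition~\ref{prop:walk_solution_suffices}; the terms $\frac{1}{\theta}c(x)$ and $\frac{1-\tau-\theta}{\theta(1-\tau)}c(x|_{E_1(\Jeuscr)})$ arising from scaling the flow by $\frac1\theta$ and rerouting the excess target flow back to $s$). The step you yourself flag as the crux, the construction of the fractional point from $x$, is however where your prefix-maxima ``spine with valleys'' decomposition genuinely differs from the paper, and it is exactly there that the argument has a gap.

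Two concrete problems. First, your valley components are not separated in distance from the depot: the valley hanging off a spine vertex $v_i$ may contain customers arbitrarily close to $s$, so the distance intervals of consecutive components overlap heavily. But the auxiliary digraph (the paper's $B$, your $D$) only has arcs between copies whose guessed intervals are disjoint, and this disjointness is precisely what the Friggstad--Swamy-style rounding uses (no arc of $B$ has both endpoints in a contracted set $\sigma(Z)$, and the contraction preserves the ordering; cf.\ Lemma~\ref{lem:components_have_sufficient_flow} and the construction of $g'$). So either your spine walk does not map to a walk in $D$ at all, or you drop the disjointness requirement and the rounding as cited no longer goes through. The paper sidesteps this by using the green/blue interval decomposition of \cite{blum2007approximation}, whose components occupy pairwise disjoint blue intervals by construction. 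Second, the constant: even granting the rounding, with a single root per component and unit weight on the valley edges your fractional forest only satisfies $c(z)\le\detour(x|_{E_1(\Jeuscr)})$, and the rounding chain then gives $2c(F)\le\frac{4}{1-\theta}c(z)\le\frac{4}{1-\theta}\detour(x|_{E_1(\Jeuscr)})$, not the claimed $\frac{3}{1-\theta}$; nothing in \cite{friggstad2017compact} supplies the missing factor (their weaker accounting is why they obtain $15$ rather than $10$ for regret-bounded vehicle routing). The paper reaches $\frac{3}{1-\theta}$ by a different device: each blue-interval component gets \emph{two} sentinels (its first- and last-visited vertex), each assignment carries weight $\frac12$, and the forest vector is $\frac12\sum_P\lambda_P\chi^{F_P}$; this halving is feasible because every vertex of a component has two edge-disjoint paths to the two sentinels inside the subpath component, and the two resulting half-weight walks are monotone only because the blue intervals are disjoint. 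Combined with $c(F_P)\le\frac32\detour(P)$ (Lemma~\ref{lem:bound_forest_length_lp}) this yields $c(z)\le\frac34\detour(x|_{E_1(\Jeuscr)})$ and hence $\frac{4}{1-\theta}\cdot\frac34=\frac{3}{1-\theta}$. Your spine/valley components admit neither the interval disjointness nor the two-endpoint halving (the last-visited vertices of successive valleys need not appear in increasing distance from $s$), so the stated detour coefficient, and with it your parenthetical $15\to10$ claim, do not follow from the construction as described.
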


Large parts of this section follow closely \cite{friggstad2017compact}, although our setting is slightly different 
and we improve on this work at one point (in Section~\ref{sec:existence_cheap_lp_solution}).
In particular, the linear program (Section~\ref{sect:FS_LP}) is largely identical to the one in \cite{friggstad2017compact},
and the rounding procedure (Section~\ref{sec:rounding_lp_solution}) is taken from that paper. 
Nevertheless we cannot simply use a result of \cite{friggstad2017compact} in a black-box manner, 
so for sake of readability we give a self-contained proof here.

Friggstad and Swamy~\cite{friggstad2017compact} devised an LP-based 15-approximation algorithm for
regret-bounded vehicle routing. Here we have a metric space with a depot and a set of customers as well as a regret bound $R$ 
and ask for a set $\Pscr$ of paths such that each customer belongs to some path and each path begins at the depot,
ends at some customer, and has total detour at most $R$. The goal is to minimize the number of paths in $\Pscr$.
So on the one hand, the paths can end anywhere, on the other hand, 
the detour of each single path is bounded and we minimize the number of paths.
In Section~\ref{sect:improve_FS} we show how the approximation ratio can be improved from 15 to 10.

\subsection{The linear program\label{sect:FS_LP}}

\begin{definition}
An $s$-$t$-walk $P$ in $G(\Jeuscr)$ with $t\in \bar T \cup \{s\}$ is called \emph{monotone}
if $c(s,u) < c(s,v)$ for every edge $(u,v)\in E(P)$ with  $v\in V$.
A solution $\Pscr$ is monotone if all $P\in\Pscr$ are monotone.
\end{definition}

Suppose $\Jeuscr$ has a feasible solution $\Pscr$ with small detour.
We will observe that then there exists a subset $U$ of customers such that
\begin{itemize}\itemsep0pt
\item the instance $(U,\bar T,s,c,\Tscr,b)$ has a monotone solution $\Pscr_U$ that is no longer than $\Pscr$, and
\item there is a short forest $F$ connecting every vertex $v\in V\setminus U$ to a vertex in $U$.
\end{itemize}
If we knew the solution $\Pscr_U$ for $(U,\bar T,s,c,\Tscr,b)$ and the forest $F$, 
we could obtain a cheap solution for $\Jeuscr$ as follows. We take two copies of the edges in $F$ and orient them in different directions.
Then we add them to $\Pscr_U$ to obtain a walk solution.

We will consider an LP relaxation for the problem of finding a subset $U\subseteq V$, a monotone solution $\Pscr_U$, 
and a forest $(V,F)$ connecting every vertex to an element of $U$.

Let us first explain how we can obtain $\Pscr_U$ and $F$ from a solution $\Pscr$ of $\Jeuscr$ with small detour.
Of course, $\Pscr$ is unknown but our construction leads to a particular choice of $\Pscr_U$ and $F$ 
for which we can impose certain constraints in the LP.

For every walk $P\in\Pscr$, say from $s$ to $t$, we color the points in
$[0,\max\{c(s,u):u\in V(P)\}]$ blue or green.
A point $x$ is \emph{green} if there is an edge $(v,w)$ of $P$ such that 
$c(s,u)< x$ for all $u\in V(P_{[s,v]})\setminus\{s\}$ and $c(s,u)>x$ for all $u\in V(P_{[w,t]}) \setminus \{t\}$.
All other points are \emph{blue}, including $c(s,u)$ for all $u\in V(P)\setminus\{s,t\}$.
The blue intervals of $P$ form a subset of  the set $\Iscr$ of all closed intervals with endpoints among 
$\{c(s,u): u\in V(P)\setminus\{s,t\}\}$, including the trivial intervals consisting of a single point.

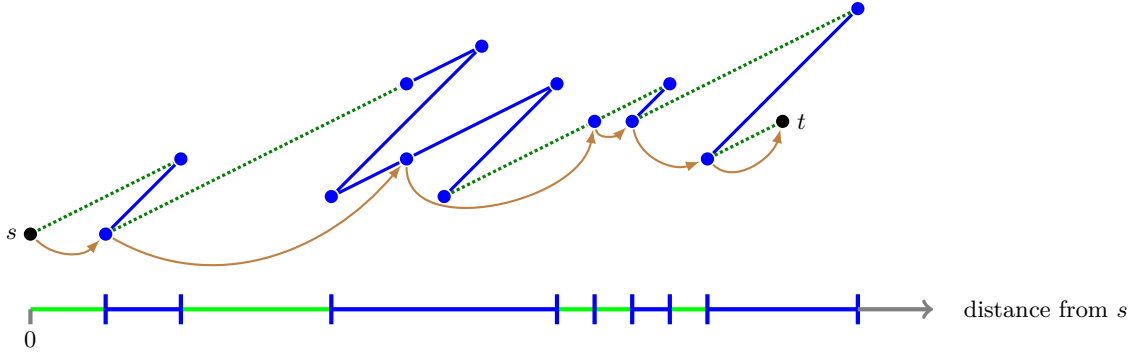
\begin{figure}[ht]
\begin{center}
\begin{tikzpicture}[scale=1]
\tikzstyle{vertex}=[blue,circle,fill,minimum size=5,inner sep=0pt]
\tikzstyle{egreen}=[darkgreen,line width=1.2, densely dotted]
\tikzstyle{eblue}=[blue,line width=1.2]

\begin{scope}[line width=1.6]
 \draw[green] (0,0) -- (1,0);
 \draw[blue] (1,0) -- (2,0);
 \draw[green] (2,0) -- (4,0);
 \draw[blue] (4,0) -- (7,0);
 \draw[green] (7,0) -- (8,0);
 \draw[blue] (8,0) -- (8.5,0);
 \draw[green] (8.5,0) -- (9,0);
 \draw[blue] (9,0) -- (11,0);
 \draw[blue] (1,-0.2) -- (1,0.2); 
 \draw[blue] (2,-0.2) -- (2,0.2); 
 \draw[blue] (4,-0.2) -- (4,0.2); 
 \draw[blue] (7,-0.2) -- (7,0.2); 
 \draw[blue] (7.5,-0.2) -- (7.5,0.2); 
 \draw[blue] (8,-0.2) -- (8,0.2); 
 \draw[blue] (8.5,-0.2) -- (8.5,0.2); 
 \draw[blue] (9,-0.2) -- (9,0.2); 
 \draw[blue] (11,-0.2) -- (11,0.2); 
 \draw[gray,->] (11,0) -- (12,0);
 \draw[gray] (0,-0.2) -- (0,0.0); 
 \node at (0,-0.4) {\small 0};
 \node at (13.5,0) {\small distance from $s$};
 \node at (-0.25,1) {\small $s$};
 \node[black] at (10.25,2.5) {\small $t$};
 \end{scope}
 
\begin{scope}[every node/.style={vertex}, line width=1.2]
 \node[black] (s) at (0,1) {};
 \node (v1) at (2,2) {};
 \node (v2) at (1,1) {};
 \node (v3) at (5,3) {};
 \node (v4) at (6,3.5) {};
 \node (v5) at (4,1.5) {};
 \node (v6) at (5,2) {};
 \node (v7) at (7,3) {};
 \node (v8) at (5.5,1.5) {};
 \node (v9) at (7.5,2.5) {};
 \node (v10) at (8.5,3) {};
 \node (v11) at (8,2.5) {};
 \node (v12) at (11,4) {};
 \node (v13) at (9,2) {};
 \node[black] (t) at (10,2.5) {};
\end{scope}
\draw[egreen] (s)--(v1);
\draw[eblue] (v1)--(v2);
\draw[egreen] (v2)--(v3);
\draw[eblue] (v3)--(v4)--(v5)--(v6)--(v7)--(v8);
\draw[egreen] (v8)--(v9)--(v10);
\draw[eblue] (v10)--(v11);
\draw[egreen] (v11)--(v12);
\draw[eblue] (v12)--(v13);
\draw[egreen] (v13)--(t);

\begin{scope}[brown, ->, >=latex, thick]
\draw (s) to[out=-45, in=-135] (v2);
\draw (v2) to[out=-30, in=-130] (v6);
\draw (v6) to[out=-90, in=-100] (v9);
\draw (v9) to[out=-80, in=-135] (v11);
\draw (v11) to[out=-80, in=-160] (v13);
\draw (v13) to[out=-45, in=-105] (t);
\end{scope}
\end{tikzpicture}
\end{center}
\caption{\label{fig:blue_intervals_and_forward_path} 
An example of a path from $s$ to $t$, consisting of the dotted green and solid blue edges.
The interval $[0,\max\{c(s,u):u\in V(P)\}]$, is colored green and blue (bottom).
The blue edges are those of the forest $F$.
The brown directed arcs show the monotone path visiting the sentinels of $P$.
}
\end{figure}

Then for every vertex $v\in V(P)\setminus \{s,t\}$, the distance $c(s,v)$ is contained in one of the blue intervals.
We consider the forest $F_P$ that contains those edges $\{v,w\}$ of $P$ for which $c(s,v)$ and $c(s,w)$ are contained in the same blue interval and $v,w\in V(P)\setminus\{s,t\}$. See Figure~\ref{fig:blue_intervals_and_forward_path}.
We will bound the cost of this forest using the following observation, which is essentially
Corollary 3.2 of \cite{blum2007approximation}.

\begin{lemma}[\cite{blum2007approximation}]\label{lem:bound_forest_length_lp}
We have $c(F_P) \le \frac{3}{2}\detour(E(P)\cap E_1(\Jeuscr))$.
\end{lemma}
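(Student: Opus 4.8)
The plan is to follow the argument behind Corollary~3.2 of \cite{blum2007approximation}, adapted to the fact that here the last edge of $P$ (the one entering $\bar T\cup\{s\}$) is not counted. Write $P=(s=v_0,v_1,\dots,v_k=t)$, put $d_i:=c(s,v_i)$, and let $e_i:=(v_{i-1},v_i)$; call $e_i$ \emph{forward} if $d_{i-1}<d_i$ and \emph{backward} if $d_{i-1}>d_i$. Since $t\in\bar T\cup\{s\}$, the internal vertices $v_1,\dots,v_{k-1}$ all lie in $V$, so $E(P)\cap E_1(\Jeuscr)=\{e_1,\dots,e_{k-1}\}$ and $F_P\subseteq\{e_2,\dots,e_{k-1}\}$; set $R:=\detour(E(P)\cap E_1(\Jeuscr))=\sum_{i=1}^{k-1}\detour(e_i)=c(P_{[s,v_{k-1}]})-d_{k-1}$. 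Using $c(e_i)=\detour(e_i)+(d_i-d_{i-1})$ I would split
\[
c(F_P)\ =\ \sum_{e\in F_P}\detour(e)\ +\ \Delta,\qquad \Delta:=\sum_{e_i\in F_P}(d_i-d_{i-1}).
\]
The first sum is at most $R$ because all detours are nonnegative and $F_P\subseteq\{e_1,\dots,e_{k-1}\}$; so it suffices to prove $\Delta\le\tfrac12 R$.

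The heart of the matter is the structure of the blue/green coloring, which depends only on the distances of the internal vertices: a level $x$ is green iff $P$ crosses it cleanly once, i.e.\ iff there is an $i$ with $d_j<x$ for $j<i$ and $d_j>x$ for $i\le j\le k-1$. From this I would extract two facts (being careful about ties among the $d_i$ and the special roles of $v_0=s$ and $v_k=t$). (a) Every interior point $x$ of a blue interval is crossed downward by some $e_i$ with $i\le k-1$: otherwise, taking $i^\ast$ minimal with $d_{i^\ast}>x$, the absence of a downward crossing forces $d_j>x$ for all $i^\ast\le j\le k-1$, which makes $x$ green. (b) $P$ traverses each maximal blue interval $[a,b]$ in a single contiguous run: if $P$ left $[a,b]$ and later returned, then — whether it exited upward through $b$ or downward through $a$ — some level just outside $[a,b]$ would be crossed at least twice and hence be blue, contradicting maximality of $[a,b]$.

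Given (a) and (b) I would finish as follows. Edges of $F_P$ span only blue levels, so with $n^+(x),n^-(x)$ denoting the numbers of forward resp.\ backward $F_P$-edges that cross level $x$, we have $\Delta=\int_0^\infty\big(n^+(x)-n^-(x)\big)\,dx$, the integrand being zero off the blue set. By (b), the unique run inside a blue interval has net crossing displacement in $\{-1,0,1\}$ at every level, so $n^+(x)-n^-(x)\le 1$ there; hence $\Delta\le|\mathrm{blue}|$, the total length of the blue intervals. By (a), $\#\{i\le k-1:\ d_{i-1}>x>d_i\}\ge 1$ on the interior of the blue set, so $|\mathrm{blue}|\le\int_0^\infty\#\{i\le k-1:\ d_{i-1}>x>d_i\}\,dx=\mathrm{Desc}$, where $\mathrm{Desc}:=\sum_{i\le k-1,\ e_i\text{ backward}}(d_{i-1}-d_i)$. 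Finally, $c(P_{[s,v_{k-1}]})\ge\mathrm{Asc}+\mathrm{Desc}$ together with $\mathrm{Asc}-\mathrm{Desc}=d_{k-1}$ (where $\mathrm{Asc}$ is the total forward rise of $e_1,\dots,e_{k-1}$) gives $R=c(P_{[s,v_{k-1}]})-d_{k-1}\ge 2\,\mathrm{Desc}$, so $\Delta\le\mathrm{Desc}\le\tfrac12 R$ and therefore $c(F_P)\le\tfrac32 R$.

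The step I expect to be the main obstacle is making the two structural facts precise and airtight: one must deal with ties among the distances $d_i$, with $v_0=s$ sitting at distance $0$ (possibly below all blue intervals), with the possibility $t=s$ so that $P$ is a closed walk, and with the exact definition of a blue interval as a maximal blue segment whose endpoints are internal-vertex distances. Everything else is bookkeeping, and the constant $\tfrac32$ arises exactly from the $2{:}1$ ratio between the detour of a backward edge and the height it descends.
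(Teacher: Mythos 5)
Your proof is correct (modulo the boundary issues you flag) but it follows a different accounting than the paper's. The paper's argument is shorter: it deletes the last edge to get $P'$, subdivides so that $P'$ alternates between portions lying in green and in blue intervals, and observes that every green level is crossed at least once while every blue level below $c(s,v_{k-1})$ is crossed at least three times (a single crossing would make the level green); this gives $c(s,v_{k-1})\le c(P'_{\mathrm{green}})+\frac13 c(P'_{\mathrm{blue}})$, hence $\detour(E(P)\cap E_1(\Jeuscr))=c(P')-c(s,v_{k-1})\ge \frac23 c(P'_{\mathrm{blue}})\ge\frac23 c(F_P)$. You instead decompose $c(F_P)$ into its detour part (at most $R$) plus the net rise $\Delta$, bound $\Delta$ by the blue measure via your single-run/net-crossing fact (b), bound the blue measure by the total descent via fact (a), and bound the descent by $R/2$; both are level-crossing arguments in the spirit of Blum et al.\ and give the same constant, but the paper's version needs neither (a) nor (b) for this lemma, whereas your (b) makes explicit the contiguity of $P$ inside each blue interval, which the paper only uses implicitly afterwards (the one-to-one correspondence between components of $F_P$ and blue intervals). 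One small repair to your argument for (b): if the walk leaves a maximal blue interval $[a,b]$ and ``returns'' to it only at its terminal vertex $t$, the level just above $b$ can be green (the edge into $t$ certifies greenness because $t$ is excluded from the coloring condition), so ``crossed twice hence blue'' is not literally valid there; this does not affect your conclusion, since such a return contributes no edge of $F_P$ and the net-crossing bound $n^+(x)-n^-(x)\le 1$ still holds, but the statement of (b) should be phrased in terms of the internal vertices (equivalently, the $F_P$-edges) of $P$ in $[a,b]$. Ties among the $d_i$ are indeed harmless, as they only affect a null set in your integrals.
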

\begin{proof}
Obtain $P'$ from $P$ by deleting the last edge and subdividing edges so that $P'$ 
alternates between subpaths in green and subpaths in blue intervals.
Then the total length of the green subpaths is at least the total length of the green intervals,
and the total length of the blue subpaths is at least three times the total length of the blue intervals up to $c(s,v)$,
where $v$ is the last vertex of $P'$.
Hence 
$c(s,v) \le c(P'_{\text{green}}) + \frac{1}{3}c(P'_{\text{blue}}) = c(P') - \frac{2}{3}c(P'_{\text{blue}}) \le  c(P') - \frac{2}{3}c(F_P)$.
Since $\detour(E(P)\cap E_1(\Jeuscr))=c(P') - c(s,v)$, this yields the claimed inequality.
\end{proof}

There is a one-to-one correspondence between the connected components of $(V(P)\setminus\{s,t\},F_P)$ and the blue intervals of $P$.
For every blue interval $I$, we choose one arbitrary vertex $v\in V(P)\setminus \{s,t\}$ with $c(s,v)\in I$.
We call these vertices the \emph{sentinels} of $P$.
We have exactly one sentinel for each connected component of $(V(P)\setminus\{s,t\},F_P)$.
The sentinels of $P$ are visited by $P$ in an order of increasing distance from $s$.
Hence, shortcutting $P$ such that it visits only the sentinels yields a monotone path of length at most $c(P)$.

The forest $F$ is the union of all $F_P$ for $P\in \Pscr$ and the monotone solution $\Pscr_U$ arises from $\Pscr$ by shortcutting $\Pscr$ such that it only visits the sentinels.

Next we describe a linear programming relaxation for this forest $F$ and monotone solution $\Pscr_U$.
For $u\in V$ denote $\Iscr_u:=\{I\in\Iscr: c(s,u)\in I\}$.
Recall that for every vertex $v\in V$, there is a path $P\in \Pscr$ that contains $v$, 
and there is a blue interval $I\in \Iscr_v$ of $P$ and a sentinel $u$ such that $I\in \Iscr_u$.
We introduce variables 
\[
 a_{v,(I,u)} \ge 0 \qquad (u,v\in V, I \in \Iscr_u \cap \Iscr_v)
\]
that model the assignment of $v$ to the interval $I$ with sentinel $u$.
We remark that if $v$ is a sentinel of interval $I$ we have $a_{v,(I,v)}=1$ in the corresponding LP solution.
These variables satisfy the following constraints:
\begin{align*}
a_{v,(I,u)} &\le a_{u, (I,u)} &(u,v\in V,\, I\in\Iscr_u\cap\Iscr_v) \\
\sum_{u\in V,\, I\in\Iscr_u\cap \Iscr_v} a_{v,(I,u)} &= 1 &(v\in V).
\end{align*}
To model the forest $F$, we introduce variables $z_e \in \mathbb{R}_{\ge 0}$ for all $e\in {V\choose 2}$.
If an edge $e\in {V\choose 2}$ is contained in $F$, we have $z_e=1$ in the corresponding LP solution (and we have $z_e=0$ otherwise).
If $v\in V(P)\setminus\{s,t\}$ for $P\in \Pscr$, then there is a blue interval $I\in \Iscr_v$ of $P$, 
and $v$ must be connected to the sentinel $u$ of $I$ in the forest $(V,F)$. Hence we can introduce the constraints
\begin{align*}
z(\delta(S)) &\ge \sum_{u\in V\setminus S,\, I\in\Iscr_u}a_{v,(I,u)} &(v\in S\subseteq V).
\end{align*}
Finally, we introduce flow constraints modelling the monotone solution $\Pscr_U$.
Define a digraph $B$ with vertex set $\{(I,u): u\in V,\, I\in\Iscr_u\}\cup \{s\} \cup \bar T$ and four types of arcs:
\begin{itemize}\itemsep=0pt
\item arcs $(s,(I,u))$ for $u\in V$ and $I \in \Iscr_u$,
\item arcs $((I,u),(J,v))$ whenever $c(s,u)<c(s,v)$ and $I \in \Iscr_u$ and $J\in \Iscr_v$ are disjoint,
\item arcs  $((I,u),t)$ for $t\in \bar T\cup \{s\}$, $u\in V$, and $I \in \Iscr_u$, and
\item arcs $(s,t)$ for $t\in \bar T$.
\end{itemize}
We remark that the arc set is constructed such that $B-s$ is acyclic.
The arcs in $B$ inherit their cost from $c$, i.e., $c(s,(I,u))=c(s,u)$, $c((I,u),(J,v))=c(u,v)$, etc.

We introduce flow variables $f_{e} > 0$ for every arc $e$ in $B$.
For our solution $\Pscr_U$, we define the corresponding flow $f$ in $B$ as follows.
\begin{itemize}\itemsep=0pt
\item For an edge $(s,u)$ in  $\Pscr_U$, where $u$ is the sentinel of the interval $I$, we set $f_{(s,(I,u))} :=1$.
\item For an edge $(u,u')$ in $\Pscr_U$, where $u$ is the sentinel of the interval $I$ and $I'$ is the interval of the sentinel $u'$ we set $f_{((I,u),(I',u'))} :=1$.
\item For an edge $(u,t)$ in $\Pscr_U$, where $u$ is the sentinel of the interval $I$ and $t\in \bar T \cup \{s\}$, we set $f_{((I,u), t)} :=1$.
\item For an edge $(s,t)$ in $\Pscr_U$, where $t\in \bar T$, we set $f_{(s,t)}:=1$.
\end{itemize}
If an edge appears multiple times, the flow values add up.
All other flow variables are zero.
Then the flow $f$ fulfills the following constraints:
\begin{align*}
\sum_{u\in V} f(\delta^-(\Tg  )) &= b(\Tg  ) &(\Tg  \in\Tscr) \\
f(\delta^-((I,u))) = f(\delta^+((I,u))) &=a_{u,(I,u)} &(u\in V,\, I\in\Iscr_u).
\end{align*}
To prove Theorem~\ref{thm:mainVRPwTargetGroups2}, we want to construct a feasible LP solution from a weak fractional solution to $\Jeuscr$. 
To this end, we will consider the polyhedron $\Pi$ below, 
where we replaced the flow constraints $\sum_{u\in V} f(\delta^-(\Tg  )) = b(\Tg  )$ by $\sum_{u\in V} f(\delta^-(\Tg  )) = (1-\tau)\cdot b(\Tg  )$ 
and relaxed $\sum_{u\in V,\, I\in\Iscr_u\cap \Iscr_v} a_{v,(I,u)} = 1$ to $\sum_{u\in V,\, I\in\Iscr_u\cap \Iscr_v} a_{v,(I,u)} \ge 1$.
We define
\begin{equation}\label{eq:FS_LP}
\begin{aligned}
\Pi =\Big\{& a,f,z : & \\
& &\sum_{u\in V} f(\delta^-(\Tg  )) &= (1-\tau) \cdot b(\Tg  ) &(\Tg  \in\Tscr) \\
& & f(\delta^-((I,u))) = f(\delta^+((I,u))) &=a_{u,(I,u)} &(u\in V,\, I\in\Iscr_u) \\
& & a_{v,(I,u)} &\le a_{u,(I,u)} &(u,v\in V,\, I\in\Iscr_u\cap\Iscr_v) \\
& & \sum_{u\in V,\, I\in\Iscr_u\cap \Iscr_v} a_{v,(I,u)} &\ge 1 &(v\in V) \\
& & z(\delta(S)) &\ge \sum_{u\in V\setminus S,\, I\in\Iscr_u\cap\Iscr_v} a_{v,(I,u)} &(v\in S\subseteq V) \\
& & a,f,z &\ge 0 \\
\Big\}&.
\end{aligned}
\end{equation}
Note that the number of constraints can be reduced to a polynomial number 
by introducing flow variables $g^v_e$ for each $v \in V$ and $e\in{V\choose 2}$. 
Require that the flow $g^v$ sends $\sum_{I \in \Iscr_u} a_{v,(I,u)}$ units from $v$ to $u$ and $z \ge g^v$. 
Then the cut constraints for $z$ can be dropped without changing $\Pi$.

Later, in Section~\ref{sec:rounding_lp_solution}, we show how we can construct a cheap solution to $\Jeuscr$ 
from any element $(a,f,z)\in \Pi$ of the polyhedron $\Pi$.
We write $c(z):= \sum_{e =\{v,w\}\in{V\choose 2}} c(v,w) z_e$ 
and $c(f):= \sum_{e\in E(B)} c(e)f_e$.
We will show:

\begin{lemma}\label{lem:rounding_lp_solution}
Let $0 < \theta \le 1-\tau$.
Given $(a,f,z)\in \Pi$ we can compute a solution $\Pscr$ of the instance $\Jeuscr$ of \textsc{Vehicle Routing with Target groups}  with
\[
c(\Pscr) \ \le \ \frac{4}{1-\theta} c(z) +  \frac{1}{\theta}  \cdot c(f) 
+  \frac{1-\tau - \theta}{\theta \cdot (1-\tau)} \cdot
\sum_{t\in \bar T}\sum_{u\in V,I\in\Iscr_u} c(s,u) \cdot f((I,u),t)
\]
in polynomial time.
\end{lemma}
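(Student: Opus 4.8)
The goal is to round a point $(a,f,z)\in\Pi$ into an actual solution $\Pscr$ of $\Jeuscr$. Following \cite{friggstad2017compact}, the plan is to split the task into two parts: first, turn the flow $f$ together with the assignment $a$ into a fractional monotone solution on an appropriate subset of customers; second, use $z$ (suitably scaled) to pay for a forest connecting every customer to that subset, and then combine the two by doubling the forest edges. The key structural observation is that $B-s$ is acyclic, so $f$ decomposes into $s$-$t$-paths (plus possibly cycles through $s$), each of which is monotone in $B$ and projects down to a monotone walk in $G(\Jeuscr)$; the total cost of these walks is at most $c(f)$, and the ``last edge'' cost $\sum_t\sum_{u,I}c(s,u)f((I,u),t)$ accounts exactly for the arcs entering $\bar T\cup\{s\}$.

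The technical heart is a threshold/conditioning argument on the variable $\theta$. First I would normalize: because $f(\delta^-((I,u)))=a_{u,(I,u)}$ and $\sum_{u,I} a_{v,(I,u)}\ge 1$ for each $v$, the flow $f$ ``covers'' each customer fractionally at least once through the sentinel role. I would then sample (or, deterministically, apply a threshold-rounding argument as in \cite{friggstad2017compact}) so that with the scaling factor $\frac{1}{\theta}$ the chosen monotone paths serve a set $U$ of sentinels with the property that, for each $v\in V$, an $I$ with $a_{v,(I,u)}$ large enough is picked with its sentinel; customers $v$ for which this fails get connected via the forest. The cut constraint $z(\delta(S))\ge\sum_{u\notin S,I}a_{v,(I,u)}$ is precisely what guarantees, via a splitting-off / parsimonious-augmentation argument (or directly that $z$ lies in the dominant of the connector polyhedron for the relevant demands), that scaling $z$ by $\frac{2}{1-\theta}$ yields a fractional forest connecting every uncovered $v$ to $U$; doubling that forest (factor $2$ again, giving $\frac{4}{1-\theta}$) makes it Eulerian-traversable and hence turns the monotone paths into a genuine walk solution. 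Applying Proposition~\ref{prop:walk_solution_suffices} then produces $\Pscr$ with $c(\Pscr)$ bounded by the sum of the scaled costs, i.e. $\frac{4}{1-\theta}c(z)+\frac1\theta c(f)+\frac{1-\tau-\theta}{\theta(1-\tau)}\sum_t\sum_{u,I}c(s,u)f((I,u),t)$.

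The coefficient $\frac{1-\tau-\theta}{\theta(1-\tau)}$ on the last-edge term is the subtle part: it arises because we only need total flow $(1-\tau)b(\Tg)$ into each target group, but to obtain $b(\Tg)$ integral paths ending in $\Tg$ we must scale by $\frac{1}{1-\tau}$ \emph{only on the last arc into $\bar T$}, and the rounding of the rest of the flow costs $\frac1\theta$; splitting the last-edge contribution as $\frac1\theta = \frac{1}{1-\tau}+\frac{1-\tau-\theta}{\theta(1-\tau)}$ isolates exactly this correction, so that the ``reused'' $\frac{1}{1-\tau}$ portion is folded into the rescaling needed to hit the integral target counts and the genuinely extra cost is the stated coefficient. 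I expect the main obstacle to be making the rounding of $f$ and $z$ \emph{simultaneously} consistent — i.e. ensuring that the forest $z$ is rounded so that each of its components is attached to a monotone path that is actually selected — which is why one passes through the auxiliary digraph $B$ and the per-customer flow variables $g^v$, and why the argument must be carried out in a single combined rounding step rather than rounding $f$ and $z$ independently. I would structure the proof as: (1) decompose $f$ into monotone walks; (2) threshold-round the pair $(a,f)$ to get an integral monotone solution on $U$ at cost $\le\frac1\theta c(f)$ with the stated last-edge adjustment; (3) invoke the cut constraints to extract a forest from $2z/(1-\theta)$ connecting $V\setminus U$ to $U$; (4) double the forest, merge, shortcut via Proposition~\ref{prop:walk_solution_suffices}, and add up costs.
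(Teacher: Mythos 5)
Your high-level architecture is the right one (forest paid by $\frac{2}{1-\theta}c(z)$ and then doubled, monotone walks paid by $\frac{1}{\theta}c(f)$, and a last-arc rescaling from $(1-\tau)b(\Tg)$ to $b(\Tg)$ producing the coefficient $\frac{1-\tau-\theta}{\theta(1-\tau)}$, using the correct identity $\frac1\theta=\frac{1}{1-\tau}+\frac{1-\tau-\theta}{\theta(1-\tau)}$), and you correctly name the crux — coupling the rounding of $f$ and $z$ — but your sketch does not resolve it, and as stated it would fail. The paper rounds $z$ \emph{first}: it defines the downwards-monotone requirement $r(S)=1$ iff every $v\in S$ has $\sum_{u\in S,\,I\in\Iscr_u\cap\Iscr_v}a_{v,(I,u)}<\theta$, notes that the cut constraints force $z(\delta(S))\ge 1-\theta$ exactly for such sets, and runs Goemans--Williamson to get $F$ with $c(F)\le\frac{2}{1-\theta}c(z)$; only afterwards is $f$ coupled to the forest, via a witness node $w$ in each component $Z$ (which exists because $r(Z)=0$) and the set $\sigma(Z)=\{(I,u):u\in Z,\,I\in\Iscr_u\cap\Iscr_w\}$, through which the degree constraints guarantee $f(\delta^-(\sigma(Z)))\ge\theta$. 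Your order is reversed: you first ``threshold-round'' $(a,f)$ to select a sentinel set $U$ and then ask $\frac{2}{1-\theta}z$ to connect $V\setminus U$ to $U$. For a set $S$ disjoint from your $U$ in which some $v$ carries $a$-mass at least $\theta$ on pairs $(I,u)$ with $u\in S$ (those $u$ simply not having been selected into $U$), the LP gives only $z(\delta(S))\ge 1-\sum_{u\in S,I}a_{v,(I,u)}$, which can be far below $1-\theta$, so the claimed forest bound does not follow; the requirement function must be defined from $(a,\theta)$ alone, not from the outcome of rounding $f$.

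The second gap is the step ``threshold-round $(a,f)$ to an integral monotone solution on $U$ at cost $\le\frac1\theta c(f)$ with the stated last-edge adjustment'': this is asserted, not proved, and it is unclear how such a per-customer selection would meet the exact requirement of $b(\Tg)$ walks into each target group at that cost. In the paper integrality comes from network-flow integrality after contraction: decompose $f$ into $s$-$t$ walks (using that $B-s$ is acyclic), shortcut them onto $\{s\}\cup\bar T\cup\bigcup_Z\sigma(Z)$ and trim so that exactly $\theta$ units traverse each $\sigma(Z)$, contract each $\sigma(Z)$ to a node of a component digraph $D$, scale by $\frac1\theta$ (now exactly one unit passes through every component), and reroute the surplus $\frac{1-\tau-\theta}{\theta(1-\tau)}\,g'(Z,t)$ on each arc into $\bar T$ back to $s$ — this rerouting is precisely where the term $\frac{1-\tau-\theta}{\theta(1-\tau)}\sum_{t\in\bar T}\sum_{u,I}c(s,u)f((I,u),t)$ arises, because the new arc into $s$ costs at most $c(s,u)$; then an integral minimum-cost flow in $D$ is taken and each component is traversed along its forest path, doubling unused forest edges (whence $\frac{4}{1-\theta}c(z)$), and Proposition~\ref{prop:walk_solution_suffices} finishes. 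Relatedly, your reading of the last term is off: $\sum_{t\in\bar T}\sum_{u,I}c(s,u)f((I,u),t)$ does not account for ``the arcs entering $\bar T\cup\{s\}$''; it charges the depot distance $c(s,u)$ of the tail, and only for arcs into $\bar T$ — without the explicit reroute-to-$s$ step, the appearance of $c(s,u)$ in the bound has no justification in your argument.
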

In Section~\ref{sec:existence_cheap_lp_solution} we show how we can obtain a vector $(a,f,z)\in \Pi$ from a weak fractional solution $x$ of $\Jeuscr$ and prove the following lemma. 
Note that we gain a factor $2$ in the bound on $z$, compared to Lemma~\ref{lem:bound_forest_length_lp}.
\begin{lemma}\label{lem:cheap_lp_solution_exists}
For any weak fractional solution $x$ of  the instance $\Jeuscr$ of \textsc{Vehicle Routing with Target groups} there exists a vector $(a,f,z)\in \Pi$ such that
\begin{align*}
 c(z)  \le&\ \sfrac{3}{4}\detour(x|_{E_1(\Jeuscr)}), \\
c(f) \le&\  c(x), \\
\sum_{t\in \bar T}\sum_{u\in V,I\in\Iscr_u} c(s,u) \cdot f((I,u),t) \le&\ c(x|_{E_1(\Jeuscr)}).
\end{align*}
\end{lemma}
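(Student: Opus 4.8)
The plan is to construct the vector $(a,f,z)\in\Pi$ explicitly from a weighted walk decomposition of $x$, using exactly the blue/green colouring and the sentinel construction set up above. The two cost inequalities for $c(f)$ and for the final‑edge term will fall out of the triangle inequality, and essentially all the work is in establishing $c(z)\le\sfrac34\detour(x|_{E_1(\Jeuscr)})$, i.e.\ in saving a factor $2$ over the naive bound from Lemma~\ref{lem:bound_forest_length_lp}.

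\emph{Decomposition and per‑walk objects.} First I would write $x=\sum_{P\in\Pscr}\lambda_P\,\chi^{E(P)}$ as in Definition~\ref{def:weak_fractional_solution}: each $P$ is a walk in $G(\Jeuscr)$ from $s$ to some $t\in\{s\}\cup\bar T$ with inner vertices in $V$, each $v\in V$ has $\sum_{P:v\in V(P)}\lambda_P\ge1$, and for each $\Tg\in\Tscr$ the walks ending in $\Tg$ have total weight $(1-\tau)b(\Tg)$. Fix such a $P$ (after shortcutting so each customer is visited at most once, which decreases neither cost nor detour). Colour $[0,\max_{u\in V(P)}c(s,u)]$ as prescribed, obtaining the blue intervals $I_1,\dots,I_k$ of $P$ ordered by distance from $s$, the forest $F_P$ whose components biject with the $I_j$, a sentinel $u_j\in V(P)\setminus\{s,t\}$ with $c(s,u_j)\in I_j$ for each $j$, and the monotone path $\tilde P=(s,u_1,\dots,u_k,t)$ through the sentinels. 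Set $a^P_{v,(I_j,u_j)}:=1$ for each interior vertex $v$ of $P$ with $c(s,v)\in I_j$ (all other entries zero), let $f^P$ be the unit flow along $\tilde P$ in the digraph $B$, and let $z^P$ be a connection forest joining every interior vertex of $P$ to the sentinel of its blue interval. Finally set $a:=\sum_P\lambda_Pa^P$, $f:=\sum_P\lambda_Pf^P$, $z:=\sum_P\lambda_Pz^P$.

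\emph{Membership in $\Pi$ and the easy cost bounds.} Each constraint of $\Pi$ I would verify by summing the obvious per‑walk statement against $\lambda_P$: flow conservation $f(\delta^-((I,u)))=f(\delta^+((I,u)))=a_{u,(I,u)}$ because $\tilde P$ uses $(I_j,u_j)$ exactly once and $a^P_{u_j,(I_j,u_j)}=1$; $f(\delta^-(\Tg))=(1-\tau)b(\Tg)$ because precisely the walks ending in $\Tg$ contribute, with total weight $(1-\tau)b(\Tg)$; $\sum_{u,I}a_{v,(I,u)}=\sum_{P:v\in V(P)}\lambda_P\ge1$; $a_{v,(I,u)}\le a_{u,(I,u)}$ since $a^P_{v,(I,u)}=1$ forces $(I,u)=(I_j,u_j)$ and hence $a^P_{u,(I,u)}=1$; and the cut constraint $z(\delta(S))\ge\sum_{u\notin S,I}a_{v,(I,u)}$ because for every $P$ with $v\in V(P)$ the forest $z^P$ connects $v$ to its sentinel, so $z^P(\delta(S))\ge1$ whenever that sentinel lies outside $S$. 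For the costs: since $\tilde P$ visits the sentinels in order of increasing distance from $s$, the triangle inequality gives $c(\tilde P)\le c(P)$, hence $c(f)=\sum_P\lambda_Pc(\tilde P)\le\sum_P\lambda_Pc(P)=c(x)$; and the only arc of $f^P$ entering a target or $s$ is $((I_k,u_k),t)$, where $c(s,u_k)$ is at most the length of the initial segment of $P$ up to its last customer, i.e.\ at most $c(E(P)\cap E_1(\Jeuscr))$, so $\sum_{t\in\bar T}\sum_{u,I}c(s,u)f((I,u),t)\le c(x|_{E_1(\Jeuscr)})$.

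\emph{The main obstacle: $c(z)\le\sfrac34\detour(x|_{E_1(\Jeuscr)})$.} Choosing $z^P=\chi^{F_P}$ and invoking Lemma~\ref{lem:bound_forest_length_lp} would only give $c(z)\le\sfrac32\detour(x|_{E_1(\Jeuscr)})$, so the whole point is to replace $F_P$ by a strictly cheaper connection forest $z^P$, still spanning every interior vertex of $P$ together with the sentinel of its blue interval, with $c(z^P)\le\sfrac34\detour(E(P)\cap E_1(\Jeuscr))$. The slack to exploit is that, in the notation of the proof of Lemma~\ref{lem:bound_forest_length_lp}, the blue subpaths of $P$ cross each blue interval at least three times, so $\detour(E(P)\cap E_1(\Jeuscr))\ge2\cdot(\text{total length of the blue intervals of }P)$, whereas a single downward traversal of a blue interval already meets every level of it. The plan is therefore to discard, within each blue interval, roughly half of the edges of the component $P\cap I_j$ and re‑attach the remaining interior vertices to the retained ``spine'' (or, where consecutive blue intervals abut, to the spine of an adjacent component, which is legitimate because the corresponding $\Iscr$‑intervals then share an endpoint), charging the re‑attachment to the detour of the discarded traversals. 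Making this selection precise and showing that the per‑interval costs telescope to at most $\sfrac34\detour(E(P)\cap E_1(\Jeuscr))$ is the technical crux; granting that, summing $c(z^P)$ against $\lambda_P$ and combining with the bounds above proves the lemma.
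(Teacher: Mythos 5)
Your construction of $a$, $f$, and the two easy cost bounds is fine as far as it goes, but the proof has a genuine gap exactly where you say the ``technical crux'' lies: the bound $c(z)\le\sfrac{3}{4}\detour(x|_{E_1(\Jeuscr)})$ is never established. With your choice $a^P_{v,(I_j,u_j)}=1$ for a \emph{single} sentinel $u_j$ per blue interval, the cut constraints of $\Pi$ force $z^P$ to (fractionally) connect every interior vertex $v$ of $P$ to its own sentinel $u_j$: if $v$'s component in the support of $z^P$ missed $u_j$, the set $S$ equal to that component would violate $z(\delta(S))\ge\sum_{u\notin S,I}a_{v,(I,u)}$. This rules out your repair idea of re-attaching vertices to the spine of an \emph{adjacent} component --- the intervals sharing an endpoint does not help, because $a$ assigns $v$ to $(I_j,u_j)$ and not to the neighbouring pair; and if you change $a$ to match, the flow-conservation constraints $f(\delta^{\pm}((I,u)))=a_{u,(I,u)}$ pull $f$ along with it. So the ``discard half the edges and charge the re-attachment to the detour'' step is not just unfinished, it is in tension with the polyhedron you are targeting, and nothing in the proposal certifies a factor-$\sfrac{1}{2}$ saving over $c(F_P)\le\sfrac{3}{2}\detour(E(P)\cap E_1(\Jeuscr))$.

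The paper gets the factor $\sfrac{1}{2}$ by a different, purely fractional device rather than by building a cheaper integral forest. For each connected component $Z$ of $(V,F_P)$ it uses \emph{two} distinguished vertices, the first-visited vertex $u^Z_1$ and the last-visited vertex $u^Z_2$ of $Z$ on $P$, sets $a^P_{v,(I,u^Z_1)}=a^P_{v,(I,u^Z_2)}=\sfrac{1}{2}$, lets $f$ be the average $\sfrac{1}{2}(\chi^{E(P'_1)}+\chi^{E(P'_2)})$ of the two monotone walks through the $u^Z_1$'s and the $u^Z_2$'s respectively, and then simply takes $z=\sfrac{1}{2}\sum_P\lambda_P\chi^{F_P}$. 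The halved forest still satisfies the cut constraints because each component of $F_P$ is a subpath of $P$, so the paths inside it from $v$ to $u^Z_1$ and from $v$ to $u^Z_2$ are edge-disjoint, and each only needs to carry weight $\sfrac{1}{2}$. Then $c(z)=\sfrac{1}{2}\sum_P\lambda_P c(F_P)\le\sfrac{3}{4}\detour(x|_{E_1(\Jeuscr)})$ follows directly from Lemma~\ref{lem:bound_forest_length_lp}, with no new combinatorial forest construction needed. If you want to salvage your write-up, replacing your single-sentinel assignment by this two-endpoint splitting is the missing idea; as written, the central inequality of the lemma remains unproven.
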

From Lemma~\ref{lem:rounding_lp_solution} and Lemma~\ref{lem:cheap_lp_solution_exists} we can now conclude Theorem~\ref{thm:mainVRPwTargetGroups2}.

\begin{proof}[Proof of Theorem~\ref{thm:mainVRPwTargetGroups2}]
We compute an optimum solution $(a,f,z)$ to the LP
\[
\min \Big\{ \frac{4}{1-\theta} c(z) +  \frac{1}{\theta}  \cdot c(f) 
+  \frac{1-\tau - \theta}{\theta \cdot (1-\tau)} \cdot   \sum_{t\in \bar T}\sum_{u\in V,I\in\Iscr_u} c(s,u) \cdot f((I,u),t) \ :\ (a,f,z) \in \Pi \Big\}.
\]
By Lemma~\ref{lem:cheap_lp_solution_exists} we have
\begin{align*}
&\ \frac{4}{1-\theta} c(z) +  \frac{1}{\theta}  \cdot c(f) 
+  \frac{1-\tau - \theta}{\theta \cdot (1-\tau)} \cdot  \sum_{t\in \bar T}\sum_{u\in V,I\in\Iscr_u} c(s,u) \cdot f((I,u),t)
\\
 \le&\ \frac{3}{1-\theta} \detour(x|_{E_1(\Jeuscr)}) + \frac{1}{\theta}  \cdot c(x)
+  \frac{1-\tau - \theta}{\theta \cdot (1-\tau)} \cdot  c(x|_{E_1(\Jeuscr)}). 
\end{align*}
Hence, applying Lemma~\ref{lem:rounding_lp_solution} completes the proof.
\end{proof}

\subsection{Existence of a cheap LP solution (Proof of Lemma~\ref{lem:cheap_lp_solution_exists})}\label{sec:existence_cheap_lp_solution}

In this section we prove Lemma~\ref{lem:cheap_lp_solution_exists}.
Given a weak fractional solution $x$ of $\Jeuscr$ we will construct a vector $(a,f,z)\in \Pi$.
By the definition of weak fractional solutions (Definition~\ref{def:weak_fractional_solution}) we can write
\begin{equation*}
    x = \sum_{P\in \Pscr} \lambda_P \chi^{E(P)},
\end{equation*}
where $\Pscr$ is a set of walks in $G(\Jeuscr)$, each starting in $s$ and ending in some $t\in \bar T\cup \{s\}$, and $\lambda_P \ge 0$ for all $P\in \Pscr$.
For every walk $P\in \Pscr$, we consider the forest $(V,F_P)$.
Recall that every connected component of $(V,F_P)$ corresponds to a blue interval $I\in\Iscr$ of $P$.
For every connected component $Z$ of $(V,F_P)$ let $u^Z_1$ and $u^Z_2$ be the vertex in $Z$ visited first and visited last by $\Pscr$, respectively.

For each $v\in Z$ we set
\[
    a^P_{v,(I,u^Z_1)} = a^P_{v,(I,u^Z_2)} = \sfrac{1}{2}
\] 
if $u^Z_1\not=u^Z_2$, and $a^P_{v,(I,u^Z_1)} = 1$ if $u^Z_1=u^Z_2$.
We set $a^P_{v,(I,u)}=0$ for all other $(I,u)$ with $u\in V$ and $I\in \Iscr_u \cap \Iscr_v$. 
We aggregate
\[
 a_{v,(I,u)} =\sum_{P\in \Pscr: v\in V(P)} \lambda_P \cdot  a^P_{v,(I,u)} .
\]
Then $a_{v,(I,u)} \le a_{u,(I,u)}$ for all $u,v\in V,\, I\in\Iscr_u\cap\Iscr_v$ and 
$\sum_{u\in V,\, I\in\Iscr_u\cap \Iscr_v} a_{v,(I,u)} = \sum_{P\in \Pscr: v\in V(P)} \lambda_P \ge 1$ for all $v\in V$.

\begin{figure}[ht]
\begin{center}
\begin{tikzpicture}[scale=1]
\tikzstyle{vertex}=[blue,circle,fill,minimum size=5,inner sep=0pt]
\tikzstyle{egreen}=[darkgreen,line width=1.2, densely dotted]
\tikzstyle{eblue}=[blue,line width=1.2]

\begin{scope}[shift={(0,-5)}]

\begin{scope}[line width=1.2]

 \draw[blue] (1,1) -- (2,1);
 \draw[blue] (1,0.8) -- (1,1.2); 
 \draw[blue] (2,0.8) -- (2,1.2);

 \draw[blue] (1,2) -- (2,2);
 \draw[blue] (1,1.8) -- (1,2.2); 
 \draw[blue] (2,1.8) -- (2,2.2);

 \draw[blue] (4,3) -- (7,3);
  \draw[blue] (4,2.8) -- (4,3.2); 
 \draw[blue] (7,2.8) -- (7,3.2); 

 \draw[blue] (4,1.5) -- (7,1.5);
  \draw[blue] (4,1.3) -- (4,1.7); 
 \draw[blue] (7,1.3) -- (7,1.7); 

  \draw[blue] (7.5,2.3) -- (7.5,2.7); 

 \draw[blue] (8,3) -- (8.5,3);
 \draw[blue] (8,2.8) -- (8,3.2); 
 \draw[blue] (8.5,2.8) -- (8.5,3.2);

 \draw[blue] (8,2.5) -- (8.5,2.5);
 \draw[blue] (8, 2.3) -- (8,2.7); 
 \draw[blue] (8.5, 2.3) -- (8.5,2.7);

 \draw[blue] (9,4) -- (11,4);
  \draw[blue] (9,3.8) -- (9,4.2); 
 \draw[blue] (11,3.8) -- (11,4.2); 

 \draw[blue] (9,2) -- (11,2);
  \draw[blue] (9,1.8) -- (9,2.2); 
 \draw[blue] (11,1.8) -- (11,2.2); 

\end{scope}
 
\begin{scope}[every node/.style={vertex}, line width=1.2]
 \node[black] (s) at (0,1) {};
 \node[red] (v1) at (2,2) {};
 \node[brown] (v2) at (1,1) {};
 \node[red] (v3) at (5,3) {};
 \node[brown] (v8) at (5.5,1.5) {};
 \node[orange] (v9) at (7.5,2.5) {};
 \node[red] (v10) at (8.5,3) {};
 \node[brown] (v11) at (8,2.5) {};
 \node[red] (v12) at (11,4) {};
 \node[brown] (v13) at (9,2) {};
 \node[black] (t) at (10,2.5) {};
\end{scope}

\begin{scope}[thick, >=latex]
\draw[red,->] (s) to (v1);
\draw[red,->, bend right=10] (v1) to (v3);
\draw[red,->] (v3)  to (v9);
\draw[red,->] (v9) to  (v10);
\draw[red,->] (v10) to (v12);
\draw[red,->] (v12) to (t);
\draw[brown,->] (s) to  (v2);
\draw[brown,->, bend right=30] (v2) to (v8);
\draw[brown,->] (v8) to (v9);
\draw[brown,->] (v9) to (v11);
\draw[brown,->] (v11) to (v13);
\draw[brown,->] (v13) to (t);
\end{scope}

 \node at (-0.25,1) {\small $s$};
 \node[black] at (10.25,2.5) {\small $t$};
\end{scope}


\begin{scope}[line width=1.2]
 \draw[green] (0,0) -- (1,0);
 \draw[blue] (1,0) -- (2,0);
 \draw[green] (2,0) -- (4,0);
 \draw[blue] (4,0) -- (7,0);
 \draw[green] (7,0) -- (8,0);
 \draw[blue] (8,0) -- (8.5,0);
 \draw[green] (8.5,0) -- (9,0);
 \draw[blue] (9,0) -- (11,0);
 \draw[blue] (1,-0.2) -- (1,0.2); 
 \draw[blue] (2,-0.2) -- (2,0.2); 
 \draw[blue] (4,-0.2) -- (4,0.2); 
 \draw[blue] (7,-0.2) -- (7,0.2); 
 \draw[blue] (7.5,-0.2) -- (7.5,0.2); 
 \draw[blue] (8,-0.2) -- (8,0.2); 
 \draw[blue] (8.5,-0.2) -- (8.5,0.2); 
 \draw[blue] (9,-0.2) -- (9,0.2); 
 \draw[blue] (11,-0.2) -- (11,0.2); 
 \draw[gray,->] (11,0) -- (12,0);
  \draw[gray] (0,-0.2) -- (0,0.0); 
 \node at (0,-0.4) {\small 0};
 \node at (13.5,0) {\small distance from $s$};
 \node at (-0.25,1) {\small $s$};
 \node[black] at (10.25,2.5) {\small $t$};
 \end{scope}
 
\begin{scope}[every node/.style={vertex}, line width=1.2]
 \node[black] (s) at (0,1) {};
 \node[red] (v1) at (2,2) {};
 \node[brown] (v2) at (1,1) {};
 \node[red] (v3) at (5,3) {};
 \node (v4) at (6,3.5) {};
 \node (v5) at (4,1.5) {};
 \node (v6) at (5,2) {};
 \node (v7) at (7,3) {};
 \node[brown] (v8) at (5.5,1.5) {};
 \node[orange] (v9) at (7.5,2.5) {};
 \node[red] (v10) at (8.5,3) {};
 \node[brown] (v11) at (8,2.5) {};
 \node[red] (v12) at (11,4) {};
 \node[brown] (v13) at (9,2) {};
 \node[black] (t) at (10,2.5) {};
\end{scope}
\draw[egreen] (s)--(v1);
\draw[eblue] (v1)--(v2);
\draw[egreen] (v2)--(v3);
\draw[eblue] (v3)--(v4)--(v5)--(v6)--(v7)--(v8);
\draw[egreen] (v8)--(v9)--(v10);
\draw[eblue] (v10)--(v11);
\draw[egreen] (v11)--(v12);
\draw[eblue] (v12)--(v13);
\draw[egreen] (v13)--(t);

\end{tikzpicture}
\end{center}
\caption{\label{fig:blue_intervals_and_sentinels} 
A path $P$ (at the top) with vertices $u^Z_1$ in red and orange and vertices $u^Z_2$ in brown and orange.
The bottom part of the picture shows the corresponding flow $f$ in the graph $B$ with vertices $s,t$ and
$(I,u^Z_i)$.
Both the red and the brown edges have flow value $\frac{1}{2}$.
\label{fig:construct_flow_in_B}
}
\end{figure}
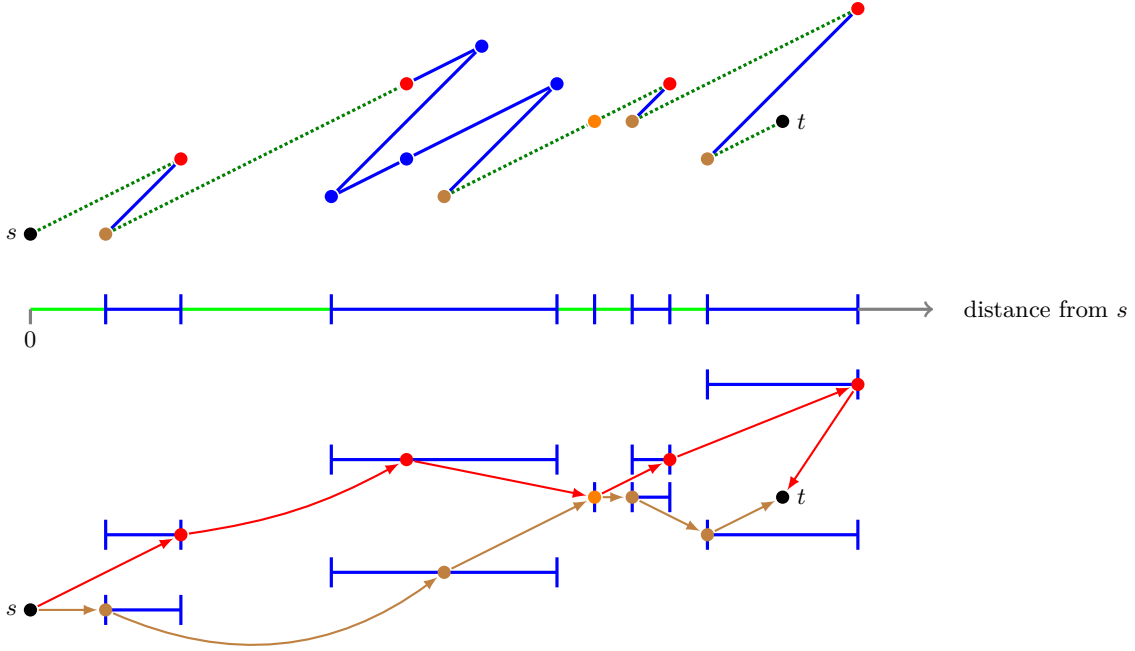

For each $P\in \Pscr$ we define two monotone walks $P_1$ and $P_2$.
The walk $P_i$ results from the $s$-$t$-walk $P$ by shortcutting such that it visits only the vertices $s$, $t$, and 
$u^Z_i$ for the connected components $Z$ of $(V,F_P)$.
From these two paths we obtain $s$-$t$-walks $P'_1$ and $P'_2$ in the graph $B$ by replacing every vertex $u^Z_i$ by $(I,u^Z_i)$, where $I$ is the blue interval corresponding to the component $Z$.
See Figure~\ref{fig:construct_flow_in_B}.
We set 
\[
    f := \sum_{P\in \Pscr} \lambda_P \cdot \frac{1}{2} \cdot \left( \chi^{E(P'_1)} +  \chi^{E(P'_2)}\right).
\]
Then we have $f(\delta^-((I,u))) = f(\delta^+((I,u))) =a_{u,(I,u)}$ for all $u\in V,\, I\in\Iscr_u$
 and because $x$ is a weak fractional solution of $\Jeuscr$ we also have $\sum_{u\in V} f(\delta^-(\Tg  )) = (1-\tau) \cdot b(\Tg  )$ for all $\Tg  \in\Tscr$.
We have $c(f) \le c(x)$ and
\[
\sum_{t\in \bar T} \sum_{u\in V,I\in\Iscr_u} c(s,u) \cdot f((I,u),t) \ \le \ 
     \sum_{P\in\Pscr} \lambda_P \cdot c(E(P)\cap E_1(\Jeuscr)) 
\ \le \ c(x|_{E_1(\Jeuscr)}).
\]

Finally, we define 
\[
 z := \frac{1}{2} \sum_{P\in \Pscr} \lambda_P \cdot \chi^{F_P}.
\]

\begin{lemma}
We have $z(\delta(S)) \ge \sum_{u\in V\setminus S,\, I\in\Iscr_u\cap\Iscr_v}a_{v,(I,u)}$ for all $v\in S\subseteq V$.
\end{lemma}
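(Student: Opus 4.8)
The plan is to verify the cut constraint $z(\delta(S)) \ge \sum_{u\in V\setminus S,\,I\in\Iscr_u\cap\Iscr_v} a_{v,(I,u)}$ for a fixed $v\in S\subseteq V$ by working separately with each walk $P\in\Pscr$ in the decomposition $x=\sum_P \lambda_P\chi^{E(P)}$, and then aggregating with the weights $\lambda_P$. Recall $z=\frac12\sum_{P\in\Pscr}\lambda_P\chi^{F_P}$ and, for each $P$, the assignment $a^P_{v,(I,u)}$ is supported on at most the two vertices $u^Z_1,u^Z_2$ (first and last visited by $P$) of the connected component $Z$ of $(V,F_P)$ containing $v$, each getting value $\frac12$ (or a single vertex getting value $1$). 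Summing over $P$, it suffices to show the per-walk inequality
\[
\chi^{F_P}(\delta(S)) \ \ge\ \sum_{u\in V\setminus S,\,I\in\Iscr_u\cap\Iscr_v} 2\,a^P_{v,(I,u)}
\]
for every $P$ that visits $v$; walks not visiting $v$ contribute $0$ to the right side and $\ge0$ to the left.

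First I would fix such a $P$ and let $Z$ be the component of $(V,F_P)$ with $v\in Z$; it corresponds to a blue interval $I$ of $P$, and $a^P_{v,(\cdot,\cdot)}$ is supported on $(I,u^Z_1)$ and $(I,u^Z_2)$. The right-hand side counts $2a^P$ over those of $u^Z_1,u^Z_2$ that lie outside $S$, so it equals $2$, $1$, or $0$ depending on how many of $\{u^Z_1,u^Z_2\}$ are outside $S$ (with the convention that if $u^Z_1=u^Z_2$ it contributes $2\cdot1$ if outside $S$). Meanwhile $(Z,F_P\cap E(Z))$ is a connected subgraph (indeed a subpath of $P$) on the vertex set $Z$, and $v\in Z\cap S$. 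If $u^Z_1\notin S$, there is a path inside $Z$ from $v$ to $u^Z_1$ using edges of $F_P$, so $F_P$ has at least one edge crossing $\delta(S)$; likewise for $u^Z_2$. The key point is that if both $u^Z_1\notin S$ and $u^Z_2\notin S$ (and they are distinct), then $F_P$ must have \emph{two} distinct edges crossing $\delta(S)$: the component $Z$ is a path (a subpath of $P$ restricted to one blue interval), $v$ lies somewhere on it with $u^Z_1$ at one end and $u^Z_2$ at the other, so the sub-path from $v$ toward $u^Z_1$ and the one from $v$ toward $u^Z_2$ are edge-disjoint and each crosses $\delta(S)$. This gives $\chi^{F_P}(\delta(S))\ge 2$, matching the right-hand side.

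The main obstacle — really the only subtlety — is establishing that $(Z,F_P\cap\binom Z2)$ is a \emph{path} with $u^Z_1$ and $u^Z_2$ as its two endpoints (and $v$ an interior or boundary vertex), so that the two crossing edges are genuinely distinct. This follows from the construction: $F_P$ consists of the edges of the walk $P$ both of whose endpoints lie in the same blue interval, so each component is a (shortcut) contiguous piece of $P$, hence a path, and since $P$ visits the vertices of $Z$ in the order it traverses this piece, the first-visited $u^Z_1$ and last-visited $u^Z_2$ are exactly its endpoints. Once this structural fact is in hand, the case analysis on $|\{u^Z_1,u^Z_2\}\setminus S|\in\{0,1,2\}$ is immediate, the per-walk inequality holds, and multiplying by $\lambda_P/1$ — i.e.\ by $\frac12\lambda_P$ on the left to match $z$ and by $\frac12\lambda_P$ on the right to match $a$ — and summing over $P$ yields the claim.
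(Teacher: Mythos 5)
Your proposal is correct and follows essentially the same route as the paper: a per-walk argument using that the component $Z$ of $(V,F_P)$ containing $v$ is a contiguous subpath of $P$ with endpoints $u^Z_1,u^Z_2$, so the $v$-$u^Z_1$ and $v$-$u^Z_2$ subpaths are edge-disjoint and each contributes a crossing edge of $\delta(S)$ when the corresponding endpoint lies outside $S$, followed by aggregation with weights $\tfrac{1}{2}\lambda_P$. The paper's proof is just a terser version of the same case analysis.
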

\begin{proof}
Let $v\in V$ and let $P\in \Pscr$ be a walk containing $v$.
Moreover,  let $Z$ be the vertex set of the connected component of $(V,F_P)$ that contains $v$
and let $I$ be the blue interval corresponding to $Z$.
If $F_P$ contains only $v$, then the inequality holds trivially because its right hand-side is always $0$ in this case.
Otherwise $u^Z_1\not=u^Z_2$ and
$a_{v,(I,u^Z_1)}= a_{v,(I,u^Z_2)} = \frac{1}{2}$ and $a_{v,(I,u)}=0$ for all other $(I,u)$.
The connected component $(Z, F[Z])$ of $(V,F)$ that contains $v$ is the subpath of $P$ from $u^Z_1$ to $u^Z_2$.
Therefore, the $v$-$u^Z_1$-path  and the $v$-$u^Z_2$-path in $(V,F)$ are edge-disjoint.
This implies  $z(\delta(S)) \ge \sum_{u\in V\setminus S,\, I\in\Iscr_u\cap\Iscr_v}a_{v,(I,u)}$ for all $S$.
\end{proof}

By Lemma~\ref{lem:bound_forest_length_lp}, we have
\begin{align*}
    c(z) =\  \frac{1}{2} \cdot \sum_{P\in \Pscr} \lambda_P \cdot c(F_P)  
\  \le\  \frac{1}{2} \cdot \sum_{P\in \Pscr} \lambda_P \cdot \frac{3}{2}\detour(E(P)\cap E_1(\Jeuscr)) \
         \ =\ \frac{3}{4}\detour(x|_{E_1(\Jeuscr)}).
\end{align*}

This completes the proof of Lemma~\ref{lem:cheap_lp_solution_exists}.

\subsection{Rounding LP solutions (Proof of Lemma~\ref{lem:rounding_lp_solution})}\label{sec:rounding_lp_solution}

Let $(a,f,z)\in \Pi$. 
First, we explain how we round $z$ to obtain a forest $(V,F)$.
For $\emptyset\not=S\subsetneq V$, let
\[
r(S) := \begin{cases}
1 & \text{ if } \sum_{u\in S, I\in\Iscr_u\cap\Iscr_v} a_{v,(I,u)} < \theta\text{ for all }v\in S\\
0 & \text{ otherwise.}
\end{cases}
\]  
\begin{lemma}[\cite{friggstad2017compact}]\label{lem:forest_rounding_GW}
We can compute a forest $(V,F)$ with $|F\cap \delta(S)|\ge r(S)$ for all $\emptyset \ne S\subsetneq V$
and 
\[c(F)\le \frac{2}{1-\theta} \cdot c(z)\] 
in polynomial time.
\end{lemma}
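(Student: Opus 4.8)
The plan is to derive this from the standard primal--dual $2$-approximation for constrained forest problems with an uncrossable requirement function, exactly as used in \cite{friggstad2017compact}. Consider the cut-covering linear program
\[
\min\Bigl\{\sum_{e=\{v,w\}\in\binom{V}{2}} c(v,w)\,x_e \ :\ x(\delta(S))\ge r(S)\ \text{ for all }\ \emptyset\ne S\subsetneq V,\ x\ge 0\Bigr\}.
\]
The first step is to observe that $\tfrac{1}{1-\theta}\cdot z$ is feasible for this LP. Indeed, suppose $r(S)=1$ for some $\emptyset\ne S\subsetneq V$ and fix any $v\in S$. By definition of $r$ we have $\sum_{u\in S,\,I\in\Iscr_u\cap\Iscr_v}a_{v,(I,u)}<\theta$, whereas $(a,f,z)\in\Pi$ forces $\sum_{u\in V,\,I\in\Iscr_u\cap\Iscr_v}a_{v,(I,u)}\ge 1$; subtracting gives $\sum_{u\in V\setminus S,\,I\in\Iscr_u\cap\Iscr_v}a_{v,(I,u)}>1-\theta$, and then the cut constraint in $\Pi$ (taken for this $v$ and $S$) yields $z(\delta(S))>1-\theta=(1-\theta)\cdot r(S)$. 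Hence $\tfrac{1}{1-\theta}z(\delta(S))\ge r(S)$ for every $S$, and the objective value of $\tfrac{1}{1-\theta}z$ equals $\tfrac{1}{1-\theta}c(z)$.

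The second step is to check that $r$ is an uncrossable requirement function. The crucial structural fact is that the family $\{S:r(S)=1\}$ is closed under taking nonempty subsets: if $r(S)=1$ and $\emptyset\ne S'\subseteq S$, then for every $v\in S'$ we have $\sum_{u\in S',\,I\in\Iscr_u\cap\Iscr_v}a_{v,(I,u)}\le\sum_{u\in S,\,I\in\Iscr_u\cap\Iscr_v}a_{v,(I,u)}<\theta$, so $r(S')=1$. Moreover $r(V)=0$ (since $\sum_{u\in V,\,I\in\Iscr_u\cap\Iscr_v}a_{v,(I,u)}\ge 1>\theta$ for every $v\in V$, because $\theta\le 1-\tau<1$), and we set $r(\emptyset)=0$. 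Downward-closedness now gives uncrossability directly: given $r(A)=r(B)=1$, if both $A\setminus B$ and $B\setminus A$ are nonempty then $r(A\setminus B)=r(B\setminus A)=1$; otherwise one of $A,B$ contains the other, so $\{A\cap B,A\cup B\}=\{A,B\}$ and $r(A\cap B)=r(A\cup B)=1$. Note that $r$ is not symmetric in general, so it is uncrossable but not proper. Equivalently, since $r$ is downward closed, a forest $(V,F)$ is feasible if and only if every connected component $C$ of $F$ satisfies $r(C)=0$, i.e.\ contains a vertex $v$ with $\sum_{u\in C,\,I\in\Iscr_u\cap\Iscr_v}a_{v,(I,u)}\ge\theta$.

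The third step is to run the Goemans--Williamson-type primal--dual algorithm for uncrossable requirement functions (as used in \cite{friggstad2017compact}): in polynomial time it produces a forest $(V,F)$ with $|F\cap\delta(S)|\ge r(S)$ for all $\emptyset\ne S\subsetneq V$ and cost at most twice the optimum of the cut-covering LP above; combining this with the feasible point $\tfrac{1}{1-\theta}z$ gives $c(F)\le\tfrac{2}{1-\theta}c(z)$, which is the claim. For the running time one only has to verify that the separation and moat-growth step is efficient, i.e.\ that among the connected components of the current edge set one can identify the violated ones; by the reformulation above this just amounts to testing, for each component $C$, whether some $v\in C$ has $\sum_{u\in C,\,I\in\Iscr_u\cap\Iscr_v}a_{v,(I,u)}\ge\theta$. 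I expect this last implementation and bookkeeping point to be the only real obstacle, since the LP-feasibility estimate and the uncrossability verification are both short and the remainder is the standard primal--dual machinery.
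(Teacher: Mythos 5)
Your proposal is correct and takes essentially the same route as the paper: the feasibility computation showing $\frac{1}{1-\theta}\cdot z$ satisfies the cut-covering LP is identical, and the paper then simply observes that $r$ is a downwards-monotone cut-requirement function (your downward-closedness of $\{S:r(S)=1\}$) and invokes the Goemans--Williamson primal-dual algorithm for that class directly, rather than passing through uncrossability as you do. Both variants of the primal-dual machinery yield a forest of cost at most twice the LP value, so the difference is only in which black box is cited.
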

\begin{proof}
Consider a nonempty set $S\subsetneq V$.
Then $S$ contains at least one vertex $v$ and by the LP constraints
\[ z(\delta(S))\ \ge\ \sum_{u\in V\setminus S,I\in \Iscr_u\cap\Iscr_v} a_{v,(I,u)}\ =\ \sum_{u\in V,I\in \Iscr_u\cap\Iscr_v} a_{v,(I,u)} - \sum_{u\in S,I\in \Iscr_u\cap\Iscr_v} a_{v,(I,u)}\ \ge\ 1 - \sum_{u\in S,I\in \Iscr_u\cap\Iscr_v} a_{v,(I,u)}.\]
This implies $z(\delta(S))\ge 1- \theta$ for all $S$ with $r(S)=1$.
In other words, $\frac{1}{1-\theta}z(\delta(S))\ge r(S)$ for all $S$.
Moreover, we observe that $r$ is a downwards-monotone cut-requirement function, i.e.\ we have
$r(X) \ge r(Y)$ for any $X,Y$ with $\emptyset\not=X \subseteq Y$.
Hence, an algorithm by Goemans and Williamson \cite{GoemansWilliamson} can be used to compute a forest $(V,F)$ as required.
\end{proof}

Let $\Zscr$ denote the set of (vertex sets of) connected components of $(V,F)$.
For every $Z\in \Zscr$ we have $r(Z)=0$ and hence there is a vertex $w\in Z$ with $\sum_{u\in Z, I\in\Iscr_u\cap\Iscr_w} a_{w,(I,u)} \ge \theta$.
We choose such a vertex $w$ for each connected component $Z$ and call it the \emph{witness node} of $Z$.
Then we define
\[
  \sigma(Z) := \{ (I,u) : u\in Z, I\in \Iscr_u\cap\Iscr_w \}.
\]
Note that $\sigma(Z)\subset V(B)$. 

\begin{lemma}[\cite{friggstad2017compact}]\label{lem:components_have_sufficient_flow}
For every $Z\in \Zscr$ we have $f(\delta^-(\sigma(Z))) =  f(\delta^+(\sigma(Z)))\ge \theta$.
\end{lemma}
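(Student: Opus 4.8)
The plan is to exploit the combinatorial structure of the auxiliary digraph $B$: the crucial observation is that $B$ contains no arc with both endpoints in $\sigma(Z)$. Indeed, every vertex of $\sigma(Z)$ has the form $(I,u)$ with $u\in Z$ and $I\in\Iscr_u\cap\Iscr_w$, where $w$ is the witness node of $Z$; in particular every such interval $I$ contains the point $c(s,w)$. Hence any two intervals occurring in $\sigma(Z)$ share the point $c(s,w)$ and are therefore not disjoint, whereas the only arcs of $B$ joining two vertices $(I,u)$ and $(J,v)$ are present precisely when $I$ and $J$ are \emph{disjoint}. The remaining arc types of $B$ (those incident to $s$, or to a target $t\in\bar T\cup\{s\}$) have an endpoint outside $\sigma(Z)$, since $s$ and the targets are distinct from all interval-vertices $(I,u)$. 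So $B$ restricted to $\sigma(Z)$ has no arcs.

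Given this, flow conservation finishes the equality. Because no arc of $B$ lies entirely inside $\sigma(Z)$, every arc contributing to $f(\delta^-((I,u)))$ for some $(I,u)\in\sigma(Z)$ enters $\sigma(Z)$ from outside, and likewise for $\delta^+$, so
\[
f(\delta^-(\sigma(Z))) \ = \ \sum_{(I,u)\in\sigma(Z)} f(\delta^-((I,u))), \qquad f(\delta^+(\sigma(Z))) \ = \ \sum_{(I,u)\in\sigma(Z)} f(\delta^+((I,u))).
\]
Using the defining equalities $f(\delta^-((I,u)))=f(\delta^+((I,u)))=a_{u,(I,u)}$ of the polyhedron $\Pi$, both right-hand sides equal $\sum_{(I,u)\in\sigma(Z)} a_{u,(I,u)}$, which proves $f(\delta^-(\sigma(Z)))=f(\delta^+(\sigma(Z)))$.

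To bound this common value below by $\theta$, I would apply the covering constraints $a_{w,(I,u)}\le a_{u,(I,u)}$ of $\Pi$ termwise, and invoke the defining property of the witness node $w$ of $Z$, namely $\sum_{u\in Z,\,I\in\Iscr_u\cap\Iscr_w} a_{w,(I,u)}\ge\theta$. Since $\sigma(Z)$ is exactly the index set $\{(I,u):u\in Z,\ I\in\Iscr_u\cap\Iscr_w\}$, this gives
\[
\sum_{(I,u)\in\sigma(Z)} a_{u,(I,u)} \ \ge\ \sum_{(I,u)\in\sigma(Z)} a_{w,(I,u)} \ \ge\ \theta,
\]
completing the argument. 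The only non-routine point is the observation that $\sigma(Z)$ induces no arc of $B$; once that is in place, everything else is bookkeeping with flow conservation and the LP constraints, so I expect no further obstacle.
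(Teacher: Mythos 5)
Your proposal is correct and follows essentially the same route as the paper: the key observation that $B$ has no arc inside $\sigma(Z)$ (since all intervals there contain $c(s,w)$ and arcs require disjoint intervals), followed by the flow-conservation constraints $f(\delta^-((I,u)))=f(\delta^+((I,u)))=a_{u,(I,u)}$, the constraints $a_{w,(I,u)}\le a_{u,(I,u)}$, and the defining property of the witness node. No gaps; it is just a more detailed write-up of the paper's argument.
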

\begin{proof}
Let $w$ be the witness node of $Z$.
Because $w\in I$ for every $(I,u)\in \sigma(Z)$ and because the graph $B$ contains edges 
$((I,u),(I',u'))$ only if $I$ and $I'$ are disjoint, $B$ contains no edge with both endpoint in $\sigma(Z)$.
Hence, for the witness node $w$ of $Z$ the LP constraints imply
\[
    f(\delta^-(\sigma(Z))) =  f(\delta^+(\sigma(Z))) = \sum_{(I,u)\in \sigma(Z)} f(\delta^-((I,u))) = \sum _{(I,u)\in \sigma(Z)} a_{u,(I,u)} \ge\sum_{u\in Z, I\in\Iscr_u\cap\Iscr_w} a_{w,(I,u)} \ge \theta. 
\]
\end{proof}

Next, we define an order $\prec$ on $\Zscr$.
For $Z,Z'\in \Zscr$ with witness nodes $w$ and $w'$ respectively, we say that $Z \prec Z'$ if and only if
$c(s,w) < c(s,w')$.

We define a directed graph $D$ with vertex set $\{s\}\cup \Zscr \cup \bar T$ and the following types of arcs:
\begin{itemize}\itemsep=0pt
\item an arc $(s,Z)$, for $Z\in \Zscr$
\item an arc $(Z,Z')$ for $Z,Z'\in \Zscr$ with $Z \prec Z'$,
\item an arc $(Z,t)$ for $Z\in \Zscr$ and $t\in \bar T\cup \{s\}$, and
\item an arc $(s,t)$ for all $t\in \bar T$.
\end{itemize}
We define $c(s,Z) := \min_{v\in Z} c(s,v)$, $c(Z,Z') := \min_{v\in Z,v'\in Z'} c(v,v')$, and
$c(Z,t) := \min_{v\in Z} c(v,t)$.

\begin{lemma}\label{lem:solution_from_flow_in_component_graph}
Given the forest $(V,F)$ and a flow $g$ in $D$ with
\begin{itemize}\itemsep0pt
\item $g(\delta^-_D(Z)) = g(\delta^+_D(Z)) = 1$ for every $Z\in \Zscr$, and
\item $g(\delta^-_D(\Tg  ))=b(\Tg  )$ for every target group $\Tg  \in \Tscr$,
\end{itemize}
we can compute a solution to the instance $\Jeuscr$ of \textsc{Vehicle Routing with Target Groups}
with cost at most $2 \cdot c(F) + c(g)$ in polynomial time.
\end{lemma}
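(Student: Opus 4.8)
The plan is to convert the flow $g$ in the ``component graph'' $D$ into a walk solution of $\Jeuscr$ by (i) routing inside each component via the forest, and (ii) routing between components via the arcs of $D$, then invoking Proposition~\ref{prop:walk_solution_suffices}. First I would decompose $g$ into integral $s$-$t$-walks in $D$: since $g$ has value $1$ entering and leaving every $Z\in\Zscr$ and value $b(\Tg)$ entering each target group $\Tg$, and since $D-s$ is acyclic (the arcs $(Z,Z')$ only go in increasing order of $\prec$), a standard flow decomposition yields $\sum_{\Tg}b(\Tg)$ walks, each starting at $s$, passing through a sequence of distinct components $Z_1\prec Z_2\prec\cdots\prec Z_k$, and ending in some $t\in\bar T\cup\{s\}$; moreover every $Z\in\Zscr$ lies on exactly one such walk because of the degree-$1$ constraint at $Z$. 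The cost of these $D$-walks is exactly $c(g)$ (the arc costs in $D$ are defined as the corresponding minimum distances in the original metric).

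Next I would ``blow up'' each $D$-walk into a walk in $G(\Jeuscr)$. An arc $(Z,Z')$ of cost $c(Z,Z')=\min_{v\in Z,v'\in Z'}c(v,v')$ is realized by an actual edge $\{v,v'\}$ with $v\in Z$, $v'\in Z'$; similarly $(s,Z)$, $(Z,t)$, $(s,t)$. To make the result connect all of $V$, when a walk passes through $Z$ entering at vertex $v'\in Z$ and leaving at vertex $v''\in Z$, I traverse an $v'$-$v''$-walk inside the spanning tree of $Z$ that visits every vertex of $Z$ — concretely, double the edges of the component tree $F[Z]$ and take an Eulerian walk from $v'$ to $v''$, which costs at most $2c(F[Z])$ and visits all of $Z$. (If $Z$ is the first component on its walk, $v'$ is the entry point chosen by the $(s,Z)$-edge; if it is the last, $v''$ is the exit point chosen by the $(Z,t)$-edge; the rare case $k=0$, i.e. the walk is a direct $(s,t)$-arc, needs no component traversal.) Summing over all components, the edges used inside components contribute at most $\sum_{Z\in\Zscr}2c(F[Z]) = 2c(F)$, since the $F[Z]$ partition the edge set of $F$ and each component is visited by exactly one walk. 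The inter-component and terminal edges contribute $\sum$ of the corresponding $c(\cdot,\cdot)$ values, which is at most $c(g)$ by the triangle inequality (each realizing edge has cost equal to the defining minimum, hence at most the $D$-arc cost). Orienting everything away from $s$, the multiset $H$ of all these edges is a walk solution: flow conservation at each $v\in V$ holds because each vertex is entered and left the same number of times (it is an interior vertex of component walks plus, for entry/exit vertices, the incident inter-component edges balance out — this is exactly the Eulerian-walk bookkeeping), $|\delta_H^-(\Tg)| = b(\Tg)$ by the flow constraint on $g$, and $H$ connects all of $\{s\}\cup V$ because every component is reached from $s$ and every vertex of a component is on the Eulerian traversal of that component's doubled tree. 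Applying Proposition~\ref{prop:walk_solution_suffices} to $H$ yields a feasible solution $\Pscr$ with $c(\Pscr)\le c(H)\le 2c(F)+c(g)$, and every step is clearly polynomial-time.

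The main obstacle I expect is the bookkeeping for flow conservation at vertices of $V$ in the final multiset $H$: one must check carefully that doubling $F[Z]$ and splicing in the inter-component edges at the (possibly coinciding) entry and exit vertices of each component leaves the in-degree equal to the out-degree at every vertex, including the entry/exit vertices themselves, where an incoming inter-component arc plus one ``extra'' pass through the doubled tree must cancel. A clean way to handle this is to first build, for each component $Z$ with chosen entry vertex $v'$ and exit vertex $v''$, a single closed walk: take the doubled tree $F[Z]$, which is Eulerian and connected, add the (oriented) incoming edge into $v'$ and outgoing edge out of $v''$ coming from the $D$-walk — then the whole object obtained by concatenating, in $\prec$-order along each $D$-walk, the components and connecting edges is a union of closed walks through $s$, i.e. Eulerian on its support; this makes the degree condition immediate and avoids any case analysis on whether $v'=v''$. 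A minor point to state explicitly is that ``doubling the edges of $F[Z]$'' is legitimate even though walk solutions are multisets, so multiplicities are allowed.
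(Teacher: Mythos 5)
Your overall strategy (decompose the flow into $s$-$t$-walks in $D$, realize the $D$-arcs by actual edges, traverse each component via its tree, apply Proposition~\ref{prop:walk_solution_suffices}) is the same as the paper's, but the step you yourself single out as the main obstacle is handled incorrectly, and this is a genuine gap. If you double \emph{all} edges of $F[Z]$, every vertex of the resulting multigraph has even degree, so every Eulerian walk in it is closed; there is no Eulerian walk from the entry vertex $v'$ to a different exit vertex $v''$. Your ``clean way'' does not repair this: adding the one incoming arc at $v'$ and the one outgoing arc at $v''$ to the (balanced) doubled tree leaves $|\delta^-_H(v')|=|\delta^+_H(v')|+1$ and $|\delta^+_H(v'')|=|\delta^-_H(v'')|+1$ whenever $v'\neq v''$, so the multiset violates the condition $|\delta^+_H(v)|=|\delta^-_H(v)|$ of Definition~\ref{def:walk_solution} and is not a union of closed walks through $s$ --- the connecting arcs arrive at $v'$ but depart from $v''$, and the doubled tree offers no open $v'$-$v''$ trail to bridge them. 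Patching this by travelling from $v'$ to $v''$ along the tree path on top of the doubled tree would cost up to $3c(F[Z])$ per component and break the claimed bound. The correct fix, which is what the paper does, is to take the $v'$-$v''$ path $P_Z$ in $F[Z]$ \emph{once}, oriented from entry to exit, and add both orientations only of the edges of $F$ lying on no such path; then in-degree equals out-degree at every vertex of $V$ (entry and exit included), every vertex of $Z$ remains connected to the walk, and the forest edges contribute at most $2c(F)$ (indeed $2c(F)-\sum_Z c(P_Z)$).

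A second, smaller omission: the lemma is applied to a \emph{fractional} flow $g$ (the $g$ built in Section~\ref{sec:rounding_lp_solution} is $\tfrac{1}{\theta}g'$ suitably modified), so you cannot directly ``decompose $g$ into integral $s$-$t$-walks'' with every $Z$ on exactly one walk. You first need the paper's opening observation: by integrality of network flows there is an \emph{integral} flow $\bar g$ satisfying the same degree constraints with $c(\bar g)\le c(g)$, and it is $\bar g$ that you decompose (using that $D-s$ is acyclic). With these two corrections your argument matches the paper's proof.
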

\begin{proof}
It is well-known that there is an optimum flow $\bar g$ with the same properties that is integral.
Back in $G(\Jeuscr)$, this corresponds to a multi-set $H\subseteq E(\Jeuscr)$
with $|\delta^-_H(Z)|=|\delta^+_H(Z)|=1$ for every $Z\in\Zscr$ and $|\delta^-_H(\Tg  )|=b(\Tg  )$ for every target group $\Tg  \in\Tscr$.
Let $Z\in\Zscr$.
If $\delta^-_H(Z)=\{(v,z)\}$ and $\delta^+_H(Z)=\{(z',v')\}$, we let $P_Z$ be the path in $(V,F)$ that connects $z$ and $z'$,
oriented from $z$ to $z'$.
For each $e=\{v,w\}\in F$ such that $(v,w)$ belongs to $P_Z$ for its $Z\in\Zscr$, we add $(v,w)$ to $H$.
For each $e=\{v,w\}\in F$ such that neither $(v,w)$ nor $(w,v)$ belongs to any $P_Z$, we add 
both orientations $(v,w)$ and $(w,v)$ to $H$.
Then $H$ is a walk solution of cost at most $c(\bar g)+2c(F) \le c(g)+2c(F)$.
Now apply Proposition~\ref{prop:walk_solution_suffices}.
\end{proof}

We will construct a flow $g$ as in Lemma~\ref{lem:solution_from_flow_in_component_graph} with
\begin{equation}\label{eq:cost_of_flow_g}
      c(g) \ \le \ \frac{1}{\theta}c(f) + \frac{1-\tau - \theta}{\theta \cdot (1-\tau)} \cdot \sum_{t\in \bar T} \sum_{u\in V, I\in \Iscr_u}  c(s,u) \cdot f((I,u),t).
\end{equation}
Together with Lemma~\ref{lem:forest_rounding_GW}, this will complete the proof of Lemma~\ref{lem:rounding_lp_solution}.

First, we shortcut the flow $f$ as follows.
Because the graph $B-s$ is acyclic we can decompose $f$ into $s$-$t$-walks with $t\in \bar T \cup \{s\}$, i.e.\
we write $f = \sum_{H\in \Hscr} \lambda_H \cdot \chi^H$ where $\Hscr$ is a set of  $s$-$t$-walks 
with $t\in \bar T \cup \{s\}$ and $\lambda_H > 0$ for all $H\in \Hscr$.
Note that $\sum_{H\in \Hscr: V(H)\cap\sigma(Z)\not=\emptyset} \lambda_H =f(\delta^-(\sigma(Z)))\ge \theta$ for all $Z\in\Zscr$
by Lemma~\ref{lem:components_have_sufficient_flow}.

Then we shortcut each walk $H\in \Hscr$ to a walk $H'$ 
by skipping vertices that do not belong to $\{s\} \cup \bar T \cup \bigcup_{Z\in \Zscr} \sigma(Z)$.
If $\sum_{H\in \Hscr: V(H)\cap\sigma(Z)\not=\emptyset} \lambda_H > \theta$ for some $Z\in\Zscr$,
we will also skip $\sigma(Z)$ in some of the walks so that
$\sum_{H\in \Hscr: V(H')\cap\sigma(Z)\not=\emptyset} \lambda_H = \theta$;
this may require to replace a walk by two copies.
Shortcutting (skipping vertices) is possible in the graph $B$, although $B$ is not a complete digraph, because if $B$ contains two edges
$(b_1,b_2)$ and $(b_2,b_3)$ for $b_2\not=s$, it also contains the edge $(b_1,b_3)$. 
We define a flow $f'$ in $B$ to be $f' = \sum_{H\in \Hscr} \lambda_H \cdot \chi^{H'}$.
Because of the triangle inequality, we have $c(f') \le c(f)$. 

Next we contract every set $\sigma(Z)$ for all $Z\in \Zscr$.
This yields a flow $g'$ in the graph $D$ by identifying the vertex resulting from the contraction of $\sigma(Z)$ with the vertex $Z\in V(D)$.
Here we use that for an arc $((I,u),(J,v))$ in $B$ with $u\in Z_u\in \Zscr$ and $v\in Z_v\in \Zscr$ we have
$c(s,u) < c(s,v)$  and $I\cap J = \emptyset$, implying that $d_1 < d_2$ for every $d_1\in I$ and $d_2\in J$.
In particular, $c(s,w_u) < c(s,w_v)$ where $w_u$ and $w_v$ are the witness nodes of $Z_u$ and $Z_v$ respectively.
We have $c(g') \le c(f') \le c(f)$.

Finally, to define $g$, we start with the flow $\frac{1}{\theta}g'$ in $D$.
The transformation from $f$ to $f'$ ensured that
$\frac{1}{\theta}g'(\delta^-(Z)) = \frac{1}{\theta}g'(\delta^+(Z)) = 1$ for every $Z\in \Zscr$.
However, if $\theta<1-\tau$, there is too much flow arriving in the target groups.
More precisely, $\frac{1}{\theta}g'(\delta^-(\Tg  ))= \frac{1-\tau}{\theta}\cdot b(\Tg  )$ for every target group $\Tg  \in \Tscr$, but we need exactly $b(\Tg)$.
For every edge $(Z,t)$ with $t\in \bar T$ and $Z\in \Zscr$ we decrease the flow on
$(Z,t)$ from $\frac{1}{\theta}g'(Z,t)$ to $\frac{1}{1-\tau}g'(Z,t)$ and increase the flow on $(Z,s)$ by 
$\frac{1}{\theta}g'(Z,t) - \frac{1}{1-\tau}g'(Z,t) = \frac{1-\tau - \theta}{\theta \cdot (1-\tau)}g'(Z,t)$.
For every edge $(s,t)$ with $t\in \bar T$ we decrease the flow on $(s,t)$ from $\frac{1}{\theta}g'(Z,t)$ to $\frac{1}{1-\tau}g'(Z,t)$.
The resulting flow is the desired flow $g$.

The cost of $g$ is 
\begin{align*}
  c(g) \ \le&\ \frac{1}{\theta} c(g') +  \frac{1-\tau - \theta}{\theta \cdot (1-\tau)} \cdot \sum_{Z \in \Zscr} \sum_{t\in \bar T} c(s,Z) \cdot g'(Z,t) \\
        \le&\  \frac{1}{\theta}c(f) +   \frac{1-\tau - \theta}{\theta \cdot (1-\tau)} \cdot  \sum_{u\in V,I\in\Iscr_u} \sum_{t\in \bar T} c(s,u) \cdot f'((I,u),t).
\end{align*}
Because the graph $B$ visits pairs $(I,u)$ in an order of increasing $c(s,u)$, we have $c(s,u) < c(s,v)$ whenever
we shortcut $((I,u),(J,v))$ and $((J,v),t)$ to an edge $((I,u),t)$ in the construction of $f'$ from $f$.
Hence,
\[
  \sum_{u\in V, I\in \Iscr_u} \sum_{t\in \bar T} c(s,u) \cdot f'((I,u),t) \ \le\   \sum_{u\in V, I\in \Iscr_u} \sum_{t\in \bar T} c(s,u) \cdot f((I,u),t),
\]
implying \eqref{eq:cost_of_flow_g}.
This completes the proof of Lemma~\ref{lem:rounding_lp_solution}.

\subsection{Improving the approximation ratio for regret-bounded vehicle routing}\label{sect:improve_FS}

In this subsection we remark that our construction of a fractional LP solution (Section~\ref{sec:existence_cheap_lp_solution}) also leads
to an improved approximation ratio for regret-bounded vehicle routing, from 15 \cite{friggstad2017compact} to 10. 
We will not need this for our main result, but it might be worth noting.
Recall that an instance of (additive) regret-bounded vehicle routing 
consists of a metric space $(\{s\}\cup V,c)$ and a regret bound $R$;
the task is to compute a minimum number of paths, each beginning at $s$ and having total detour at most $R$, such that every element of $V$ belongs to one path.

\begin{theorem}
There is a $10$-approximation algorithm for regret-bounded vehicle routing.
\end{theorem}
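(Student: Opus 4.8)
The plan is to run the LP-based algorithm of Friggstad and Swamy~\cite{friggstad2017compact} for regret-bounded vehicle routing essentially unchanged, but to analyze it with the fractional LP solution constructed in Section~\ref{sec:existence_cheap_lp_solution} in place of the one used in~\cite{friggstad2017compact}. The factor-$2$ improvement in the forest cost recorded in Lemma~\ref{lem:cheap_lp_solution_exists} --- bounding $c(z)$ by $\sfrac{3}{4}\cdot\detour(\cdot)$ rather than by $\sfrac{3}{2}\cdot\detour(\cdot)$ --- is exactly what turns the ratio $15$ into $10$.

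First I would write down the LP relaxation for regret-bounded vehicle routing obtained from the polyhedron $\Pi$ of Section~\ref{sect:FS_LP} by specializing to the case in which every customer may serve as a path endpoint (so there are no separate targets and no target-group constraints), dropping the factor $1-\tau$ (there is no analogue of $\tau$ here, so it becomes $1$), and taking as objective the number of paths, i.e.\ the total flow leaving $s$. As in~\cite{friggstad2017compact}, the $a$-variables model the assignment of customers to blue intervals and their sentinels, the $z$-variables model the forest $F=\bigcup_{P}F_P$ connecting every customer to a sentinel, and the $f$-variables model a monotone flow on the acyclic auxiliary digraph $B$ of Section~\ref{sect:FS_LP} through the sentinels. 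The key point is that an optimal regret-bounded solution with $k$ paths induces a feasible LP point of objective value $k$: apply the blue/green colouring of Section~\ref{sec:lpbased} to each of the $k$ optimal paths, then the construction of Section~\ref{sec:existence_cheap_lp_solution} verbatim --- two sentinels $u^Z_1,u^Z_2$ per component with weight $\sfrac{1}{2}$ each, and $z=\sfrac{1}{2}\sum_P\chi^{F_P}$. Since each optimal path $P$ has detour at most $R$, Lemma~\ref{lem:bound_forest_length_lp} gives $c(z)\le\sfrac{3}{4}\sum_P\detour(P)\le\sfrac{3}{4}\cdot kR$, while $c(f)$ and $\sum_{u,I}c(s,u)\,f((I,u),t)$ are bounded by the total length and the total detour of the optimal paths in the usual way. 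This is the only place where our estimate differs from~\cite{friggstad2017compact}, whose construction achieves only $c(z)\le\sfrac{3}{2}\cdot kR$.

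Then I would round an optimal LP solution by the Friggstad--Swamy procedure behind Lemma~\ref{lem:rounding_lp_solution}: round $z$ to a forest $(V,F)$ via the Goemans--Williamson primal--dual algorithm applied to the downwards-monotone cut requirement (Lemma~\ref{lem:forest_rounding_GW}), pick a witness node per component, shortcut and scale the monotone flow $f$ so that it routes exactly one unit through each component of $F$ (Lemmas~\ref{lem:components_have_sufficient_flow} and~\ref{lem:solution_from_flow_in_component_graph}), double $F$, and read off a walk solution, which Proposition~\ref{prop:walk_solution_suffices} turns into $s$-$t$-paths; finally every path is cut into pieces of detour at most $R$, costing one extra path per $R$ units of detour spent. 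Summing the resulting contributions to the number of output paths --- the doubled forest $2c(F)$, the detour of the component-graph flow, and the number of components, each divided by $R$ and weighted by the Goemans--Williamson scaling parameter $\theta$ of Lemma~\ref{lem:forest_rounding_GW} --- and re-optimizing $\theta$ yields the bound $10k$.

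The main obstacle is the bookkeeping of this last step. Unlike in Theorem~\ref{thm:mainVRPwTargetGroups2}, where we bound the \emph{cost} of the returned solution, here we must bound the \emph{number} of returned paths, so each quantity (forest, flow, component-graph detour) has to be converted into a number of paths via the regret budget $R$; one must re-run the Friggstad--Swamy accounting carefully, substitute the improved forest bound in the single estimate where it occurs, and check that no other estimate is weakened, in order to see that the total drops from $15$ to exactly $10$. Setting up the specialized LP and checking that an optimal regret-bounded solution maps to a feasible LP point are routine adaptations of Sections~\ref{sect:FS_LP}--\ref{sec:existence_cheap_lp_solution}.
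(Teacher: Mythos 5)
Your overall plan---reuse the Friggstad--Swamy machinery but feed it the fractional solution of Section~\ref{sec:existence_cheap_lp_solution} with the improved bound $c(z)\le\sfrac{3}{4}\cdot\detour(\cdot)$, round, and split paths at the regret budget---is the same idea as the paper's proof. But your LP setup has a genuine gap: if the objective is only the number of paths (the flow leaving $s$) and the polyhedron contains no constraint involving $R$, then an optimal LP point, while having value at most $k=\OPT$, may have $c(z)$ and the regret of $f$ arbitrarily large compared with $k\cdot R$. LP optimality then gives no handle on exactly the two quantities that the rounding bound and the final splitting step must charge against $k\cdot R$, so the ``re-run the accounting'' step you defer cannot be carried out as stated. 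You must either add budget constraints of the form $\detour(f)\le R\cdot(\text{flow out of }s)$ and $c(z)\le\sfrac{3}{4}\,R\cdot(\text{flow out of }s)$, or do what the paper does: keep the target-group framework with $\bar T=V$, a single group $\Tscr=\{\bar T\}$, $\tau=\theta=\sfrac{1}{2}$, guess $b(\bar T)=2\cdot\OPT$ by trying all even values, and show via the construction of Section~\ref{sec:existence_cheap_lp_solution} that the polyhedron $\Pi$ with the two added budget constraints $\detour(f)\le\theta\, b(\bar T)\,R$ and $c(z)\le\sfrac{3}{4}\,\theta\, b(\bar T)\,R$ is nonempty, after which one merely finds a feasible point. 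A related slip: you bound $c(f)$, i.e.\ the length of the flow, but for regret-bounded routing the flow must be costed in the detour metric (as the paper does explicitly), since only a bound on the total \emph{regret} of the rounded paths can be converted into extra paths via the budget $R$.

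Once this is repaired, the bookkeeping is short and needs no optimization over $\theta$: rounding a feasible point as in Lemma~\ref{lem:rounding_lp_solution} (with $g=\frac{1}{\theta}g'$, since $\theta=1-\tau$) yields $b(\bar T)=2\cdot\OPT$ paths of total regret at most
\[
\frac{4}{1-\theta}\,c(z)+\frac{1}{\theta}\,\detour(f)\ \le\ \Bigl(\frac{3}{1-\theta}+\frac{1}{\theta}\Bigr)\,R\cdot\OPT\ =\ 8\,R\cdot\OPT ,
\]
and splitting each path at the regret budget (Lemma~2.2 of \cite{friggstad}) adds at most $8\cdot\OPT$ further paths, giving $10\cdot\OPT$ in total; the choice $\theta=\sfrac{1}{2}$ (rather than a re-optimized $\theta$, which would nominally give $5+2\sqrt{6}$) is what keeps the guessed value $b(\bar T)$ integral and yields the stated ratio.
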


\begin{proof}
For an instance $(V,s,c,R)$
let $\bar T=V$, $\Tscr=\{\bar T\}$, $\tau=\theta=\frac{1}{2}$, and let $b(t)$ be twice the number of paths that end in $t$ in an optimum solution.
We do not know $b(t)$ for any $t\in \bar T$, but we can guess $b(\bar T)=2\cdot\OPT(V,s,c,R)$ by trying all possible values $2,4,\ldots,2|V|$.
Then let $\Pi$ be the polyhedron \eqref{eq:FS_LP} for this $\bar T$, $\Tscr$, $\tau$, and $b(\bar T)$.
We set $x$ to be the incidence vector of an optimum solution to $(V,s,c,R)$ and apply the construction in Section~\ref{sec:existence_cheap_lp_solution},
but with the cost of $f$ induced by $\detour$ instead of $c$ (like in the work by Friggstad and Swamy \cite{friggstad2017compact}).
We get a vector $(a,f,z)\in\Pi$ such that $\detour(f) \le \theta \cdot b(\bar T) \cdot R$ and
$c(z)\le \frac{3}{4} \theta \cdot b(\bar T) \cdot R$.
So after adding these constraints the polyhedron $\Pi'$ is still nonempty, 
and we can find a vector $(a,f,z)\in\Pi'$ in polynomial time.
By the construction from the proof of Lemma~\ref{lem:rounding_lp_solution} (where the last part simplifies to setting $g=\frac{1}{\theta}g'$ because $\theta=1-\tau$), 
we get a solution $\Pscr$ with $b(\bar T)$ paths and
\begin{align*}
\detour(\Pscr) &\ \le \ \frac{4}{1-\theta} \cdot c(z) +  \frac{1}{\theta}  \cdot \detour(f) 
\ \le \ \left( \frac{3}{1-\theta} + \frac{1}{\theta} \right) \cdot R \cdot \OPT(V,s,c,R) 
\ = \ 8\cdot R \cdot \OPT(V,s,c,R).
\end{align*}
After splitting each path whenever it would exceed the detour bound $R$ 
(cf.\ Lemma~2.2 in \cite{friggstad}), we get a set of paths that all have detour at most $R$,
and the number of these paths is at most  
$b(\bar T)+ 8\cdot\OPT(V,s,c,R) = 10 \cdot \OPT(V,s,c,R)$.
\end{proof}

By the simple reduction from \cite{bock},
this immediately implies an 11-approximation algorithm for the school bus problem.

\section{Final result\label{sect:final_result}}

We now prove Theorems~\ref{thm:main_CVR} and \ref{thm:main_UD_and_S} simultaneously.
Let $\alpha>1$ such that there exists an $\alpha$-approximation algorithm for the traveling salesman problem.
Choose $\epsilon>0$ such that $2\epsilon+f(\epsilon) \le \alpha-1$,   
where $f$ is the function from Theorem~\ref{thm:CVR_3plus}.

Compute a traveling salesman tour by the given $\alpha$-approximation algorithm
and apply the tour partitioning of Theorem~\ref{thm:tour_splitting} to obtain a feasible solution.
If the instance is not difficult, this solution does the job: its cost is at most $(\alpha+2 \cdot (1-\epsilon))\OPT$
or $(\alpha+1-\epsilon)\OPT$, respectively. 

If the instance is difficult, we compute a solution as in Theorem~\ref{thm:CVR_3plus},  
which has cost at most $(3+f(\epsilon))\OPT \le (\alpha+ 2 \cdot (1 - \epsilon))\OPT$.

We now set constants that yield the final claims in Theorems~\ref{thm:main_CVR} and \ref{thm:main_UD_and_S} (for $\alpha=\frac{3}{2}$).
Applying Theorem~\ref{thm:mainVRPwTargetGroups2} for $\theta=1-\tau$ yields
$c(\Pscr) < \frac{1}{1-\tau} c(x) + \frac{3}{\tau} \detour(x |_{E_1(\Jeuscr)})$.
If the instance is difficult, plugging in the bounds from Lemma~\ref{lem:good_sol_exists} yields
$c(\Pscr) < ( \frac{1+\zeta}{1-\tau} + \frac{3\epsilon}{\tau} )\OPT$.
By Theorem~\ref{thm:good_tour_difficult_instances} we obtain a traveling salesman tour of cost 
at most
$$ \left( \frac{1+ \zeta}{1-\tau} + \frac{3\epsilon}{\tau} + \frac{3\rho}{(1-\rho)(1-\tau)} \right)\OPT. $$
Let $\tau=0.071$, $\rho=0.027$, and $\epsilon=\frac{1}{3000}$.
Using the definition of $\zeta$ in \eqref{eq:def_zeta}, our tour has length less than $\frac{3}{2}-\frac{2}{3000}$.
Together with Theorem~\ref{thm:tour_splitting}, this implies the final claims of Theorems~\ref{thm:main_CVR} and \ref{thm:main_UD_and_S}.

Using the combinatorial algorithm from Section~\ref{sec:combinatorial},
we get 
$c(\Pscr) < \frac{1}{1-\tau} c(x) + \frac{8}{\gamma-2} c(x|_{E_1(\Jeuscr)})
+(4+\frac{8}{\gamma-2}+2\gamma-\frac{\tau}{\tau-1})\detour(x |_{E_1(\Jeuscr)})$.
If the instance is difficult, plugging in the bounds from Lemma~\ref{lem:good_sol_exists} yields
$c(\Pscr) < ( \frac{1+\zeta}{1-\tau} + \frac{8}{\gamma-2} + \epsilon(4+\frac{8}{\gamma-2}+2\gamma-\frac{\tau}{\tau-1}) ) \OPT$.
Setting $\gamma=148$, $\tau=0.054$, $\rho=0.022$, and $\epsilon=\frac{1}{6000}$
yields a tour of length less than $\frac{3}{2}-\frac{2}{6000}$ in $O(n^3)$ time.
Combining this with Theorem~\ref{thm:tour_splitting} and the Christofides--Serdjukov algorithm,
we get an approximation ratio of $\frac{3}{2}-\frac{1}{3000}$ for \textsc{Capacitated Vehicle Routing}
and $\frac{3}{2}-\frac{1}{6000}$ for the unit-demand and splittable variants in $O(n^3)$ time.

\section{Comments on integrality ratios}

The classical tour partitioning algorithm (Section~\ref{sect:classical_algo}) yields also an upper bound on the integrality ratio of the following 
well-known linear programming relaxation for \textsc{Capacitated Vehicle Routing}:
\begin{equation}\label{eq:cvrp_lp}
\begin{aligned}
&\min \sum\limits_{v,w \in V \cup \{s\}} x_{\{v,w\} } \cdot c(v,w)  \\ &\text{ s.t. } &  x(\delta(A)) &\ge 2 \cdot  d(A)  & (A \subseteq V) \\
&& x(\delta(A)) &\ge 2 & (\emptyset\not= A \subseteq V) \\
&& x(\delta(v)) &= 2 & (v \in V) \\
&& x &\ge 0
\end{aligned}
\end{equation}
Let $\text{LP}$ denote the value of this linear program (for a given instance), 
and let $\beta$ denote the integrality ratio of the subtour relaxation of the traveling 
salesman problem, which results from \eqref{eq:cvrp_lp} by omitting the constraints
$x(\delta(A)) \ge 2 \cdot  d(A)$.
It is easy to see that $\sum_{v\in V}2d(v)c(s,v)\le \text{LP}$.
By Theorem~\ref{thm:tour_splitting}, this implies that the integrality ratio of
\eqref{eq:cvrp_lp} is at most $\beta+2$.
We know $\frac{4}{3}\le\beta\le\frac{3}{2}$ \cite{wolsey}.

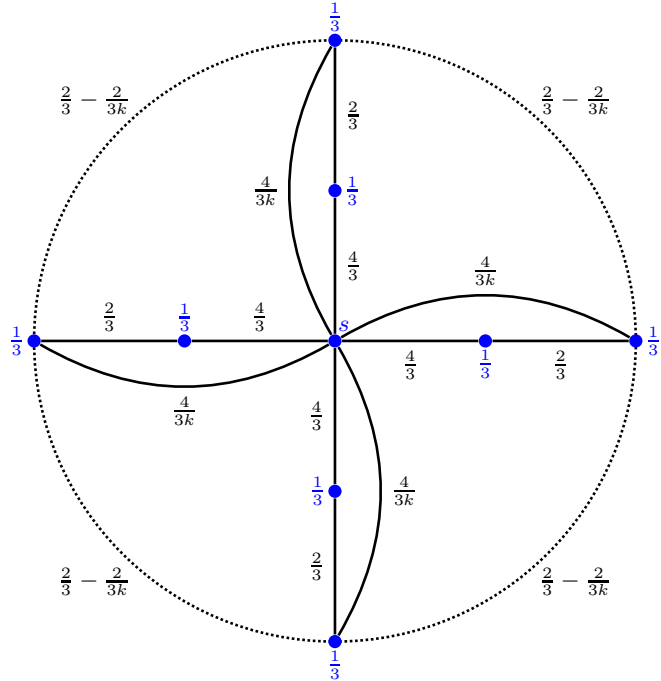
\begin{figure}[b!]
\begin{center}
\begin{tikzpicture}[scale=1]
\tikzstyle{vertex}=[blue,circle,fill,minimum size=5,inner sep=0pt]
\tikzstyle{edge}=[line width=1]
\tikzstyle{short}=[line width=1, densely dotted]

\begin{scope}[every node/.style={vertex}]
\draw[short] (4,4) circle (4);
 \node (v1) at (8,4) {};
 \node (v2) at (4,8) {};
 \node (v3) at (0,4) {};
 \node (v4) at (4,0) {};
 \node (w1) at (6,4) {};
 \node (w2) at (4,6) {};
 \node (w3) at (2,4) {};
 \node (w4) at (4,2) {};
 \node (s) at (4,4) {};
\end{scope}
\draw[edge] (s) to[bend left]  node[midway, above] {\small $\frac{4}{3k}$} (v1);
\draw[edge] (s) to[bend left]  node[midway, left] {\small $\frac{4}{3k}$} (v2);
\draw[edge] (s) to[bend left]  node[midway, below] {\small $\frac{4}{3k}$} (v3);
\draw[edge] (s) to[bend left]  node[midway, right] {\small $\frac{4}{3k}$} (v4);
\draw[edge] (s) -- (w1) node[midway, below] {\small $\frac{4}{3}$};
\draw[edge] (s) -- (w2) node[midway, right] {\small $\frac{4}{3}$};
\draw[edge] (s) -- (w3) node[midway, above] {\small $\frac{4}{3}$};
\draw[edge] (s) -- (w4) node[midway, left] {\small $\frac{4}{3}$};
\draw[edge] (w1) -- (v1) node[midway, below] {\small $\frac{2}{3}$};
\draw[edge] (w2) -- (v2) node[midway, right] {\small $\frac{2}{3}$};
\draw[edge] (w3) -- (v3) node[midway, above] {\small $\frac{2}{3}$};
\draw[edge] (w4) -- (v4) node[midway, left] {\small $\frac{2}{3}$};
\begin{scope}[blue]
\node at (4.12,4.2) {\small $s$};
\node[right] at (v1) {\small $\frac{1}{3}$};
\node[above] at (v2) {\small $\frac{1}{3}$};
\node[left] at (v3) {\small $\frac{1}{3}$};
\node[below] at (v4) {\small $\frac{1}{3}$};
\node[below] at (w1) {\small $\frac{1}{3}$};
\node[right] at (w2) {\small $\frac{1}{3}$};
\node[above] at (w3) {\small $\frac{1}{3}$};
\node[left] at (w4) {\small $\frac{1}{3}$};
\end{scope}
\node at (7.2,7.2) {\small $\frac{2}{3}-\frac{2}{3k}$};
\node at (0.8,7.2) {\small $\frac{2}{3}-\frac{2}{3k}$};
\node at (7.2,0.8) {\small $\frac{2}{3}-\frac{2}{3k}$};
\node at (0.8,0.8) {\small $\frac{2}{3}-\frac{2}{3k}$};Lemma~\ref{lem:rounding_lp_solution}.

\end{tikzpicture}
\end{center}
\caption{\label{fig:bad_example_for_integrality_gap} 
An instance of \textsc{Unit-Demand Capacitated Vehicle Routing} with $|V|=2k$ (here $k=4$). 
The capacity is 3, so $d(v)=\frac{1}{3}$ for all $v\in V$.
Every dotted edge has length 0, every solid straight edge has length 1, 
other distances are given by the metric closure of this graph.
(In particular, the solid bent edges have length 2.)
A feasible LP solution, even to the stronger linear program, is given by the numbers next to the edges; thus $\text{LP}'\le 2k+\frac{8}{3}$.
Note that $\sum_{v\in V}2d(v)c(s,v)=2k$.
For large enough $k$, we have 
$\sum_{v\in V}2d(v)c(s,v) > (1-\epsilon) \text{LP}'$. Nevertheless, only half of the customers are clustered.
}
\end{figure}

A natural question is whether our techniques lead to an improved bound on the 
integrality ratio. However, this seems to be difficult even for the
strengthened linear programming relaxation where we replace
$x(\delta(A)) \ge 2 \cdot  d(A)$ by $x(\delta(A)) \ge 2 \cdot  \lceil d(A)\rceil$.
Let $\text{LP}'$ denote the value of this stronger linear program.
A crucial fact in our proof was that difficult instances are ``clustered''.
One might hope that even all instances with $\sum_{v\in V}2d(v)c(s,v) > (1-\epsilon) \text{LP}'$
are ``clustered'', but this is not always the case as
the example in Figure~\ref{fig:bad_example_for_integrality_gap} shows.

We remark that solving the stronger linear program is NP-hard \cite{diarrassouba}, but it
can be approximated arbitrarily well because an approximate separation oracle can be
obtained by enumerating all cuts for which $x(\delta(A))$ is bounded by a constant (these can be enumerated in polynomial time by \cite{nagamochi}).

Proving stronger upper bounds on the integrality ratios remains an open question.

 \bibliographystyle{acm}
 \bibliography{references.bib}
 
\end{document}